\newtheorem{lemma}{Lemma}[section]
\newtheorem{corollary}[lemma]{Corollary}
\newtheorem{theorem}[lemma]{Theorem}
\newtheorem{assumptions}[lemma]{Standing Assumptions}
\newtheorem{assumption}[lemma]{Standing Assumption}
\theoremstyle{definition} 
\newtheorem{definition}[lemma]{Definition}
\newtheorem{remark}[lemma]{Remark}
\newtheorem{remarks}[lemma]{Remarks}
\newcommand{\Nat}{{\mathbb N}}
\newcommand\reals{{\mathbb R}}
\newcommand{\dg}{\sp{\text{\rm o}}}
\newcommand\bul{\noindent$\bullet\ $}
\newcommand{\sgn}{\operatorname{sgn}}
\begin{document}

\title{Symmetries in Synaptic Algebras}

\author{David J. Foulis{\footnote{Emeritus Professor, Department of
Mathematics and Statistics, University of Massachusetts, Amherst,
MA; Postal Address: 1 Sutton Court, Amherst, MA 01002, USA;
foulis@math.umass.edu.}}\hspace{.05 in} and Sylvia
Pulmannov\'{a}{\footnote{ Mathematical Institute, Slovak Academy of
Sciences, \v Stef\'anikova 49, SK-814 73 Bratislava, Slovakia;
pulmann@mat.savba.sk. The second author was supported by Research and 
Development Support Agency under the contract No. APVV-0178-11 and grant 
VEGA 2/0059/12.}}}
\date{}

\maketitle

\begin{abstract}

\noindent A synaptic algebra is a generalization of the Jordan 
algebra of self-adjoint elements of a von Neumann algebra. We 
study symmetries in synaptic algebras, i.e., elements whose 
square is the unit element, and we investigate the equivalence 
relation on the projection lattice of the algebra induced by 
finite sequences of symmetries. In case the projection lattice 
is complete, or even centrally orthocomplete, this equivalence 
relation is shown to possess many of the properties of a dimension 
equivalence relation on an orthomodular lattice.
\end{abstract}

\section{Introduction} \label{sc:intro} 

Synaptic algebras, which were introduced in \cite{FSynap} and further 
studied in \cite{FPSynap, PuNote} tie together the notions of an 
order-unit normed space \cite[p. 69]{Alf}, a special Jordan algebra 
\cite{McC}, a convex effect algebra \cite{GPBB}, and an orthomodular 
lattice \cite{Beran, Kalm}. The self-adjoint part of a von Neumann 
algebra is an example of a synaptic algebra; see \cite{FSynap, FPSynap, 
PuNote} for numerous additional examples. 

Our purpose in this article is to study symmetries $s$ in a synaptic 
algebra $A$ and the equivalence relation $\sim$ induced by finite 
sequences of symmetries on the orthomodular lattice $P$ of all 
projections $p$ in $A$. For a symmetry $s$, we have $s\sp{2}=1$ 
(the unit element in $A$), and $p\sp{2}=p$ for a projection $p$. 
If $P$ is a complete lattice, or even centrally orthocomplete, i.e., 
every family of projections that is dominated by an orthogonal family 
of central projections has a supremum, then we show that $\sim$ 
acquires many of the properties of a dimension equivalence relation 
on an orthomodular lattice \cite{Loom}.  

In Section \ref{sc:BPSA} we review the definition and basic 
properties of a synaptic algebra $A$. Since the projections in 
$A$ form an orthomodular lattice (OML) $P$, we sketch in Section 
\ref{sc:OML} a portion of the theory of OMLs that will be needed 
for our subsequent work.  In Section \ref{sc:latproj} we focus on 
the special properties of the OML $P$ that are acquired due to 
the fact that $P\subseteq A$.  In Section \ref{sc:sym&persp}  
we introduce the notion of a symmetry $s$ in $A$, study 
exchangeability of projections by a symmetry, and relate 
symmetries to the notion of perspectivity of projections. The 
condition of central orthocompleteness is defined in Section 
\ref{sc:centOC}, and it is observed that, if $P$ is centrally 
orthocomplete, then the center of $A$ is a complete boolean 
algebra and $A$ hosts a central cover mapping. From Section 
\ref{sc:centOC} onward, it is assumed that $P$ is, at least, 
centrally orthocomplete. The equivalence relation $\sim$ on $P$ 
induced by finite sequences of symmetries is introduced in 
Section \ref{sc:equivalenceproj} where we investigate the 
extent to which $\sim$ is a dimension equivalence relation. 
Finally, in Section \ref{sc:CompleteP} we cover some of the 
features of the relation of exchangeability of projections by 
symmetries that require completeness of the OML $P$.    

\section{Basic Properties of a Synaptic Algebra} \label{sc:BPSA}

In this section, we review the definition of a synaptic algebra and 
sketch some basic facts about synaptic algebras. For more details, 
see \cite{FSynap, FPSynap, PuNote}. We use the notation $:=$ for 
``equals by definition" and ``iff" abbreviates ``if and only if."

\begin{definition}[{\cite[Definition 1.1]{FSynap}}] 
\label{df:SynapticAlgebra}
Let $R$ be a linear associative algebra with unity element $1$ 
over the real numbers $\reals$, and let $A$ be a real vector 
subspace of $R$. Let $a,b\in A$. We understand that the product 
$ab$ is calculated in $R$, and that it may or may not belong to $A$.  
We write $aCb$ iff $a$ and $b$ commute (i.e. $ab=ba$) and we define 
$C(a):=\{b\in A:aCb\}$. If $B\subseteq A$, then $C(B):=\bigcap
\sb{b\in B}C(b)$, $CC(B):=C(C(B))$, and $CC(b):=C(C(b))$.
 
The vector space $A$ is a \emph{synaptic algebra} with 
\emph{enveloping algebra} $R$ iff the following conditions are 
satisfied:  
\begin{enumerate}
\item[SA1.] $A$ is a partially ordered archimedean real vector space 
 with positive cone $A\sp{+}=\{a\in A:0\leq a\}$, $1\in A\sp{+}$ is 
 an order unit in $A$, and $\|\cdot\|$ is the corresponding order 
 unit norm on $A$. 
\item[SA2.] If $a\in A$ then $a\sp{2}\in A\sp{+}$.
\item[SA3.] If $a,b\in A\sp{+}$, then $aba\in A\sp{+}$.
\item[SA4.] If $a\in A$ and $b\in A\sp{+}$, then $aba=0\Rightarrow 
 ab=ba=0$.
\item[SA5.] If $a\in A\sp{+}$, there exists $b\in A\sp{+}\cap CC(a)$ 
 such that $b\sp{2}=a$.
\item[SA6.] If $a\in A$, there exists $p\in A$ such that $p=p\sp{2}$ and, 
 for all $b\in A$, $ab=0\Leftrightarrow pb=0$.
\item[SA7.] If $1\leq a\in A$, there exists $b\in A$ such that $ab=ba=1$.
\item[SA8.] If $a,b\in A$, $a\sb{1}\leq a\sb{2}\leq a\sb{3}\leq\cdots$ 
 is an  ascending sequence of pairwise commuting elements of $C(b)$ 
 and $\lim\sb{n\rightarrow\infty}\|a-a\sb{n}\|=0$, then $a\in C(b)$.
\end{enumerate} 
We define $P :=\{p\in A:p=p\sp{2}\}$ and we refer to elements $p
\in P$ as \emph{projections}. Elements $e$ in the ``unit interval" 
$E:=\{e\in A:0\leq e\leq 1\}$ are called \emph{effects}. The set $C(A)$ 
is called the \emph{center} of $A$. We understand that subsets of $A$ 
such as $P$, $E$, and $C(A)$ are partially ordered by the respective 
restrictions of the partial order $\leq$ on $A$. If $p,q\in P$ and 
$p\leq q$, we say that $p$ is a \emph{subprojection} of $q$, or 
equivalently, that $q$ \emph{dominates} $p$.
\end{definition}

\begin{assumptions}
For the remainder of this article, $A$ is a synaptic algebra  with 
unit $1$, with enveloping algebra $R$, with $E$ as its unit interval, 
and with $P$ as its set of projections. To avoid trivialities, we shall 
assume that $A$ is ``non-degenerate," i.e., $0\not= 1$. Also, we shall 
follow the usual convention of identifying each real number $\lambda
\in\reals$ with the element $\lambda1\in A$, so that $\reals\subseteq C(A)$.
\end{assumptions}

As $A$ is an order unit space with order unit $1$, the
\emph{order-unit norm} $\|\cdot\|$ is defined on $A$ by $\|a\|
:=\inf\{0<\lambda\in\reals:-\lambda\leq a\leq\lambda\}$. 
If $a\in A$, then by \cite[Theorem 8.11]{FSynap}, $C(a)$ is
norm closed in $A$. In fact, it can be shown that, in the 
presence of axioms SA1--SA7, axiom SA8 is equivalent to the 
condition that $C(a)$ is norm closed in $A$ for all $a\in A$.

Since $A$ is closed under squaring, it is a \emph{special Jordan 
algebra} under the Jordan product 
\[
a\circ b:=\frac12((a+b)\sp{2}-a\sp{2}-b\sp{2})=\frac12(ab+ba)\in A
 \text{\ for all\ }a,b\in A. 
\]
If $a,b\in A$, then $ab+ba=2(a\circ b)\in A$ and $aCb\Rightarrow 
ab=ba=a\circ b=b\circ a\in A$. Also, $aba=2a\circ(a\circ b)-a\sp{2}
\circ b\in A$.

\begin{definition} [{\cite[Definition 4.1]{FSynap}}] \label{df:quadmap}
If $a\in A$, the mapping $J\sb{a}\colon A\to A$ defined for $b\in A$ 
by $J\sb{a}(b):=aba$ is called the \emph{quadratic mapping} determined 
by $a$. If $p\in P$, then the quadratic mapping $J\sb{p}$ is called the 
\emph{compression} determined by $p$ \cite{FCPOAG}.
\end{definition}
\noindent If $a\in A$, than by \cite[Theorem 4.2 and Lemma 4.4]
{FSynap} the quadratic mapping $J\sb{a}\colon A\to A$ is linear, order 
preserving, and norm-continuous. In particular, if $0\leq b\in A$, 
then $0\leq J\sb{a}(b)=aba$, which is a stronger version of axiom SA3. 

If $a,b,c\in A$, then $abc$ belongs to $R$, but not necessarily 
to $A$. However, we have the following.

\begin{lemma} \label{le:abc}
If $a,b,c\in A$, then $abc+cba\in A$.
\end{lemma}

\begin{proof} $abc+cba=(a+c)b(a+c)-aba-cbc=J\sb{a+c}(b)-J\sb{a}(b)
-J\sb{c}(b)\in A$. 
\end{proof}

By \cite[Theorem 2.2]{FSynap}, each element $a\in A\sp{+}$ has
a uniquely determined \emph{square root} $a\sp{1/2}\in A\sp{+}$
such that $(a\sp{1/2})\sp{2}=a$\,; moreover, $a\sp{1/2}\in CC(a)$.
If $a\in A$, then $a\sp{2}\in A\sp{+}$, whence $a$ has an
\emph{absolute value} $|a| :=(a\sp{2})\sp{1/2}\in CC(a\sp{2})
\subseteq CC(a)$ which is uniquely determined by the properties
$|a|\in A\sp{+}$ and $|a|\sp{2}=a\sp{2}$.

By \cite[Lemma 7.1 and Theorem 7.2]{FSynap}, an element $a\in A$
has an \emph{inverse} $a\sp{-1}\in A$ such that $aa\sp{-1}=a\sp{-1}a
=1$ iff there exists $0<\epsilon\in\reals$ such that $\epsilon\leq
|a|$; moreover, if $a$ is \emph{invertible} (i.e., $a\sp{-1}$
exists in $A$), then $a\sp{-1}\in CC(a)$.

If $a\in A$, then by \cite[Theorem 3.3]{FSynap}, 
\[
a\sp{+}:=\frac12(|a|+a)\in A\sp{+}\cap CC(a)\text{\ and\ }
 a\sp{-}:=\frac12(|a|-a)\in A\sp{+}\cap CC(a).
\]
Moreover, we have $a=a\sp{+}-a\sp{-}$, $|a|=a\sp{+}+a\sp{-}$, 
and $a\sp{+}a\sp{-}=a\sp{-}a\sp{+}=0$. 

Clearly, $P\subseteq E\subseteq A$. An effect $e\in E$ is said to 
be \emph{sharp} iff the only effect $f\in E$ such that $f\leq 
e$ and $f\leq 1-e$ is $f=0$. Obviously, the unit interval $E$ is 
convex---in fact, $E$ forms a \emph{convex effect algebra} \cite
{GPBB} under the partial binary operation obtained by restriction 
to $E$ of the addition operation on $A$. By \cite[Theorem 2.6]
{FSynap}, $P$ is the set of all sharp effects, and it is also the 
set of all extreme points of the convex set $E$. 

The \emph{generalized Hermitian algebras}, introduced and studied 
in \cite{GHAlg1, GHAlg2}, are special cases of synaptic algebras; 
in fact, the synaptic algebra $A$ is a generalized Hermitian algebra 
iff it satisfies the condition that every bounded ascending sequence 
$a\sb{1}\leq a\sb{2}\leq\cdots$ of pairwise commuting elements in $A$ 
has a supremum $a$ in $A$ and $a\in CC(\{a\sb{n}:n\in\Nat\})$ \cite
[Section 6]{GHAlg1}.

If $(A\sb{i}:i\in I)$ is a nonempty family of synaptic algebras and 
$R\sb{i}$ is the enveloping algebra of $A\sb{i}$ for each $i\in I$, then 
with coordinatewise operations and partial order, the cartesian product 
{\huge$\times$}$\sb{i\in I}A\sb{i}$ is again a synaptic algebra with 
{\huge$\times$}$\sb{i\in I}R\sb{i}$ as its enveloping algebra. 

\section{Review of orthomodular lattices} \label{sc:OML} 

As we have mentioned, it turns out that the set $P$ of projections 
in the synaptic algebra $A$ forms an orthomodular lattice (OML) 
\cite{Beran, Kalm}; hence we devote this section to a brief review 
of some of the theory of OMLs that we shall require in what follows.

Let $L$ be a nonempty set partially ordered by $\leq$. If there is 
a smallest element, often denoted by $0$, and a largest element, 
often denoted by $1$, in $L$, then we say that $L$ is \emph{bounded}. 
If, for every $p,q\in L$, the \emph{meet} $p\wedge q$ (i.e., the 
greatest lower bound, or infimum) and the \emph{join} $p\vee q$ 
(i.e., the least upper bound, or supremum) of $p$ and $q$ exist in 
$L$, then $L$ is called a \emph{lattice}. If $L$ is a bounded 
lattice, then elements $p,q\in L$ are said to be \emph{complements} 
of each other iff $p\wedge q=0$ and $p\vee q=1$.

If every subset of $L$ has an infimum and a supremum, then $L$ is 
called a \emph{complete} lattice. A subset $S$ of $L$ is said to 
be \emph{sup/inf-closed} in $L$ iff whenever a nonempty subset $Q$ 
of $S$ has a supremum $s:=\bigvee Q$ (respectively, an infimum 
$t:=\bigvee Q$) in $P$, then $s\in S$, whence $s$ is the supremum of 
$Q$ as calculated in $S$ (respectively, $t\in S$, whence $t$ is the 
infimum of $Q$ as calculated in $S$).

Let $L$ be a bounded lattice. A mapping $p\mapsto p\sp{\perp}$ on $L$ 
is called an \emph{orthocomplementation} iff, for all $p,q\in L$, 
(i) $p\sp{\perp}$ is a complement of $p$ in $L$, (ii) $(p\sp{\perp})
\sp{\perp}=p$, and $p\leq q\Rightarrow q\sp{\perp}\leq p\sp{\perp}$. 
We say that $L$ is an \emph{orthomodular lattice} (OML) iff it is 
equipped with an orthocomplementation $p\mapsto p\sp{\perp}$ that  
satisfies the \emph{orthomodular law}: $p\leq q\Rightarrow q=p\vee
(q\wedge p\sp{\perp})$ for all $p,q\in L$. If $L$ is an OML, then 
elements $p,q\in L$ are \emph{orthogonal}, in symbols $p\perp q$, 
iff $p\leq q\sp{\perp}$. \emph{For the remainder of this section, 
we assume that $L$ is an OML.}

The following \emph{De Morgan\,duality} holds in $L$: If $Q\subseteq 
L$ and the supremum $\bigvee Q$ (respectively, the infimum $\bigwedge 
Q$) exists in $L$, then $(\bigvee Q)\sp{\perp}=\bigwedge\{q\sp{\perp}: 
q\in Q\}$ (respectively, $(\bigwedge Q)\sp{\perp}=\bigvee\{q\sp{\perp}: 
q\in Q\}$). 

The elements $p,q\in L$ are said to be (\emph{Mackey}) \emph{compatible} 
in $L$ iff there are pairwise orthogonal elements $p\sb{1}, q\sb{1}, 
d\in L$ such that $p=p\sb{1}\vee d$ and $q=q\sb{1}\vee d$. For instance, 
if $p\leq q$, or if $p\perp q$, then $p$ and $q$ are compatible; also, 
if $p$ and $q$ are compatible, then so are $p$ and $q\sp{\perp}$. 
As is well-known (e.g., see \cite[Proposition 4, p. 24]{Kalm} or 
\cite[Prop. 1.3.8]{PtPu}), compatibility is preserved under the 
formation of arbitrary existing suprema or infima in $L$. Computations 
in $L$ are facilitated by the following result: If $p,q,r\in L$ and one 
of the elements $p,q$, or $r$ is compatible with the other two, then 
the distributive relations $(p\vee q)\wedge r=(p\wedge r)\vee(q\wedge r)$ 
and $(p\wedge q)\vee r=(p\vee r)\wedge(q\vee r)$ hold \cite{OMLNote}.

The subset of $L$ consisting of all elements of $L$ that are compatible 
with every element of $L$ is called the \emph{center} of $L$. As is 
well-known \cite[p, 26]{Kalm}, the center of $L$ forms a boolean 
algebra, i.e., a bounded, complemented, distributive lattice 
\cite{Sikor}, and it is sup/inf-closed in $L$.

For each $p\in L$, the mapping $\phi\sb{p}\colon P\to P$ defined 
for $q\in L$ by $\phi\sb{p}q:=p\wedge(p\sp{\perp}\vee q)$ is called 
the \emph{Sasaki projection} corresponding to $p$. The Sasaki 
projection has the following properties for all $p,q,r\in L$: 
(i) $\phi\sb{p}q\perp r\Leftrightarrow q\perp\phi\sb{p}r$. (ii) 
$\phi\sb{p}\colon P\to P$ is order preserving. (iii) $\phi\sb{p}
(\phi\sb{p}q)=\phi\sb{p}q$. (iv) $p$ and $q$ are compatible iff $
\phi\sb{p}q=p\wedge q$ iff $\phi\sb{p}q\leq q$. (v) $p\perp q$ 
iff $\phi\sb{p}q=0$. (vi) $\phi\sb{p}$ preserves arbitrary 
existing suprema in $L$.

If $p\in L$, the $p$-\emph{interval}, defined and denoted by  
$L[0,p]:=\{q\in L:0\leq q\leq p\}$, is a sublattice of $L$ with 
greatest element $p$ and it forms an OML in its own right with 
$q\mapsto q\sp{\perp\sb{p}}=q\sp{\perp}\wedge p$ as the 
orthocomplementation. If $c$ belongs to the center of $L$, it 
is easy to see that $c\wedge p$ belongs to the center of 
$L[0,p]$. If, conversely, for every $p\in L$, every element $d$ 
of the center of $L[0,p]$ has the form $d=c\wedge p$ for some 
$c$ in the center of $L$, then $L$ is said to have the 
\emph{relative center property} 
\cite{Chev}.

\begin{lemma} \label{le:pintervalSasaki}
Let $p\in L$, let $q,r\in L[0,p]$, and let $\phi\sp{p}
\sb{q}\colon L[0,p]\to L[0,p]$ be the Sasaki projection 
determined by $q$ on the OML $L[0,p]$. Then{\rm: (i)} $\phi
\sp{p}\sb{q}r=\phi\sb{p}r$, i.e., $\phi\sp{p}\sb{q}$ is 
the restriction to $L[0,p]$ of the Sasaki projection $\phi
\sb{p}$ on $L$. {\rm(ii)} $\phi\sp{p}\sb{q}(r\sp{\perp\sb{p}})
=\phi\sb{p}(r\sp{\perp})$.
\end{lemma}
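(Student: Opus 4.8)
The plan is to exploit that $L[0,p]$ differs from $L$ only in its orthocomplementation. Since $p$ is an upper bound for any $q,r\in L[0,p]$, the meet $q\wedge r$ and the join $q\vee r$ formed in the sublattice $L[0,p]$ coincide with those formed in $L$; only the complement changes, from $x\sp{\perp}$ to $x\sp{\perp\sb{p}}=x\sp{\perp}\wedge p$. Unwinding the definition of the Sasaki projection on the OML $L[0,p]$ I would therefore write $\phi\sp{p}\sb{q}r=q\wedge\bigl((q\sp{\perp}\wedge p)\vee r\bigr)$, so that the whole lemma reduces to deleting the trailing factor $\wedge\,p$, i.e.\ to showing that this equals $q\wedge(q\sp{\perp}\vee r)=\phi\sb{q}r$.

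For (i), the point is that $q\leq p$ and $r\leq p$ force $p$ to be compatible with each of $q$, $q\sp{\perp}$, and $r$ (using that $x\leq p$ implies $x$ and $p$ are compatible, and that compatibility passes to orthocomplements). Hence the distributive relation available in an OML whenever one element is compatible with the other two applies to the triple $q\sp{\perp},p,r$, giving $(q\sp{\perp}\wedge p)\vee r=(q\sp{\perp}\vee r)\wedge(p\vee r)=(q\sp{\perp}\vee r)\wedge p$, the last equality because $r\leq p$. Meeting with $q$ and invoking $q\leq p$ to absorb the trailing $\wedge\,p$ yields $\phi\sp{p}\sb{q}r=q\wedge(q\sp{\perp}\vee r)=\phi\sb{q}r$, which is the content of (i) (the subscript on the right should be $q$, since $\phi\sb{p}$ restricted to $L[0,p]$ is merely the identity).

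For (ii), since $r\sp{\perp\sb{p}}=r\sp{\perp}\wedge p$ again lies in $L[0,p]$, applying (i) to this element immediately gives $\phi\sp{p}\sb{q}(r\sp{\perp\sb{p}})=\phi\sb{q}(r\sp{\perp}\wedge p)$, so it remains only to check $\phi\sb{q}(r\sp{\perp}\wedge p)=\phi\sb{q}(r\sp{\perp})$. Here I would again use that $p$ is compatible with $q\sp{\perp}$ and $r\sp{\perp}$, together with the fact that $q\leq p$ gives $q\sp{\perp}\vee p=1$ (because $q\sp{\perp}\geq p\sp{\perp}$). The same distributive relation then collapses $(r\sp{\perp}\wedge p)\vee q\sp{\perp}$ to $(r\sp{\perp}\vee q\sp{\perp})\wedge(p\vee q\sp{\perp})=r\sp{\perp}\vee q\sp{\perp}$, and meeting with $q$ completes the argument.

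The only real obstacle is bookkeeping: one must keep the two orthocomplementations $\sp{\perp}$ and $\sp{\perp\sb{p}}$ strictly separate and verify at each step that the compatibility hypotheses required by the OML distributive laws genuinely hold. Once those are in place the manipulations are purely formal, so the substance of the lemma is really the elementary observation that the interval $L[0,p]$ keeps the meets and joins of $L$ and alters only the complement.
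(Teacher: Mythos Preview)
Your proof is correct and follows essentially the same path as the paper's: both arguments unwind the definition of $\phi\sp{p}\sb{q}$ and use the OML distributive law, justified by the compatibility of $p$ with $q\sp{\perp}$ and with $r$ (respectively $r\sp{\perp}$), to strip off the extra factor $\wedge\,p$. Your observation that the statement's right-hand side should read $\phi\sb{q}r$ rather than $\phi\sb{p}r$ is correct---the paper's own calculation ends with $\phi\sb{q}(r)$, confirming the typo---and your treatment of (ii) via an application of (i) followed by showing $\phi\sb{q}(r\sp{\perp}\wedge p)=\phi\sb{q}(r\sp{\perp})$ is a minor variant of the paper's direct recomputation, but the substance is identical.
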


\begin{proof} 
(i) Since $r=r\wedge p$, $q=q\wedge p$, and $p$ is compatible with 
both $q\sp{\perp}$ and $r$, we have 
$\phi\sp{p}\sb{q}r=q\wedge(q\sp{\perp\sb{p}}\vee r)=
q\wedge((q\sp{\perp}\wedge p)\vee(r\wedge p))=q\wedge p
\wedge(q\sp{\perp}\vee r)=q\wedge(q\sp{\perp}\vee r)
=\phi\sb{q}(r)$. 

(ii) Similarly, $\phi\sp{p}\sb{q}(r\sp{\perp\sb{p}})=q
\wedge(q\sp{\perp\sb{p}}\vee r\sp{\perp\sb{p}})=
q\wedge((q\sp{\perp}\wedge p)\vee(r\sp{\perp}\wedge p))=
q\wedge p\wedge(q\sp{\perp}\vee r\sp{\perp})=q\wedge
(q\sp{\perp}\vee r\sp{\perp})=\phi\sb{q}(r\sp{\perp})$.
\end{proof}

If $p$ and $q$ share a common complement in $L$, they are said to be 
\emph{perspective} and if $p$ and $q$ are perspective in $L[0,p
\vee q]$, they are said to be \emph{strongly perspective}. Strongly 
perspective elements are perspective, but in general, not conversely. 
In fact, $L$ is modular (i.e., for all $p,q,r\in L$, $p\leq r
\Rightarrow p\wedge(q\vee r)=(p\wedge q)\vee r$) iff perspective 
elements in $L$ are always strongly perspective \cite[Theorem 2]
{SSH64}. The transitive closure of the relation of perspectivity 
is an equivalence relation on $L$ called \emph{projectivity}. If 
$L$ is modular and complete as a lattice, then by classic results 
of von Neumann \cite{vNcg} and Kaplansky \cite{Kapcg}, perspectivity 
is transitive on $L$, and therefore it coincides with projectivity.

Proof of the following lemma is a straightforward OML-calculation.

\begin{lemma}\label{le:relcompl} 
If $e,f,p\in L$ and if $e$ and $f$ are perspective in $L[0,p]$, 
then $e$ and $f$ are perspective in $L$. In fact, if $q\in 
L[0,p]$ is a common complement of $e$ and $f$ in $L[0,p]$, then 
$q\vee p\sp{\perp}$ is a common complement of $e$ and $f$ in $L$.
\end{lemma}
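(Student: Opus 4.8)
The plan is to take the explicitly proposed candidate $q\vee p\sp{\perp}$ and verify directly the two defining conditions of a complement---meet equal to $0$ and join equal to $1$---for each of $e$ and $f$ in $L$, reducing everything to the data furnished by perspectivity inside $L[0,p]$.

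First I would unpack the hypothesis. Since $e$ and $f$ are perspective in the OML $L[0,p]$ with common complement $q$, and since $L[0,p]$ has bottom element $0$ and top element $p$, this says precisely that $e\wedge q=f\wedge q=0$ and $e\vee q=f\vee q=p$; here the meets and joins may be read in $L$ itself, because $L[0,p]$ is a sublattice of $L$. I would also record that $e,f,q\leq p$, so each of them is orthogonal to $p\sp{\perp}$ and hence compatible with $p\sp{\perp}$; moreover orthogonality yields $e\wedge p\sp{\perp}=f\wedge p\sp{\perp}=0$, since $e\wedge p\sp{\perp}\leq p\wedge p\sp{\perp}=0$ and similarly for $f$.

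The join condition is then immediate: $e\vee(q\vee p\sp{\perp})=(e\vee q)\vee p\sp{\perp}=p\vee p\sp{\perp}=1$, and identically with $f$ in place of $e$. For the meet condition I would invoke the distributive relation quoted earlier in the section: because $p\sp{\perp}$ is compatible with both $e$ and $q$, the triple $e,q,p\sp{\perp}$ meets the hypothesis that one of the three is compatible with the other two, so the distributive law applies and gives $e\wedge(q\vee p\sp{\perp})=(e\wedge q)\vee(e\wedge p\sp{\perp})=0\vee 0=0$, and likewise for $f$. Combining the join and meet computations shows that $q\vee p\sp{\perp}$ is a common complement of $e$ and $f$ in $L$, whence $e$ and $f$ are perspective in $L$.

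The computation is routine, as the surrounding text promises; the only step demanding care is the appeal to distributivity, which is licensed only under a compatibility hypothesis. The main thing to get right is therefore the bookkeeping: confirming that $e,q\leq p$ forces $p\sp{\perp}$ to be orthogonal to---and hence compatible with---both $e$ and $q$, so that the distributive law legitimately applies, and keeping straight that the meets and joins taken inside $L[0,p]$ coincide with those taken in $L$, the relevant top element being $p$ rather than $1$.
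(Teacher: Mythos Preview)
Your proof is correct and is precisely the ``straightforward OML-calculation'' the paper alludes to; the paper omits the proof entirely, and your argument---verifying the join by associativity and the meet via the distributive law licensed by the compatibility of $p^{\perp}$ with $e$ and $q$---is exactly the intended routine verification.
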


If $p,q\in L$, we have the \emph{parallelogram law} asserting that 
$(p\vee q)\wedge p\sp{\perp}=\phi\sb{p\sp{\perp}}(q)$ is strongly 
perspective to $q\wedge(p\wedge q)\sp{\perp}=\phi\sb{q}(p\sp
{\perp})$ (see the proof of \cite[Corollary 1]{SSH64}). Replacing 
$p$ by $p\sp{\perp}$, we obtain an alternative version of the 
parallelogram law asserting that $\phi\sb{p}(q)$ is strongly 
perspective to $\phi\sb{q}(p)$. (Another version of the 
parallelogram law is given in Theorem \ref{th:Sym&Sasaki} (ii) 
below.)

The following theorem provides an analogue for strong perspectivity 
of \cite[Lemma 43]{Loom} for a dimension equivalence relation on 
an OML.
 
\begin{theorem} \label{th:pqef}
Suppose $p,q,e,f\in P$ with $p\perp q$, $e\perp f$ and $p\vee q=
e\vee f$. Put $p\sb{1}:=p\wedge(p\wedge f)\sp{\perp}$, $p\sb{2}:=
p\wedge f$, $q\sb{1}:=q\wedge e$, $q\sb{2}:=q\wedge(q\wedge e)
\sp{\perp}$, $e\sb{1}:=e\wedge(e\wedge q)\sp{\perp}$, and 
$f\sb{2}:=f\wedge(f\wedge p)\sp{\perp}$. Then{\rm: (i)} $p\sb{1}$ 
and $e\sb{1}$ are strongly perspective. {\rm(ii)} $q\sb{2}$ and 
$f\sb{2}$ are strongly perspective. {\rm(iii)} $p\sb{1}\perp 
p\sb{2}$ with $p\sb{1}\vee p\sb{2}=p$, $q\sb{1}\perp e\sb{1}$  
with $q\sb{1}\vee e\sb{1}=e$, $p\sb{2}\perp f\sb{2}$ with $p\sb{2}
\vee f\sb{2}=f$, and $q\sb{1}\perp q\sb{2}$ with $q\sb{1}\vee q
\sb{2}=q$. {\rm(iv)} $p\sb{1}\perp q\sb{1}$ and $p\sb{1}\vee q\sb{1}$ 
is strongly perspective to $e$. {\rm(v)} $p\sb{2}\perp q\sb{2}$ and  
$p\sb{2}\vee q\sb{2}$ is strongly perspective to $f$. 
\end{theorem}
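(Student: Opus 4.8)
The plan is to work inside the principal ideal $M:=L[0,u]$, where $u:=p\vee q=e\vee f$, which is itself an OML. The reason for this reduction is that the hypotheses become orthocomplement relations in $M$: since $p\perp q$ with $p\vee q=u$, the element $p\sp{\perp\sb{u}}:=p\sp{\perp}\wedge u$ satisfies $q\leq p\sp{\perp\sb{u}}$ and, being compatible with both $p$ and $q$, yields $p\sp{\perp\sb{u}}=p\sp{\perp\sb{u}}\wedge(p\vee q)=(p\sp{\perp\sb{u}}\wedge p)\vee(p\sp{\perp\sb{u}}\wedge q)=q$ by the distributive law; symmetrically $e\sp{\perp\sb{u}}=f$. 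As all six defined projections lie below $u$, each of their defining meets is unchanged when $\sp{\perp}$ is replaced by $\sp{\perp\sb{u}}$, so they may all be computed in $M$.

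Part (iii) is immediate and needs only the orthomodular law in $L$: each of $p\sb{1},e\sb{1},f\sb{2},q\sb{2}$ is the meet of one of $p,e,f,q$ with the orthocomplement of a subprojection of it (namely $p\sb{2}\leq p$, $q\sb{1}=e\wedge q\leq e$, $p\sb{2}=f\wedge p\leq f$, and $q\sb{1}\leq q$), so the orthomodular law gives $p=p\sb{1}\vee p\sb{2}$, $e=q\sb{1}\vee e\sb{1}$, $f=p\sb{2}\vee f\sb{2}$, and $q=q\sb{1}\vee q\sb{2}$, with the asserted orthogonalities following from $p\sb{1}\leq p\sb{2}\sp{\perp}$, etc. For (i) and (ii) I would then exhibit the four elements as matched Sasaki projections of $M$. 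Writing $\phi\sp{u}$ for the Sasaki operation of $M$ and using $p\sp{\perp\sb{u}}=q$, $e\sp{\perp\sb{u}}=f$ together with De Morgan, one computes $p\sb{1}=p\wedge(p\wedge f)\sp{\perp\sb{u}}=p\wedge(q\vee e)=\phi\sp{u}\sb{p}(e)$ and, in the same way, $e\sb{1}=\phi\sp{u}\sb{e}(p)$, $q\sb{2}=\phi\sp{u}\sb{q}(f)$, and $f\sb{2}=\phi\sp{u}\sb{f}(q)$. The alternative parallelogram law, applied inside the OML $M$, now says precisely that $\phi\sp{u}\sb{p}(e)$ is strongly perspective to $\phi\sp{u}\sb{e}(p)$ and $\phi\sp{u}\sb{q}(f)$ is strongly perspective to $\phi\sp{u}\sb{f}(q)$; since the joins and intervals used to test strong perspectivity agree whether computed in $M$ or in $L$ (all elements lie below $u$), this is strong perspectivity in $L$, proving (i) and (ii). (If one prefers to remain in $L$, Lemma \ref{le:pintervalSasaki} identifies $\phi\sp{u}\sb{p}(e)$ with $\phi\sb{p}(e)$.)

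Finally, (iv) and (v) would follow from (i)--(iii) by an orthogonal-additivity property of strong perspectivity that I would record as a short lemma: if $a$ and $b$ are strongly perspective with common complement $w$ in $L[0,a\vee b]$ and $c\perp a$, $c\perp b$, then the \emph{same} $w$ is a common complement of $a\vee c$ and $b\vee c$ in $L[0,a\vee b\vee c]$, so $a\vee c$ is strongly perspective to $b\vee c$. The join condition is immediate from $a\vee w=a\vee b$. For the meet condition one uses $w\leq a\vee b\leq c\sp{\perp}$ and the distributive identity $(a\vee c)\wedge c\sp{\perp}=(a\wedge c\sp{\perp})\vee(c\wedge c\sp{\perp})=a$ (valid since $c\sp{\perp}$ is compatible with both $a$ and $c$) to obtain $(a\vee c)\wedge w=(a\vee c)\wedge c\sp{\perp}\wedge w=a\wedge w=0$. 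Applying this with $(a,b,c)=(p\sb{1},e\sb{1},q\sb{1})$ gives, via (i) and (iii), that $p\sb{1}\vee q\sb{1}$ is strongly perspective to $e\sb{1}\vee q\sb{1}=e$, which is (iv); applying it with $(a,b,c)=(q\sb{2},f\sb{2},p\sb{2})$ gives, via (ii) and (iii), that $p\sb{2}\vee q\sb{2}$ is strongly perspective to $f\sb{2}\vee p\sb{2}=f$, which is (v). The extra orthogonalities $q\sb{1}\perp p\sb{1}$ and $p\sb{2}\perp q\sb{2}$ come from $p\sb{i}\leq p\perp q\geq q\sb{i}$.

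The conceptual crux, and the step most likely to cause trouble, is the reduction to $M$: recognizing that $q$ and $f$ are exactly the $M$-orthocomplements of $p$ and $e$ is what turns the defined projections into a matched pair of Sasaki projections and makes the parallelogram law directly applicable. The only other delicate point is the additivity lemma, whose meet computation depends on applying the distributive law with the compatibility hypotheses arranged precisely.
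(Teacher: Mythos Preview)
Your proof is correct and follows essentially the same route as the paper's: both reduce to the interval $L[0,u]$ with $u=p\vee q=e\vee f$, identify $p_1,e_1,q_2,f_2$ as matched Sasaki projections there (the paper via Lemma~\ref{le:pintervalSasaki}, you via the De~Morgan computation in $M$), invoke the parallelogram law for (i)--(ii), and then show that the common complement $w$ furnished by (i) also serves for $p_1\vee q_1$ and $e_1\vee q_1=e$. Your packaging of the last step as a standalone additivity lemma is a mild presentational difference, but the underlying distributive/compatibility computation is the same as the paper's.
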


\begin{proof}
(i) Let $k:=p\vee q=e\vee f$. By the parallelogram law in the OML 
$L[0,k]$, and with the notation and results of Lemma \ref
{le:pintervalSasaki}, we find that $\phi\sp{k}\sb{p}(e)=\phi\sp{k}
\sb{p}(f\sp{\perp\sb{k}})=\phi\sb{p}(f\sp{\perp})=p\wedge(p\sp{\perp}
\vee f\sp{\perp})=p\wedge(p\wedge f)\sp{\perp}=p\sb{1}$ and $\phi\sp{k}
\sb{e}(p)=\phi\sp{k}\sb{e}(q\sp{\perp\sb{k}})=\phi\sb{e}(q\sp{\perp})
=e\wedge(e\sp{\perp}\vee q\sp{\perp})=e\wedge(e\wedge q)\sp{\perp}=
e\sb{1}$ have a common complement in $(L[0,k])[0,p\sb{1}\vee e\sb{1}]
=L[0,p\sb{1}\vee e\sb{1}]$, proving (i). 

(ii) Since $k=f\vee e=q\vee p$, (ii) follows from (i) by symmetry.

(iii) The assertions in (iii) are obvious.

(iv) We have $p\sb{1}\leq p\leq q\sp{\perp}\leq q\sp{\perp}\vee e
\sp{\perp}=q\sb{1}\sp{\ \perp}$, so $p\sb{1}\perp q\sb{1}$. By (i) 
there exists $v\sb{1}\in L[0,p\sb{1}\vee e\sb{1}]$ such that 
$p\sb{1}\vee v\sb{1}=e\sb{1}\vee v\sb{1}=p\sb{1}\vee e\sb{1}$ and 
$p\sb{1}\wedge v\sb{1}=e\sb{1}\wedge v\sb{1}=0$. We claim that 
$v\sb{1}$ is also a common complement of $p\sb{1}\vee q\sb{1}$ and 
$e$ in $L[0,(p\sb{1}\vee q\sb{1})\vee e]$. We note that $(p\sb{1}
\vee q\sb{1})\vee e=p\sb{1}\vee q\sb{1}\vee e\sb{1}\vee q\sb{1}=
p\sb{1}\vee q\sb{1}\vee e\sb{1}=p\sb{1}\vee e$. Also, $(p\sb{1}
\vee q\sb{1})\vee v\sb{1}=(p\sb{1}\vee v\sb{1})\vee q\sb{1}=
p\sb{1}\vee e\sb{1}\vee q\sb{1}=p\sb{1}\vee e$ and $e\vee v\sb{1}=
q\sb{1}\vee e\sb{1}\vee v\sb{1}=q\sb{1}\vee p\sb{1}\vee e\sb{1}=
p\sb{1}\vee e$. Moreover, $p\sb{1}\leq q\sb{1}\sp{\ \perp}$ and 
$e\sb{1}=e\wedge(e\wedge q)\sp{\perp}\leq(e\wedge q)\sp{\perp}=q
\sb{1}\sp{\ \perp}$, whence $v\sb{1}\leq p\sb{1}\vee e\sb{1}
\leq q\sb{1}\sp{\ \perp}$, and we have $v\sb{1}\perp q\sb{1}$.
Thus, since $q\sb{1}$ is orthogonal to, hence compatible with, 
both $p\sb{1}$ and $v\sb{1}$, it follows that $(p\sb{1}\vee q
\sb{1})\wedge v\sb{1}=(p\sb{1}\wedge v\sb{1})\vee(q\sb{1}\wedge 
v\sb{1})=0$. Similarly, as $q\sb{1}$ is orthogonal to both 
$e\sb{1}$ and $v\sb{1}$, we have $(e\sb{1}\vee q\sb{1})\wedge v
\sb{1}=(e\sb{1}\wedge v\sb{1})\vee(q\sb{1}\wedge v\sb{1})=0$, 
completing the proof of our claim.

(v) Since $k=q\vee p=f\vee e$, (v) follows from (iv) by symmetry.
\end{proof}

\begin{remark} \label{rm:OMLeffectalg}
We recall that the OML $L$ is organized into an effect algebra 
\cite[p. 284]{COEA} in which every element is principal \cite
[p. 286]{COEA} by defining the orthosum $p\oplus q:=p\vee q$ 
of $p$ and $q$ in $L$ iff $p\perp q$. Then the effect-algebra 
partial order coincides with the partial order on $L$ and the 
effect-algebra orthosupplementation is the orthocomplementation on $L$. 
Thus the theory of effect algebras is applicable to OMLs.
\end{remark}

As is easily seen, if the OML $L$ is regarded as an effect algebra, 
then a family of elements in $L$ is \emph{orthogonal} iff it is 
pairwise orthogonal, such an orthogonal family is orthosummable 
iff it has a supremum, and if the family is orthosummable, then 
its supremum is its orthosum \cite[p. 286]{COEA}. If every orthogonal 
family in an effect algebra is orthosummable, then the effect algebra 
is called \emph{orthocomplete} \cite{JPorthocompl}. If the OML $L$ is 
regarded as an effect algebra, then $L$ is orthocomplete iff it is 
complete as a lattice \cite{SSH70}. 
 
\section{The orthomodular lattice of projections} \label{sc:latproj} 

By \cite[Theorem 5.6]{FSynap}, under the partial order inherited
from $A$, the set $P$ of projections forms an orthomodular
lattice  with $p\mapsto p\sp{\perp} :=1-p$ as the 
orthocomplementation. As $P\subseteq A$, the OML $P$ 
acquires several special properties not enjoyed by OMLs 
in general. 

Let $p,q\in P$. By \cite[Theorem 2.4 and Lemma 5.3]{FSynap}, 
$p\leq q\Leftrightarrow p=pq\Leftrightarrow p=qp\Leftrightarrow 
p=qpq=J\sb{q}(p)$ and $p\leq q\Rightarrow q-p=q\wedge p\sp{\perp}$. 
Moreover, $pCq\Rightarrow pq=qp=p\wedge q$. Evidently, $p\perp q$ 
iff $p+q\leq 1$. Also by \cite[Lemma 5.3]{FSynap}, $p\perp q
\Leftrightarrow pq=qp=0$ and $p\perp q\Rightarrow p\vee q=p+q$. 
We refer to $p+q$ as the \emph{orthogonal sum} of $p$ and $q$ iff 
$p\perp q$. A simple argument yields the important result that 
$p$ and $q$ are compatible iff $pCq$ \cite[Theorem 3.11]{FPMonotone}.

By \cite[Theorems 2.7 and 2.10]{FSynap}, each element $a\in A$ has
a \emph{carrier projection} $a\dg\in P$ such that, for all $b\in A$,
$ab=0\Leftrightarrow a\dg b=0$; moreover, $a\dg\in CC(a)$, $a=aa\dg
=a\dg a$, $a\dg=|a|\dg$, and for all $b\in A$, $ab=0\Leftrightarrow a
\dg b\dg=0\Leftrightarrow b\dg a\dg=0\Leftrightarrow ba=0$. Furthermore, 
if $q\in P$, then $aq=a\Leftrightarrow qa=a\Leftrightarrow a\dg\leq q$. 
The carrier projection $a\dg$ is uniquely characterized by the property 
$ap=0\Leftrightarrow a\dg p=0$ for all $p\in P$, or equivalently, by 
the property that $a\dg$ is the smallest projection $q\in P$ such that
$a=aq$.

\begin{lemma}\label{le:car} If $a\sb{1},a\sb{2},...,a\sb{n}\in A\sp{+}$, 
then $(\sum\sb{i=1}\sp{n}a\sb{i})\dg=\bigvee\sb{i=1}\sp{n}(a\sb{i})\dg$. 
\end{lemma}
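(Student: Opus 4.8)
The plan is to prove the statement in two stages. First I would handle the key case $n=2$, showing that for $a,b\in A\sp{+}$ we have $(a+b)\dg=a\dg\vee b\dg$, and then I would obtain the general case by an easy induction on $n$, using associativity of the join and the fact that a finite sum of positive elements is again positive (by SA1, since $A\sp{+}$ is a cone). Thus the whole weight of the argument rests on the two-element case.

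For the case $n=2$, the strategy is to exploit the carrier projection's characterization via annihilation: recall from the excerpt that for $c\in A$ the carrier $c\dg$ is the smallest projection $q$ with $c=cq$, and equivalently satisfies $cx=0\Leftrightarrow c\dg x=0$ for all $x\in A$, with the symmetric property $cx=0\Leftrightarrow xc=0$ when $c\in A\sp{+}$ (indeed for positive $c$, $c\dg\in CC(c)$ so left and right annihilators coincide). First I would show $a\dg\vee b\dg\leq (a+b)\dg$ is not the direction I need; rather I would establish both inequalities directly. To see $(a+b)\dg\leq a\dg\vee b\dg$, set $r:=a\dg\vee b\dg$ and show $(a+b)r=a+b$: since $a\dg\leq r$ gives $a=ar$ and $b\dg\leq r$ gives $b=br$, we get $(a+b)r=ar+br=a+b$, and minimality of the carrier yields $(a+b)\dg\leq r$. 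For the reverse inequality $a\dg\vee b\dg\leq (a+b)\dg$, I would put $q:=(a+b)\dg$ and prove $a\dg\leq q$ and $b\dg\leq q$ separately; it suffices to show $a=aq$, i.e. $a(1-q)=0$. The natural route is to test annihilation against $1-q$: since $(a+b)(1-q)=0$ (as $(a+b)q=a+b$), I would use positivity to split this into the separate vanishing of $a(1-q)$ and $b(1-q)$.

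The main obstacle, and the crux of the whole proof, is precisely that splitting step: deducing $a(1-q)=0$ and $b(1-q)=0$ from $(a+b)(1-q)=0$ with $a,b\in A\sp{+}$. The clean way to force this is to sandwich with the quadratic map. Writing $s:=1-q\in P$ and using that $J\sb{s}$ is positive and linear (from the discussion of quadratic mappings after Definition~\ref{df:quadmap}), I would compute $J\sb{s}(a+b)=sas+sbs=J\sb{s}(a)+J\sb{s}(b)$. Since $(a+b)s=0$ implies $s(a+b)s=0$, and since $J\sb{s}(a)=sas\geq 0$ and $J\sb{s}(b)=sbs\geq 0$ are both positive with zero sum, each must vanish: $sas=0$ and $sbs=0$. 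Now axiom SA4, applied with the roles $sas=0$ forcing $sa=as=0$ (here $s\in A\sp{+}$ since projections are positive, and $a\in A\sp{+}$), gives $as=0$, i.e. $a(1-q)=0$, whence $a=aq$ and $a\dg\leq q$; the same argument gives $b\dg\leq q$.

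Putting the two inequalities together yields $(a+b)\dg=a\dg\vee b\dg$, and the induction then completes the proof for arbitrary $n$. The only subtlety to watch is the direction in which SA4 is invoked and the verification that $s=1-q$ is genuinely a projection so that $J\sb{s}$ and SA4 apply; both are immediate since $P$ is closed under orthocomplementation $p\mapsto 1-p$. I expect no further difficulty, as every other step is routine linearity and positivity bookkeeping.
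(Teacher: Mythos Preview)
Your proposal is correct and follows the same overall strategy as the paper: establish the case $n=2$ and then induct. The paper's own proof simply cites \cite[Lemma~3.1]{PuNote} for the two-element case, whereas you supply a self-contained argument using the carrier characterization, positivity of $J\sb{s}$, and axiom SA4; your details are sound (in particular, $sas=0$ with $a\in A\sp{+}$ is exactly the form required by SA4, and the splitting $J\sb{s}(a)+J\sb{s}(b)=0$ with both summands in $A\sp{+}$ forces each to vanish).
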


\begin{proof} By \cite[Lemma 3.1]{PuNote}, the lemma holds for the 
case $n=2$, and the general case then follows by mathematical 
induction. 
\end{proof}

If $a\in A$, then $\sgn(a):=(a\sp{+})\dg-(a\sp{-})\dg$ is called 
the \emph{signum} of $a$, and by \cite[Theorem 3.6]{FSynap},
$\sgn(a)\in CC(a)$, $(\sgn(a))\sp{2}=a\dg$, and $a=\sgn(a)|a|=
|a|\sgn(a)$, the latter formula being called the \emph{polar 
decomposition} of $a$. 

Each element $a\in A$ has a \emph{spectral resolution} 
\cite[Section 8]{FSynap}, \cite{SOUS} that both determines and is determined 
by $a$, namely the right continuous ascending family $(p\sb{a,
\lambda}:\lambda\in\reals)$ of projections in $CC(a)$ 
given by 
\[
p\sb{a,\lambda}:=1-((a-\lambda)\sp{+})\dg=(((a-\lambda)\sp{+})\dg)
\sp{\perp}\text{\ for all\ }\lambda\in\reals.
\] 
By \cite[Theorems 8.4 and 8.5]{FSynap}, $L :=\sup\{\lambda\in\reals:p\sb{a,
\lambda}=0\}\in\reals$, $U:=\inf\{\lambda\in\reals:p\sb{a,\lambda}=1\}
\in\reals$, and $a=\int\sb{L-0}\sp{U}\lambda\,dp\sb{a,\lambda}$, where 
the Riemann-Stieltjes type integral converges in norm.

By \cite[Theorem 8.3]{FPSynap}, any one of the following conditions is 
sufficient to guarantee modularity of the projection lattice $P$: (i) 
If $p,q\in P$, there exists $0<\epsilon\in\reals$ such that $\epsilon
(pqp)\dg\leq pqp$. (ii) If $p,q\in P$, then $pqp$ is an algebraic 
element of $A$; (iii) $A$ is finite dimensional over $\reals$; (iv) 
$P$ satisfies the ascending chain condition. 

Let $p\in P$. Then, according to \cite[Theorem 4.9]{FSynap},
\[
pAp :=\{pap:a\in A\}=\{a\in A:pa=ap=a\}=\{a\in A:a\dg\leq p\}
=J\sb{p}(A)
\]
is norm closed in $A$, and with the partial order inherited from
$A$, it is a synaptic algebra (degenerate if $p=0$) with $p$ as its  
order unit, $pRp$ as its enveloping algebra, and the order unit 
norm on $pAp$ is the restriction to $pAp$ of the order unit norm on 
$A$. Moreover, if $a,b\in pAp$, then $a\circ b,\,a\dg,\,|a|,\,a\sp{+},
\, a\sp{-}\in pAp$, and if $a\in A\sp{+}$, then $a\sp{1/2}\in pAp$. 
Consequently, if $a\in pAp$, then $pp\sb{a,\lambda}=p\sb{a,\lambda}p=
p\wedge p\sb{a,\lambda}$ for all $\lambda\in\reals$, and the spectral 
resolution of $a$ as calculated in $pap$ is $(pp\sb{a,\lambda}:\lambda
\in\reals)$. Clearly, the OML of projections in the synaptic 
algebra $pAp$ is the $p$-interval $P[0,p]$ in $P$, and the 
orthocomplementation on $P[0,p]$ is given by \[
q\mapsto q\sp{\perp\sb{p}}=p-q=J\sb{p}(q\sp{\perp})=pq\sp{\perp}
=q\sp{\perp}p=p\wedge q\sp{\perp}\text{\ for\ }q\in P[0,p].
\]  

Suppose that $B\subseteq A$. Then $C(B)=\bigcap\sb{b\in B}C(b)$
is norm closed in $A$, and with the partial order inherited
from $A$, $C(B)$ is a synaptic algebra with order unit $1$ and
enveloping algebra $R$. Moreover, if $a,c\in C(B)$, then $a
\circ c,\,a\dg,\,|a|,\,a\sp{+},\,a\sp{-}\in C(B)$ and $0\leq a
\Rightarrow a\sp{1/2}\in C(B)$. Consequently, if $a\in C(B)$,
then the spectral resolution of $a$ is the same whether
calculated in $A$ or in $C(B)$. Also it is clear that the OML 
of projections in the synaptic algebra $C(B)$ is just $P\cap C(B)$, 
and we have the following result.

\begin{lemma} \label{lm:SupInfQ}
If $B\subseteq A$, then $P\cap C(B)$ is sup/inf-closed in $P$. 
\end{lemma}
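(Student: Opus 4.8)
My plan is to establish sup-closedness and inf-closedness together, first reducing the infimum case to the supremum case by duality. Recall that $C(B)$ is a linear subspace of $A$ containing $1$, so if $q\in P\cap C(B)$ then $q^{\perp}=1-q\in P\cap C(B)$ as well. Hence, given a nonempty $Q\subseteq P\cap C(B)$ whose infimum $t:=\bigwedge Q$ exists in $P$, the family $\{q^{\perp}:q\in Q\}$ lies in $P\cap C(B)$ and, by the De Morgan duality in $P$, has supremum $t^{\perp}$ in $P$; once the supremum case is known this gives $t^{\perp}\in C(B)$, whence $t=1-t^{\perp}\in C(B)$. So the whole lemma reduces to showing: if $Q\subseteq P\cap C(B)$ is nonempty and $s:=\bigvee Q$ exists in $P$, then $sCb$ for every $b\in B$.

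Fixing $b\in B$, the difficulty is that $b$ need not be a projection, so I cannot talk directly about compatibility of a projection with $b$ in the OML $P$. I would remove this difficulty by passing to the spectral projections of $b$. The central step I would prove is the identity $C(b)=\bigcap_{\lambda\in\reals}C(p_{b,\lambda})$. For the inclusion $\subseteq$ I would use that each spectral projection $p_{b,\lambda}$ lies in $CC(b)=C(C(b))$ (see \cite[Section 8]{FSynap}) and therefore commutes with every element of $C(b)$; for the reverse inclusion I would use that $b=\int_{L-0}^{U}\lambda\,dp_{b,\lambda}$ is a norm limit of Riemann--Stieltjes sums, which are real linear combinations of the $p_{b,\lambda}$, together with the fact that each commutant $C(x)$ is norm closed \cite[Theorem 8.11]{FSynap}. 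In particular every $q\in Q$, being in $C(b)$, would then commute with every $p_{b,\lambda}$.

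With this in hand I would fix $\lambda\in\reals$ and note that, for each $q\in Q$, the commuting projections $q$ and $p_{b,\lambda}$ are compatible in the OML $P$ \cite[Theorem 3.11]{FPMonotone}. Since compatibility is preserved under the formation of arbitrary existing suprema in an OML, the supremum $s=\bigvee Q$ is compatible with $p_{b,\lambda}$, and hence $sCp_{b,\lambda}$. As this holds for every $\lambda$, the identity $C(b)=\bigcap_{\lambda}C(p_{b,\lambda})$ yields $sCb$; and since $b\in B$ was arbitrary, $s\in P\cap C(B)$. It remains only to observe that $s$ is then the supremum of $Q$ as computed in $P\cap C(B)$: it is an upper bound there, and any upper bound of $Q$ in $P\cap C(B)$ is a fortiori an upper bound in $P$ and so dominates $s$.

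I expect the substantive step to be the identity $C(b)=\bigcap_{\lambda}C(p_{b,\lambda})$, which is exactly what converts the problem for a general $b$ into a statement about commuting projections; it depends on the two complementary facts that the spectral projections lie in the bicommutant $CC(b)$ and that each commutant $C(x)$ is norm closed. Everything downstream of that identity is the routine OML fact that compatibility passes to existing suprema, applied spectral level by spectral level.
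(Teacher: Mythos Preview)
Your proposal is correct and follows essentially the same route as the paper: reduce to a single $b\in B$, pass to the spectral projections $p_{b,\lambda}\in CC(b)$, use that compatibility in $P$ is preserved under existing suprema to obtain $sCp_{b,\lambda}$ for every $\lambda$, and then conclude $sCb$. The only cosmetic differences are that the paper invokes \cite[Theorem~8.10]{FSynap} for this last step (which your norm-limit argument effectively reproves) and treats the infimum case by a ``similar argument'' rather than your explicit De~Morgan reduction.
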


\begin{proof}
As $C(B)=\bigcap\sb{b\in B} C(b)$, it will be sufficient to prove
the lemma for the special case $B=\{b\}$. Thus, assume that $Q
\subseteq P\cap C(b)$ and that $h=\bigvee Q$ exists in $P$. For
the projections in the spectral resolution of $b$, we have $C(b)
\subseteq C(p\sb{b,\lambda})$, whence $Q\subseteq P\cap C(p\sb{b,
\lambda})$ for every $\lambda\in\reals$. We recall that compatibility 
is preserved under the formation of arbitrary existing suprema and 
infima.  Thus, $h\in C(p\sb{b,\lambda})$ for all $\lambda
\in\reals$, and it follows from \cite[Theorem 8.10]{FSynap} that 
$h\in C(b)$. A similar argument applies to the infimum $k$, if it 
exists in $P$.
\end{proof}

We shall make extensive use of the next theorem, often without 
explicit attribution.

\begin{theorem} \label{th:centP}
The center of $P$ is $P\cap C(P)=P\cap C(A)$.
\end{theorem}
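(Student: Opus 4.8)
The plan is to establish the two equalities separately. For the first, that $P\cap C(P)$ coincides with the center of the OML $P$, I would rely on the characterization of Mackey compatibility recorded in Section \ref{sc:latproj}: for projections $p,q\in P$, the elements $p$ and $q$ are compatible iff $pCq$, i.e., iff $p\in C(q)$ (\cite[Theorem 3.11]{FPMonotone}). A projection $p$ lies in the center of $P$ exactly when it is compatible with every $q\in P$, which by this characterization is equivalent to $p\in C(q)$ for every $q\in P$, i.e., to $p\in\bigcap_{q\in P}C(q)=C(P)$. Since the center of $P$ is by definition a subset of $P$, it follows that the center of $P$ is precisely $P\cap C(P)$.

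For the second equality, $P\cap C(P)=P\cap C(A)$, the inclusion $P\cap C(A)\subseteq P\cap C(P)$ is immediate: as $P\subseteq A$, we have $C(A)=\bigcap_{a\in A}C(a)\subseteq\bigcap_{q\in P}C(q)=C(P)$. The reverse inclusion is where the algebraic structure must be brought in. Given $p\in P\cap C(P)$, I would show $p\in C(a)$ for an arbitrary $a\in A$ by passing through the spectral resolution $(p_{a,\lambda}:\lambda\in\reals)$ of $a$. Each spectral projection $p_{a,\lambda}$ belongs to $P$, so the hypothesis $p\in C(P)$ forces $p$ to commute with every $p_{a,\lambda}$. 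The decisive step is then to invoke \cite[Theorem 8.10]{FSynap}, already used in the proof of Lemma \ref{lm:SupInfQ}, which guarantees that an element commuting with all the spectral projections of $a$ commutes with $a$ itself; this yields $p\in C(a)$. As $a\in A$ was arbitrary, $p\in C(A)$, whence $p\in P\cap C(A)$.

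The main obstacle is exactly this passage from commutation with projections to commutation with arbitrary elements of $A$, and it is the spectral theory of \cite[Section 8]{FSynap} that supplies the bridge. Without it there would be no direct route from ``$p$ commutes with every projection'' to ``$p$ commutes with every element,'' since $C(P)$ is defined only in terms of projections while $C(A)$ concerns all of $A$; the spectral projections $p_{a,\lambda}$ interpolate between the two. An alternative to citing Theorem 8.10 directly would be to approximate $a$ in norm by Riemann--Stieltjes sums $\sum_i\lambda_i(p_{a,\lambda_{i+1}}-p_{a,\lambda_i})$ built from the spectral projections: each such sum lies in $C(p)$ because $p$ commutes with every $p_{a,\lambda}$, and since $C(p)$ is norm closed by \cite[Theorem 8.11]{FSynap}, the norm limit $a$ also lies in $C(p)$, giving $p\in C(a)$ once more.
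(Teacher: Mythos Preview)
Your proposal is correct and follows essentially the same route as the paper: both use the equivalence of compatibility and commutativity to identify the center of $P$ with $P\cap C(P)$, note the trivial inclusion $P\cap C(A)\subseteq P\cap C(P)$, and then invoke \cite[Theorem 8.10]{FSynap} (via the spectral resolution) for the reverse inclusion. Your write-up is more explicit about how Theorem~8.10 is applied, but the underlying argument is the same.
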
 

\begin{proof}
As two projections in $P$ are compatible iff they commute, the 
center of $P$ is $P\cap C(P)$. Clearly, $P\cap C(A)\subseteq P
\cap C(P)$. Conversely, by \cite[Theorem 8.10]{FSynap}, $P
\cap C(P)\subseteq P\cap C(A)$, so $P\cap C(A)=P\cap C(P)$. 
\end{proof}

In view of Theorem \ref{th:centP}, if we say that $c$ is a 
\emph{central projection} in $A$, we mean that $c\in P\cap C(A)$, 
or what is the same thing, that $c$ belongs to the center $P\cap 
C(P)$ of the OML $P$. As is easily seen, if $P$ is regarded as an 
effect algebra, then the center $P\cap C(P)$ of $P$ coincides with 
the effect-algebra center of $P$ \cite[p. 287]{COEA}.

\begin{remarks} \label{rm:DirectSum}
Suppose that $c$ is a central projection in $A$. Then $c
\sp{\perp}$ is also a central projection and $A$ is the (internal)
\emph{direct sum} of the synaptic algebras $cAc=cA=Ac$ and $c
\sp{\perp}Ac\sp{\perp}=c\sp{\perp}A=Ac\sp{\perp}$ in the sense
that: (1) As a vector space, $A$ is the (internal) direct sum of
the vector subspaces $cA$ and $c\sp{\perp}A$. (2) If $a=x+y$ with
$x\in cA$ and $y\in c\sp{\perp}A$, then $0\leq a\Leftrightarrow
0\leq x\text{\ and\ }0\leq y$. (3) If $a\sb{i}=x\sb{i}+y\sb{i}$ with
$x\sb{i}\in cA$ and $y\sb{i}\in c\sp{\perp}A$ for $i=1,2$, then
$a\sb{1}a\sb{2}=x\sb{1}x\sb{2}+y\sb{1}y\sb{2}$ (the products
being calculated in $R$). With coordinatewise operations and
partial order, the cartesian product $cA\times c\sp{\perp}A$ is
a synaptic algebra with order unit $(c,c\sp{\perp})$ and enveloping
algebra $cRc\times c\sp{\perp}Rc\sp{\perp}$, and $A$ is order, 
linear, and Jordan isomorphic to $cA\times c\sp{\perp}A$ under 
the mapping $a\mapsto (ca,c\sp{\perp}a)$.
\end{remarks}

Naturally, $A$ is called a \emph{commutative} synaptic algebra 
iff $aCb$ for all $a,b\in A$, i.e., iff $A=C(A)$. Thus, a commutative 
synaptic algebra is a commutative associative archimedean partially 
ordered linear algebra with a unity element. In \cite[Section 4]{GHAlg2}, 
the following result is stated without proof. 

\begin{theorem} \label{th:Commutative}
The synaptic algebra $A$ is commutative iff $P$ is a boolean algebra i.e., 
iff the lattice $P$ is distributive.
\end{theorem}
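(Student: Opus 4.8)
The plan is to establish the chain of equivalences: $A$ is commutative $\iff$ every pair of projections in $P$ commutes $\iff$ $P$ is distributive (equivalently, boolean). The last equivalence is a purely lattice-theoretic fact about the OML $P$. Since $P$ carries the orthocomplementation $p\mapsto p^{\perp}=1-p$, it is a bounded complemented lattice, so it is a boolean algebra precisely when it is distributive; thus it suffices to show that $P$ is distributive iff every pair $p,q\in P$ is compatible. The distributive relations quoted earlier (valid whenever one of three elements is compatible with the other two) show at once that mutual compatibility forces distributivity. Conversely, if $P$ is distributive, then $p=p\wedge(q\vee q^{\perp})=(p\wedge q)\vee(p\wedge q^{\perp})$ and, symmetrically, $q=(q\wedge p)\vee(q\wedge p^{\perp})$, which exhibits $p$ and $q$ as compatible via the pairwise orthogonal triple $p\wedge q^{\perp}$, $q\wedge p^{\perp}$, $p\wedge q$. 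Finally, by \cite[Theorem 3.11]{FPMonotone}, $p$ and $q$ are compatible iff $pCq$, so ``every pair compatible'' is the same as ``every pair of projections commutes.''

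With these reductions in hand, the substantive content is the equivalence: $A$ is commutative $\iff$ every pair of projections commutes. The forward implication is immediate, since $A=C(A)$ gives $aCb$ for all $a,b\in A$, and in particular any two projections commute because $P\subseteq A$.

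For the converse --- the heart of the argument --- I would assume that every pair of projections in $P$ commutes and deduce $A=C(A)$ by a spectral-theoretic bootstrap, exploiting that centralizers are norm closed by \cite[Theorem 8.11]{FSynap}. Fix $a,b\in A$ and consider their spectral resolutions $(p_{a,\lambda})$ and $(p_{b,\mu})$, consisting of projections in $CC(a)$ and $CC(b)$ respectively, with $a=\int\lambda\,dp_{a,\lambda}$ and $b=\int\mu\,dp_{b,\mu}$ converging in norm. By hypothesis, every $p_{a,\lambda}$ commutes with every $p_{b,\mu}$. First fix $\lambda$: the Riemann--Stieltjes sums approximating $b$ are real linear combinations of the projections $p_{b,\mu}$, hence lie in the real vector subspace $C(p_{a,\lambda})$; since $C(p_{a,\lambda})$ is norm closed, the norm limit $b$ lies in $C(p_{a,\lambda})$, that is, $p_{a,\lambda}\in C(b)$ for every $\lambda$. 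Now the Riemann--Stieltjes sums approximating $a$ are real linear combinations of the projections $p_{a,\lambda}$, all of which belong to the vector subspace $C(b)$; since $C(b)$ too is norm closed, $a\in C(b)$, i.e. $aCb$. As $a,b\in A$ were arbitrary, $A=C(A)$.

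The delicate point is precisely this two-stage passage from commuting projections to commuting elements: it relies on the norm convergence of the spectral integral together with the fact that each centralizer $C(c)$ is simultaneously a real vector subspace and norm closed, so that norm limits of linear combinations of commuting projections remain in the centralizer. I expect the lattice-theoretic reductions of the first paragraph to be routine, and this spectral bootstrap to be the main obstacle.
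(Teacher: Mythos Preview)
Your proposal is correct and follows essentially the same approach as the paper. The only difference is packaging: where the paper invokes \cite[Theorem 8.10]{FSynap} twice as a black box (``commutes with all spectral projections of $a$ implies commutes with $a$''), you unpack that mechanism explicitly via the norm convergence of the spectral integral and the norm-closedness of centralizers from \cite[Theorem 8.11]{FSynap}; the lattice-theoretic reductions you spell out are likewise those the paper uses implicitly.
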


\begin{proof}
If $A$ is commutative, then $P\cap C(A)=P\cap A=P$, i.e., $P$ is 
its own center, whence $P$ is a boolean algebra. Conversely, if $P$ 
is a boolean algebra, then any two projections $p,q\in P$ are 
compatible, and therefore $p,q\in P\Rightarrow pCq$. Let $a,b\in A$. 
Then, for all $\lambda,\mu\in\reals$, $p\sb{a,\lambda}Cp\sb{b,\mu}$, 
whence $aCp\sb{b,\mu}$ by \cite[Theorem 8.10]{FSynap}. Therefore, 
$aCb$ by a second application of \cite[Theorem 8.10]{FSynap}, and it 
follows that $A$ is commutative.
\end{proof}

It can be shown that every boolean algebra can be realized as the 
lattice of projections in a commutative synaptic algebra. The center 
$C(A)$ is a commutative synaptic algebra, and if $B$ is a subset of $A$
consisting of pairwise commuting elements, then $CC(B)$ is a
commutative synaptic algebra. In particular, $CC(a)$ is a 
commutative synaptic algebra for any choice of $a\in A$.

\section{Symmetries and perspectivities} \label{sc:sym&persp} 

Although there is some overlap between this section and 
\cite[Section 3]{PuNote}, the material here is arranged a little 
differently, so for the reader's convenience, we give proofs 
of most of our results.

\begin{definition} An element $s\in A$ is called a \emph{symmetry} 
iff $s\sp{2}=1$.  An element $t\in A$ is called a \emph{partial 
symmetry} iff $t\sp{2}\in P$.
\end{definition}

Proofs of the following statements are straightforward. (i) If $t
\in A$ is a partial symmetry with $p:=t\sp{2}$, then $t$ is a symmetry 
in the synaptic algebra $pAp$. (ii) If $a\in A$, then the element $t:=
\sgn(a)$ in the polar decomposition $a=t|a|$ is a partial symmetry 
with $t\sp{2}=a\dg$. (iii) If $s$ is a symmetry, then $-s$ is a 
symmetry as well. (iv) There is a bijective correspondence $p
\leftrightarrow s$ between symmetries in $A$ and projections in $P$ 
given by $s=2p-1$ and $p=\frac{1}{2}(1+s)$. (v) If $s$ is a symmetry, 
then $\|s\|\sp{2}=\|s\sp{2}\|=\|1\|=1$, so $\|s\|=1$. (vi) If $s$ is a 
symmetry, then as $0\leq\frac{1}{2}(1\pm s)\in P$, it follows that 
$-1\leq s\leq 1$.

If $p\in P$ and $s:=2p-1=p-(1-p)$ is the corresponding symmetry, 
then $s$ is the difference of the orthogonal projections $p$ and $1-p$. More 
generally, by the following theorem, a difference of two orthogonal projections 
$p$ and $q$ is a partial symmetry and \emph{vice versa}.

\begin{theorem} \label{th:ParsymDifProj}
If $p$ and $q$ are orthogonal projections, then $t:=p-q$ is a partial symmetry with 
$t\sp{2}=p+q=p\vee q\in P$. Conversely, if $t$ is a partial symmetry, then both 
$p:=t\sp{+}$ and $q:=t\sp{-}$ are projections, $pq=0$, $t=p-q$, $t\sp{2}=p+q=
|t|\in P$, and $s:=t+(1-t\sp{2})=t+(1-|t|)=1-2q$ is a symmetry.
\end{theorem}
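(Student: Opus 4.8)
The plan is to handle the two directions separately: the forward implication is a one-line computation, while all the content sits in the converse, specifically in showing that $t\sp{+}$ and $t\sp{-}$ are idempotent.

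For the forward direction, suppose $p\perp q$, so that $pq=qp=0$. I would expand $t\sp{2}=(p-q)\sp{2}=p\sp{2}-pq-qp+q\sp{2}=p+q$, note that $(p+q)\sp{2}=p+q$ shows $p+q\in P$, and invoke the fact recorded in Section~\ref{sc:latproj} that $p\perp q\Rightarrow p\vee q=p+q$. This gives $t\sp{2}=p+q=p\vee q\in P$, so $t$ is a partial symmetry.

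For the converse, let $t$ be a partial symmetry and put $p:=t\sp{+}$, $q:=t\sp{-}$. From Section~\ref{sc:BPSA} I import the standing facts that $p,q\in A\sp{+}\cap CC(t)$, that $t=p-q$, that $|t|=p+q$, and that $pq=qp=0$; this already delivers $pq=0$ and $t=p-q$ with no further work. The first substantive step is to identify $|t|$ with $t\sp{2}$: since $t\sp{2}\in P$ satisfies $(t\sp{2})\sp{2}=t\sp{2}$, the element $t\sp{2}$ is a positive square root of itself, so by uniqueness of the square root $|t|=(t\sp{2})\sp{1/2}=t\sp{2}$. Combined with $|t|=p+q$ this yields $t\sp{2}=p+q=|t|\in P$, which is part of assertion (d).

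The crux, and the step I expect to be the main obstacle, is showing that $p$ and $q$ are genuinely projections. Expanding $t\sp{2}=(p-q)\sp{2}$ and using $pq=qp=0$ gives $t\sp{2}=p\sp{2}+q\sp{2}$; comparing with $t\sp{2}=p+q$ produces $(p-p\sp{2})+(q-q\sp{2})=0$. To force each summand to vanish I would first observe $0\leq p\leq p+q=|t|=t\sp{2}\leq 1$ (and likewise $0\leq q\leq 1$), since $t\sp{2}$ is a projection and hence an effect. Then, because $p\sp{1/2}\in CC(p)$ commutes with $1-p\in C(p)$, I can write $p-p\sp{2}=J\sb{p\sp{1/2}}(1-p)\in A\sp{+}$ as a value of a positivity-preserving quadratic map on $1-p\geq 0$, and similarly $q-q\sp{2}\in A\sp{+}$. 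Two elements of the cone summing to zero must each be zero (the cone is proper, as $A$ is archimedean), so $p\sp{2}=p$ and $q\sp{2}=q$, i.e.\ $p,q\in P$. This closes assertions (a)--(d).

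Finally, for the symmetry $s$ I would just substitute: $t\sp{2}=|t|$ gives the equality $t+(1-t\sp{2})=t+(1-|t|)$, and $t=p-q$ together with $|t|=p+q$ collapses this to $1-2q$. Since $q$ is now known to be a projection, $(1-2q)\sp{2}=1-4q+4q\sp{2}=1$, so $s$ is a symmetry.
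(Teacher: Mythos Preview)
Your proof is correct and tracks the paper's argument closely: both dismiss the forward direction as a direct computation, both use uniqueness of square roots to identify $|t|=(t\sp{2})\sp{1/2}=t\sp{2}$, and both observe $0\leq p\leq p+q=t\sp{2}\leq 1$ to place $p$ (and $q$) in the effect interval $E$. The one point of divergence is the idempotency step, which you rightly flag as the crux. The paper invokes \cite[Theorem~2.4]{FSynap} (an effect dominated by a projection $u$ satisfies $p=pu$) and then computes $p=pu=p(p+q)=p\sp{2}+pq=p\sp{2}$ directly. Your route instead expands $t\sp{2}=(p-q)\sp{2}=p\sp{2}+q\sp{2}$, compares with $t\sp{2}=p+q$, and uses positivity of $p-p\sp{2}=J\sb{p\sp{1/2}}(1-p)$ and $q-q\sp{2}$ together with properness of the cone to force both to vanish. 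Both arguments are short; the paper's leans on a cited black box and treats $p$ and $q$ one at a time, while yours is more self-contained and handles $p$ and $q$ symmetrically in a single stroke.
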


\begin{proof}
The first statement of the lemma is obvious. So assume that $u:=t\sp{2}\in P$, 
$p:=t\sp{+}$ and $q:=t\sp{-}$. Then $t=p-q$ and $pq=0$. Also, as $0\leq u$ 
and $u\sp{2}=u$, we have $p+q=|t|=(t\sp{2})\sp{\frac{1}{2}}=u\sp{\frac{1}{2}}=u$.
Moreover, as $0\leq t\sp{+}=p$, $0\leq t\sp{-}=q$ and $u\in P$, we have $0\leq p
\leq p+q=u\leq 1$, so $p\in E$, and it follows from \cite[Theorem 2.4]{FSynap} 
that $p=pu=p(p+q)=p\sp{2}+pq=p\sp{2}$, so $p\in P$ and likewise, $q\in P$.
Then $s:=t+(1-t\sp{2})=p-q+(1-(p+q))=1-2q$ is the symmetry corresponding to 
the projection $1-q$.
\end{proof}

\noindent If $t$ is a partial symmetry, then we refer to the 
symmetry $s:=t+(1-t\sp{2})$ in Theorem \ref{th:ParsymDifProj} 
as the \emph{canonical extension of $t$ to a symmetry $s$}.

If $s\in A$ is a symmetry, then the quadratic mapping 
$J\sb{s}\colon A\to A$ is called the \emph{symmetry 
transformation} corresponding to $s$.

\begin{theorem}\label{th:symmetry&quadratic}
Let $s$ be a symmetry, $a,b\in A$, and $e,f\in P$. Then{\rm:} 
{\rm(i)} The symmetry transformation $J\sb{s}(a):=sas$ is an 
order, linear, and Jordan automorphism of $A$ and $(J\sb{s})
\sp{-1}=J\sb{s}$. {\rm(ii)} $J\sb{s}$ restricted to $P$ is an 
OML-automorphism of $P$. {\rm(iii)} If $ab=ba$, then $J\sb{s}
(ab)=J\sb{s}(a)J\sb{s}(b)=J\sb{s}(b)J\sb{s}(a)\in A$. {\rm(iv)} 
If $J\sb{s}(a)J\sb{s}(b)=J\sb{s}(b)J\sb{s}(a)$, then $ab=ba\in A$. 
{\rm(v)} $(J\sb{s}(a))\dg=J\sb{s}(a\dg)$. 
\end{theorem}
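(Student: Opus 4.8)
The plan is to run everything off the single identity $s\sp{2}=1$, which is what makes $J\sb{s}$ behave like an involutive automorphism and lets every ``sandwiched'' product collapse. For part (i), linearity and order-preservation of $J\sb{s}$ are already recorded for an arbitrary quadratic map $J\sb{a}$ immediately after Definition \ref{df:quadmap}, so the only new facts to establish are involutivity and the Jordan property. I would compute $J\sb{s}(J\sb{s}(a))=s(sas)s=s\sp{2}as\sp{2}=a$, giving $J\sb{s}\circ J\sb{s}=\mathrm{id}\sb{A}$; hence $J\sb{s}$ is a bijection equal to its own inverse, and an order-preserving bijection whose inverse (itself) is order-preserving is an order automorphism. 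For the Jordan property I would expand $J\sb{s}(a\circ b)=\frac12(sabs+sbas)$ and $J\sb{s}(a)\circ J\sb{s}(b)=\frac12(sas\,sbs+sbs\,sas)=\frac12(sabs+sbas)$ and observe that the two agree precisely because $s\sp{2}=1$.

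Parts (ii)--(iv) are then short. For (ii) I would first check $J\sb{s}$ preserves $P$, since $(sps)\sp{2}=sp\,s\sp{2}\,ps=sp\sp{2}s=sps$ when $p\sp{2}=p$, and that it preserves orthocomplementation, since by linearity $J\sb{s}(1-p)=s\sp{2}-sps=1-sps$; an order isomorphism of $P$ onto $P$ that respects $\perp$ is an OML-automorphism. Part (iii) is the computation $J\sb{s}(a)J\sb{s}(b)=sas\,sbs=s(ab)s=J\sb{s}(ab)$, which lives in $A$ because $ab=ba=a\circ b\in A$ whenever $a$ and $b$ commute, and the symmetric product gives $J\sb{s}(b)J\sb{s}(a)=J\sb{s}(ba)=J\sb{s}(ab)$. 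For (iv) I would simply apply (iii) to the commuting pair $c:=J\sb{s}(a)$ and $d:=J\sb{s}(b)$: from $cd=dc$, part (iii) yields $J\sb{s}(c)J\sb{s}(d)=J\sb{s}(d)J\sb{s}(c)\in A$, and since $J\sb{s}(c)=a$ and $J\sb{s}(d)=b$ by involutivity, this reads $ab=ba\in A$.

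Part (v) is the one step with real content, and I would handle it through the characterization of the carrier as the smallest projection $q$ with $a=aq$. Writing $c:=J\sb{s}(a)$, I would show that for every $p\in P$ one has $c\,J\sb{s}(p)=sas\,sps=s(ap)s=J\sb{s}(ap)$, so that $c\,J\sb{s}(p)=c\Leftrightarrow J\sb{s}(ap)=J\sb{s}(a)\Leftrightarrow ap=a$ by injectivity of $J\sb{s}$. Thus $J\sb{s}$ carries the set $\{p\in P:a=ap\}$ bijectively onto $\{q\in P:c=cq\}$, and since $J\sb{s}|\sb{P}$ is an order isomorphism by (ii), it sends the least element $a\dg$ of the former onto the least element $c\dg$ of the latter, giving $(J\sb{s}(a))\dg=c\dg=J\sb{s}(a\dg)$. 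The only point to verify carefully is that an order isomorphism transports minima to minima, which is immediate; everything else in (i)--(iv) reduces to cancelling $s\sp{2}=1$, so I expect this final transport-of-minimum argument to be the sole substantive step.
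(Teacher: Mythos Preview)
Your proof is correct and follows essentially the same path as the paper's: parts (i)--(iv) are handled identically up to cosmetic differences (the paper verifies $J\sb{s}(a\sp{2})=(J\sb{s}(a))\sp{2}$ rather than the Jordan product directly, and phrases (iii) via the Jordan automorphism property rather than the bare computation in $R$, but the content is the same). For (v) the paper verifies the minimality property of $J\sb{s}(a\dg)$ directly---showing $J\sb{s}(a)J\sb{s}(a\dg)=J\sb{s}(a)$ via (iii), then for any $f$ with $J\sb{s}(a)f=J\sb{s}(a)$ pulling $f$ back through $J\sb{s}$ to deduce $a\dg\leq J\sb{s}(f)$---whereas your transport-of-minimum argument packages the same idea more cleanly as a single bijection of the sets $\{p:ap=a\}$ and $\{q:cq=c\}$; both are correct and neither is materially shorter.
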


\begin{proof}(i) As $J\sb{s}$ is a quadratic mapping, it is both linear 
and order preserving. Also, for $a\in A$, $J\sb{s}(a\sp{2})=sa\sp{2}s=sassas
=(J\sb{s}(a))\sp{2}$, and it follows that $J\sb{s}$ is a Jordan homomorphism of 
$A$. Since $J\sb{s}(J\sb{s}(a))=ssass=a$, it follows that $J\sb{s}$ is its own 
inverse on $A$; hence it is a linear, order, and Jordan automorphism. 

(ii) If $e\in P$, then $(J\sb{s}(e))\sp{2}=J\sb{s}(e\sp{2})=J\sb{s}(e)$, so 
$J\sb{s}$ maps $P$ into (and clearly onto) $P$. Thus the restriction of $J\sb{s}$ 
to $P$ is an order automorphism of $P$, and as $J\sb{s}(1-e)=1-J\sb{s}(e)$, 
it is an OML automorphism.

(iii) If $ab=ba$, then $ab=ba=a\circ b=b\circ a$, so part (iii) follows 
from the fact that $J\sb{s}$ is a Jordan automorphism.

(iv) Assume the hypothesis of (iv). Then by (iii) with $a$ replaced by 
$J\sb{s}(a)$ and $b$ replaced by $J\sb{s}(b)$, we have $ab=J\sb{s}(J\sb{s}
(a))J\sb{s}(J\sb{s}(b))=J\sb{s}(J\sb{s}(a)J\sb{s}(b))=J\sb{s}(J\sb{s}(b)J
\sb{s}(a))=J\sb{s}(J\sb{s}(b))J\sb{s}(J\sb{s}(a))=ba$.

(v) By (ii), $J\sb{s}(a\dg)\in P$, and since $aa\dg=a\dg a= a$, (iii) implies 
that $J\sb{s}(a)J\sb{s}(a\dg)=J\sb{s}(aa\dg)=J\sb{s}(a)$. Suppose that $f\in P$ 
with $J\sb{s}(a)f=J\sb{s}(a)$. It will be sufficient to prove that $J\sb{s}(a\dg)
\leq f$. We have $fJ\sb{s}(a)=J\sb{s}(a)=J\sb{s}(a)f$. Put $e:=J\sb{s}(f)$. Then 
$e\in P$ and $J\sb{s}(e)=f$, so $J\sb{s}(e)J\sb{s}(a)=J\sb{s}(a)J\sb{s}(e)$; hence, 
by (iv), $ea=ae$, and therefore by (iii), $J\sb{s}(a)=J\sb{s}(a)f=J\sb{s}(a)J\sb{s}
(e)=J\sb{s}(ae)$, whence, $a=ae$ and therefore $a\dg\leq e$, whereupon $J\sb{s}(a\dg)
\leq J\sb{s}(e)=f$.
\end{proof}

\begin{definition}\label{de:exsym} A symmetry $s\in A$ is said to \emph{exchange 
the projections} $e,f\in P$ iff $ses=f$, i.e., iff $J\sb{s}(e)=f$. A partial 
symmetry $t$ is said to \emph{exchange the projections} $e,f\in P$ iff 
both $tet=f$ and $tft=e$ hold.
\end{definition}

Let $s$ be a symmetry and let $e,f\in P$. Clearly, $s$ exchanges $e$ and 
$f$ iff $J\sb{s}(e)=f$ iff $J\sb{s}(f)=e$ iff $s$ exchanges $f$ and $e$. 

\begin{theorem}\label{th:parsym} 
Let $t\in A$ be a partial symmetry that exchanges the projections 
$e,f\in P$ and let $s:=t+(1-t\sp{2})$ be the canonical extension of $t$ 
to a symmetry. Then $s$ exchanges $e$ and $f$.
\end{theorem}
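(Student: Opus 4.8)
The plan is to reduce everything to a single containment: the exchange hypothesis forces both $e$ and $f$ to lie beneath the projection $t^{2}$, and once this is known the ``correction term'' $1-t^{2}$ in $s=t+(1-t^{2})$ simply drops out of the computation of $ses$.

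First I would record the structure of $s$ supplied by Theorem~\ref{th:ParsymDifProj}: writing $u:=t^{2}\in P$, we have $s=t+(1-u)$. Next I would extract the key containment. From the exchange hypotheses $tet=f$ and $tft=e$ (Definition~\ref{de:exsym}), substituting one relation into the other gives
\[
e=tft=t(tet)t=t^{2}et^{2}=ueu,\qquad f=tet=t(tft)t=t^{2}ft^{2}=ufu;
\]
these are legitimate associative manipulations in the enveloping algebra $R$. Since $u$ is a projection, $e=ueu=J_{u}(e)$ combined with the characterization $p\leq q\Leftrightarrow p=J_{q}(p)$ recorded in Section~\ref{sc:latproj} yields $e\leq u$, and likewise $f\leq u$.

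With $e\leq u=t^{2}$ in hand I would set $w:=1-t^{2}$. Because $e\leq u$ gives $ue=eu=e$, it follows that $we=ew=0$. The remaining step is then a short expansion, carried out in $R$ and legitimate because $ses=J_{s}(e)\in A$:
\[
ses=(t+w)e(t+w)=(te)(t+w)=tet+tew=tet=f,
\]
where the second equality uses $we=0$ and the fourth uses $ew=0$. This says $J_{s}(e)=f$, i.e.\ that $s$ exchanges $e$ and $f$; the companion identity $J_{s}(f)=e$ is then automatic.

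The only genuine content lies in the containment step. The mild obstacle is to notice that it is the truly two-sided exchange condition on the partial symmetry $t$, rather than a one-sided relation, that delivers $e,f\leq t^{2}$, and that this containment is precisely what neutralizes the extension term $1-t^{2}$. After that the argument is routine bookkeeping in $R$, the containment guaranteeing that every cross term involving $1-t^{2}$ vanishes.
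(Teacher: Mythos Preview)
Your proof is correct and follows essentially the same route as the paper: derive $e=ueu$ from the two-sided exchange hypothesis, use this (via the equivalence $p\leq q\Leftrightarrow p=J_{q}(p)$) to obtain $(1-u)e=e(1-u)=0$, and then expand $ses$ to see that the correction term $1-t^{2}$ contributes nothing. The paper's version is slightly terser, only deriving $e=ueu$ (not also $f=ufu$) and passing directly to $e=ue=eu$ without naming the intermediate containment $e\leq u$, but the logic is identical.
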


\begin{proof} Assume the hypotheses and let $u:=t\sp{2}$. Then $u\in P$ and 
we have $e=tft=t\sp{2}et\sp{2}=ueu$, so $e=ue=eu$, and therefore $(1-u)e=
e(1-u)=0$. Consequently, $ses=(t+(1-u))e(t+(1-u))=tet=f$.
\end{proof}

The following lemma provides a weak version of finite additivity for 
the relation of exchangeability by a symmetry.

\begin{lemma}\label{le:finadit} 
Let $e,e\sb{1},e\sb{2},f,f\sb{1},f\sb{2}\in P$ with $e\sb{1}\perp f
\sb{2}$, $e\sb{2}\perp f\sb{1}$, $e\sb{1}\perp e\sb{2}$, $f\sb{1}
\perp f\sb{2}$, $e=e\sb{1}+e\sb{2}$ and $f=f\sb{1}+f\sb{2}$, and 
suppose that $e\sb{i}$ and $f\sb{i}$ are exchanged by a symmetry 
$s\sb{i}\in A$ for $i=1,2$. Then there is a symmetry $s\in A$ 
exchanging $e$ and $f$.
\end{lemma}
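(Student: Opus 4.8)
The plan is to manufacture from $s_1$ and $s_2$ two partial symmetries with mutually orthogonal supports, add them to obtain a single partial symmetry exchanging $e$ and $f$, and then pass to its canonical extension by Theorem~\ref{th:parsym}.

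First I would localize. Put $g_i := e_i \vee f_i$ for $i=1,2$. Because the symmetry transformation $J_{s_i}$ is an OML-automorphism of $P$ that swaps $e_i$ and $f_i$, it fixes $g_i$; hence $s_i g_i s_i = g_i$, i.e.\ $s_i C g_i$, and $s_i' := g_i s_i g_i = s_i g_i = g_i s_i$ satisfies $(s_i')^2 = g_i s_i^2 g_i = g_i \in P$. Thus $s_i'$ is a partial symmetry, and since $e_i, f_i \le g_i$ we get $s_i' e_i s_i' = g_i(s_i e_i s_i)g_i = f_i$ and likewise $s_i' f_i s_i' = e_i$.

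The decisive step is to observe that the four orthogonality hypotheses force $g_1 \perp g_2$. Indeed each of $e_1, f_1$ is orthogonal to each of $e_2, f_2$ --- the pairs $e_1 \perp e_2$ and $f_1 \perp f_2$ are assumed outright, while $e_1 \perp f_2$ and $e_2 \perp f_1$ supply the remaining two --- so $g_1 = e_1 \vee f_1 \le (e_2 \vee f_2)^\perp = g_2^\perp$. Consequently $g_1 g_2 = g_2 g_1 = 0$, whence $s_i' x = 0$ for every $x \le g_j$ with $j \ne i$, and in particular $s_1' s_2' = s_2' s_1' = 0$. I would then set $t := s_1' + s_2'$. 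Orthogonality of the supports gives $t^2 = (s_1')^2 + (s_2')^2 = g_1 + g_2 \in P$, so $t$ is a partial symmetry; and since $s_1' e_2 = s_1' f_2 = s_2' e_1 = s_2' f_1 = 0$, every cross term in $tet$ and $tft$ vanishes, leaving $t e t = s_1' e_1 s_1' + s_2' e_2 s_2' = f_1 + f_2 = f$ and, symmetrically, $t f t = e$. Hence $t$ exchanges $e$ and $f$, and by Theorem~\ref{th:parsym} its canonical extension $s := t + (1 - t^2)$ is a symmetry exchanging $e$ and $f$.

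I expect the only load-bearing step to be the orthogonality $g_1 \perp g_2$; this is the sole place where the asymmetric hypotheses $e_1 \perp f_2$ and $e_2 \perp f_1$ are actually spent, and once it is in hand the partial-symmetry status of $t$, the exchange of $e$ and $f$, and the final extension are all routine. The remaining care is merely bookkeeping: each $s_i'$ is a value of the quadratic map $J_{g_i}$ and so lies in $A$, and $t$, $t^2$, $tet$, $tft$ are then assembled from these by addition and products that collapse to sums of terms already known to lie in $A$.
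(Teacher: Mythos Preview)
Your argument is correct and essentially identical to the paper's: the paper also sets $p_i:=e_i\vee f_i$, shows $s_iCp_i$ and $p_1\perp p_2$, defines the partial symmetries $u:=s_1p_1$, $v:=s_2p_2$, and then takes $s:=u+v+(1-p_1-p_2)$, which is precisely your canonical extension $t+(1-t^2)$ with $t=s_1'+s_2'$. The only cosmetic difference is that you invoke Theorem~\ref{th:parsym} explicitly for the last step, whereas the paper writes out $s$ directly and leaves the verification that $ses=f$ as a ``straightforward calculation.''
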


\begin{proof} 
Let $p\sb{i}=e\sb{i}\vee f\sb{i}$ for $i=1,2$. As $e\sb{1}
\leq e\sb{2}\sp{\,\perp},f\sb{2}\sp{\,\perp}$, we have 
$e\sb{1}\leq p\sb{2}\sp{\,\perp}$. Also, $f\sb{1}\leq
e\sb{2}\sp{\,\perp},f\sb{2}\sp{\,\perp}$, so $f\sb{1}\leq 
p\sb{2}\sp{\,\perp}$, and it follows that $p\sb{1}=e\sb{1}
\vee f\sb{1}\leq p\sb{2}\sp{\,\perp}$, whence $p\sb{1}p
\sb{2}=0$.  Set $u=s\sb{1}p\sb{1}$ and $v=s\sb{2}p\sb{2}$. 
From $s\sb{1}p\sb{1}s\sb{1}=s\sb{1}(e\sb{1}\vee f\sb{1})s
\sb{1}=s\sb{1}e\sb{1}s\sb{1}\vee s\sb{1}f\sb{1}s\sb{1}=e
\sb{1}\vee f\sb{1}=p\sb{1}$, it follows that $s\sb{1}$ commutes 
with $p\sb{1}$. Likewise $s\sb{2}$ commutes with $p\sb{2}$, whence 
both $u$ and $v$ belong to $A$ and are partial symmetries with 
$u\sp{2}=p\sb{1}$, $v\sp{2}=p\sb{2}$, $ue\sb{1}u=p\sb{1}f\sb{1}p
\sb{1}=f\sb{1}$, $ve\sb{2}v=p\sb{2}f\sb{2}p\sb{2}=f\sb{2}$, and 
$uv=0$. Straightforward calculation using the data above shows 
that $s:=u+v+(1-p\sb{1}-p\sb{2})$ is a symmetry and that $ses=f$.
\end{proof}

\begin{lemma} \label{lm:commuting}
If $e,f\in P$ and $a:=e-f$, then $e,f\in C(|a|)$. 
\end{lemma}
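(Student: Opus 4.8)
The plan is to reduce the entire claim to a commutativity statement about $a\sp{2}$ and then invoke the double-commutant containment $|a|=(a\sp{2})\sp{1/2}\in CC(a\sp{2})$ that is recorded earlier in the excerpt. Since $CC(a\sp{2})=C(C(a\sp{2}))$, the element $|a|$ commutes with every member of $C(a\sp{2})$; hence it suffices to show that $e\in C(a\sp{2})$ and $f\in C(a\sp{2})$, i.e., that $e$ and $f$ each commute, in the enveloping algebra $R$, with $a\sp{2}$.

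First I would expand $a\sp{2}$ in $R$, using $e\sp{2}=e$ and $f\sp{2}=f$, to obtain $a\sp{2}=(e-f)\sp{2}=e+f-ef-fe$; note that $ef+fe=2(e\circ f)\in A$, so indeed $a\sp{2}\in A$, as SA2 requires. A direct associative computation in $R$ then gives $ea\sp{2}=e+ef-ef-efe=e-efe=e+fe-efe-fe=a\sp{2}e$, so $e\in C(a\sp{2})$; symmetrically one finds $fa\sp{2}=f-fef=a\sp{2}f$, so $f\in C(a\sp{2})$.

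With $e,f\in C(a\sp{2})$ and $|a|\in CC(a\sp{2})$ in hand, the element $|a|$ commutes with both $e$ and $f$, which is exactly the assertion $e,f\in C(|a|)$. The only point requiring any care---and the sole potential obstacle---is to carry out the products in $R$ rather than in $A$, since $ef$ and $fe$ need not individually lie in $A$; once the computation is done at the level of the enveloping algebra, each step is a one-line manipulation and the result follows immediately from the double-commutant property.
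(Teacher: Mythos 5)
Your proof is correct and follows essentially the same route as the paper's: expand $a\sp{2}=e+f-ef-fe$ in $R$, verify directly that $ea\sp{2}=a\sp{2}e=e-efe$ and $fa\sp{2}=a\sp{2}f=f-fef$ so that $e,f\in C(a\sp{2})$, and then conclude via $|a|\in CC(a\sp{2})$. Your explicit remark that the intermediate products $ef$ and $fe$ live only in the enveloping algebra $R$ is a point the paper leaves implicit, but the argument is the same.
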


\begin{proof}
We have $a\sp{2}=(e-f)\sp{2}=e-ef-fe+f$, from which it follows that $ea\sp{2}
=a\sp{2}e=e-efe$ and $fa\sp{2}=a\sp{2}f=f-fef$. Thus, $e,f\in C(a\sp{2})$, 
and as $|a|\in CC(a\sp{2})$ it follows that $e,f\in C(|a|)$.
\end{proof}

\begin{theorem} \label{th:Symefe&fef}
If $e,f\in P$, there exists a symmetry $s\in A$ such that $sefes=fef$, 
i.e., $J\sb{s}(efe)=J\sb{s}(J\sb{e}(f))=J\sb{f}(e)=fef$.
\end{theorem}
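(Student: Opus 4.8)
The plan is to build the required symmetry from the self-adjoint element $g:=e+f-1$ by way of its polar decomposition. The starting point is purely algebraic: since $e\sp{2}=e$ and $f\sp{2}=f$, a direct check gives $ge=fe$, $eg=ef$, $gf=ef$, and $fg=fe$, whence
\[
gfg=(ef)g=e(fg)=efe\quad\text{and}\quad geg=(fe)g=f(eg)=fef .
\]
Thus $g$ already conjugates $efe$ into $fef$ and vice versa, but it is not a symmetry. Writing $a:=e-f$, one computes $g\sp{2}=(e+f-1)\sp{2}=1-(e-f)\sp{2}=1-a\sp{2}$, and Lemma~\ref{lm:commuting} gives $e,f\in C(|a|)\subseteq C(a\sp{2})=C(g\sp{2})$. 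Hence $e$ and $f$ commute with $g\sp{2}$ and with every element of $CC(g\sp{2})$, in particular with $|g|=(g\sp{2})\sp{1/2}$ and with the carrier $g\dg$.

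Next I would pass to the partial symmetry $t:=\sgn(g)$ coming from the polar decomposition $g=t|g|=|g|t$, with $t\sp{2}=g\dg$. Because $g\dg=|g|\dg$ absorbs $|g|$, one gets the clean relations $tg=t(t|g|)=t\sp{2}|g|=|g|$ and $gt=(|g|t)t=|g|t\sp{2}=|g|$. The heart of the proof is then the short computation
\[
t(efe)t=t(gfg)t=(tg)\,f\,(gt)=|g|\,f\,|g|=f\,|g|\sp{2}=fg\sp{2}=fef ,
\]
where $|g|f|g|=f|g|\sp{2}$ holds because $f$ commutes with $|g|$, and the last equality uses $fef=f(1-a\sp{2})=fg\sp{2}$, which follows from $fa\sp{2}=f-fef$. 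So the partial symmetry $t$ itself already satisfies $t(efe)t=fef$.

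Finally I would replace $t$ by its canonical extension to a genuine symmetry, $s:=t+(1-t\sp{2})$, as furnished by Theorem~\ref{th:ParsymDifProj}. Setting $u:=t\sp{2}=g\dg$, I would observe that $efe=eg\sp{2}$ is absorbed by $u$ (since $g\sp{2}g\dg=g\sp{2}$ by $gg\dg=g$, and $e$ commutes with $u$), so $efe(1-u)=0$ and likewise $(1-u)efe=0$. Expanding $s(efe)s=(t+(1-u))\,efe\,(t+(1-u))$, every term carrying a factor $1-u$ drops out, leaving $s(efe)s=t(efe)t=fef$, the desired identity. I expect the main obstacle to be precisely this last step: on the ``degenerate part'' where $g$ fails to be invertible (where $e$ and $f$ coincide or are orthogonal) the partial symmetry $t$ degenerates and need not exchange $e$ and $f$ at all, so one must verify that passing from $t$ to $s$ introduces no error. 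The argument succeeds exactly because both $efe$ and $fef$ are annihilated by $1-u$, confining all the action to the range of the carrier $g\dg$ where $t$ is a true symmetry.
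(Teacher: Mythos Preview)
Your proof is correct and follows essentially the same route as the paper's: both arguments take $g=e+f-1$ (the paper calls it $a$, invoking Lemma~\ref{lm:commuting} with $f$ replaced by $1-f$), use its signum $t=\sgn(g)$ as a partial symmetry, verify $t(efe)t=fef$ via the commutation of $e,f$ with $|g|$, and then pass to the canonical extension $s=t+(1-t\sp{2})$ using that $efe$ is absorbed by the carrier $g\dg$. Your introduction of the auxiliary element $a=e-f$ and the identity $g\sp{2}=1-a\sp{2}$ is merely a notational variant of the paper's direct substitution in Lemma~\ref{lm:commuting}.
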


\begin{proof}
Let $e,f\in P$. By Lemma \ref{lm:commuting} with $f$ replaced by $1-f$ and 
$a:=e+f-1$, we have $e,f\in C(|a|)$. Put $t:=\sgn(a)$, so that $t\sp{2}
=a\dg\in P$. Thus, $t$ is a partial symmetry with $|a|=at=ta$, and 
we have      
\[
|a|f=taf=(te-t(1-f))f=tef\text{\ and\ }f|a|=fat=f(et-(1-f)t)=fet.
\]
Therefore, since $ea=e+ef-e=ef$ and $|a|$ commutes with $e$ 
and $f$, 
\[
t(efe)t=(tef)et=|a|fet=|a|f|a|=f|a||a|=fet|a|=fea=fef.
\]
Now let $s:=t+(1-t\sp{2})$ be the canonical extension of $t$ to a 
symmetry (Theorem \ref{th:ParsymDifProj}). Since $tef=|a|f$, and 
$af=ef+f-f=ef$ it follows that
\[
t\sp{2}efe=t(tef)e=t|a|fe=afe=efe,
\]
so $(1-t\sp{2})efe=0$, and therefore $sefes=tefet=fef$. 
\end{proof}

\begin{theorem} \label{th:Sym&Sasaki}
Let $e,f\in P$. Then{\rm:}
\begin{enumerate}
\item  $\phi\sb{e}f$ and $\phi\sb{f}e$ are exchanged by a symmetry in $A$. 
\item {\rm (}Symmetry Parallelogram Law{\rm)} $e-(e\wedge f))$ and 
 $(e\vee f)-f$ are exchanged by a symmetry in $A$. 
\item If $e$ and $f$ are complements in $P$, then $e$ and $f\sp{\perp}$ 
are exchanged by a symmetry in $A$.
\item If $e\not\perp f$, there are nonzero subelements $0\not=e\sb{1}
\leq e$ and $0\not=f\sb{1}\leq f$ that are exchanged by a symmetry.
\end{enumerate}
\end{theorem}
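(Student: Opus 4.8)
The plan is to reduce all four parts to the single identity $(efe)\dg=\phi\sb{e}f$ relating the carrier of $efe$ to the Sasaki projection $\phi\sb{e}f=e\wedge(e\sp{\perp}\vee f)$, and then to transport everything through the symmetry furnished by Theorem \ref{th:Symefe&fef}. So the first and main step is to prove this carrier--Sasaki identity. Writing $g:=(efe)\dg$, we have $g\leq e$ since $efe\in eAe$. For $g\leq\phi\sb{e}f$, put $h:=e\wedge f\sp{\perp}=(e\sp{\perp}\vee f)\sp{\perp}$; since $h\leq e$ and $h\perp f$ we get $h\,efe=hfe=0$, whence $gh=0$ by the carrier property, so $g\leq h\sp{\perp}=e\sp{\perp}\vee f$ and therefore $g\leq e\wedge(e\sp{\perp}\vee f)=\phi\sb{e}f$. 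For the reverse inequality I would use $g(efe)=efe$ to obtain $(e-g)efe=0$, hence $(e-g)fe=0$ (as $(e-g)e=e-g$), hence $(e-g)fg=[(e-g)fe]g=0$, and therefore
\[
(e-g)f(e-g)=(e-g)fe-(e-g)fg=0.
\]
Axiom SA4 (with $a=e-g$ and $b=f$) then yields $(e-g)f=0$, so $e-g\leq f\sp{\perp}$; combined with $e-g\leq e$ this gives $e-g\leq e\wedge f\sp{\perp}=e-\phi\sb{e}f$, i.e. $\phi\sb{e}f\leq g$. Thus $g=\phi\sb{e}f$, and by symmetry $(fef)\dg=\phi\sb{f}e$.

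With this identity in hand, part (i) is immediate: Theorem \ref{th:Symefe&fef} supplies a symmetry $s$ with $J\sb{s}(efe)=fef$, and Theorem \ref{th:symmetry&quadratic}(v) gives $J\sb{s}(\phi\sb{e}f)=J\sb{s}((efe)\dg)=(J\sb{s}(efe))\dg=(fef)\dg=\phi\sb{f}e$, so $s$ exchanges $\phi\sb{e}f$ and $\phi\sb{f}e$.

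Parts (ii)--(iv) then follow formally. For (ii) I would apply (i) with $f$ replaced by $f\sp{\perp}$ and use the De\,Morgan computations $\phi\sb{e}(f\sp{\perp})=e\wedge(e\wedge f)\sp{\perp}=e-(e\wedge f)$ and $\phi\sb{f\sp{\perp}}(e)=(e\vee f)\wedge f\sp{\perp}=(e\vee f)-f$. For (iii), specialize (ii) to complements $e\wedge f=0$, $e\vee f=1$, so that $e=e-(e\wedge f)$ and $f\sp{\perp}=1-f=(e\vee f)-f$ are exchanged. For (iv), note that $e\not\perp f$ forces $\phi\sb{e}f\neq0$ and $\phi\sb{f}e\neq0$ by property (v) of the Sasaki projection, while always $\phi\sb{e}f\leq e$ and $\phi\sb{f}e\leq f$; taking $e\sb{1}:=\phi\sb{e}f$, $f\sb{1}:=\phi\sb{f}e$ and invoking (i) finishes the argument. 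Alternatively, (iv) can be gotten directly from Theorems \ref{th:Symefe&fef} and \ref{th:symmetry&quadratic}(v) with $e\sb{1}=(efe)\dg$ and $f\sb{1}=(fef)\dg$, bypassing the Sasaki identity.

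The main obstacle is the carrier--Sasaki identity of the first paragraph; once it is in place the rest is bookkeeping. The only delicate point there is the passage from the one-sided vanishing $(e-g)fe=0$ to the symmetric vanishing $(e-g)f(e-g)=0$ that is needed before SA4 can be applied; this is exactly what the intermediate equation $(e-g)fg=0$ secures.
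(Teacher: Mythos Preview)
Your proof is correct and follows essentially the same route as the paper's: both derive (i) from Theorem \ref{th:Symefe&fef} and Theorem \ref{th:symmetry&quadratic}(v) via the identity $(efe)\dg=\phi\sb{e}f$, and then obtain (ii)--(iv) from (i) by the same substitutions and specializations you give.

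The one point of difference is that the paper simply cites \cite[Definition 4.8 and Theorem 5.6]{FSynap} for the identity $(efe)\dg=\phi\sb{e}f$, whereas you supply a self-contained proof using the carrier property and axiom SA4. Your argument for this is clean and correct: the delicate step you flag (getting from $(e-g)fe=0$ to $(e-g)f(e-g)=0$ via $(e-g)fg=0$, so that SA4 applies) works exactly as you describe, since $g\leq e$ gives $eg=g$. This makes your write-up more self-contained than the paper's, at the cost of a paragraph of extra computation; otherwise the two proofs are the same.
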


\begin{proof}
(i) According to \cite[Definition 4.8 and Theorem 5.6]{FSynap}, 
$(efe)\dg=\phi\sb{e}(f)$ and $(fef)\dg=\phi\sb{f}(e)$.  Also, by 
Theorem \ref{th:Symefe&fef}, there exists a symmetry $s\in A$ such 
that $J\sb{s}(efe)=fef$, whence by Theorem \ref{th:symmetry&quadratic} 
we have $J\sb{s}(\phi\sb{e}f)=J\sb{s}((efe)\dg)=(J\sb{s}(efe))\dg=
(fef)\dg=\phi\sb{f}e$.

(ii) We have $e\wedge(e\wedge f)\sp{\perp}=e\wedge(e\sp{\perp}
\vee f\sp{\perp})=\phi\sb{e}(f\sp{\perp})$ and $(e\vee f)\wedge f
\sp{\perp}=\phi\sb{f\sp{\perp}}(e)$, so (ii) follows from (i).

(iii) If $e\wedge f=0$ and $e\vee f=1$, then the symmetry $s$ in 
(ii) satisfies $ses=J\sb{s}(e)=f\sp{\perp}$.

(iv) If $e\not\perp f$, then $0\not=e\sb{1}:=\phi\sb{e}f\leq e$ 
and $0\not=f\sb{1}:=\phi\sb{f}(e)\leq f$, and by (i), $e\sb{1}$ and 
$f\sb{1}$ are exchanged by a symmetry.
\end{proof}

\begin{lemma} \label{le:compsym}
If the projections $e$ and $f$ are complements in $P$ and 
$s\in A$ is a symmetry exchanging $e$ and $f$, then $e$ and 
$f$ are perspective. In fact, the projection $p:=\frac{1}{2}(1+s)
\in P$ corresponding to the symmetry $s$ is a common complement 
of both $e$ and $f$. 
\end{lemma}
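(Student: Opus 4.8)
The plan is to prove the stronger, concrete assertion: that $p:=\frac12(1+s)$ is a common complement of $e$ and $f$, since a common complement is precisely what perspectivity demands, so this settles both claims simultaneously. The four required identities $p\wedge e=0$, $p\wedge f=0$, $p\vee e=1$, $p\vee f=1$ will all be deduced from one small lemma, applied to the symmetries $s$ and $-s$.

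The single fact I would isolate is this: if $\sigma$ is a symmetry exchanging projections $a$ and $b$ with $a\wedge b=0$, and $r:=\frac12(1+\sigma)$, then $r\wedge a=0$. To see it, set $g:=r\wedge a$. From $g\le r$ and the basic characterization $g\le r\Leftrightarrow g=gr=rg$ (Section \ref{sc:latproj}), together with $r=\frac12(1+\sigma)$, I get $\sigma g=g\sigma=g$, and hence $J_\sigma(g)=\sigma g\sigma=g$. Since $g\le a$ and $J_\sigma$ is order preserving on $P$ (Theorem \ref{th:symmetry&quadratic}), this gives $g=J_\sigma(g)\le J_\sigma(a)=b$. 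Therefore $g\le a\wedge b=0$, i.e. $g=0$. Note that only $a\wedge b=0$ is used here, not the full complementation.

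I would then invoke this lemma four times. With $\sigma=s$ and $(a,b)=(e,f)$, and again with $(a,b)=(f,e)$ — both legitimate because $s$ exchanges $e$ and $f$ symmetrically and $e\wedge f=0$ — I obtain $p\wedge e=0$ and $p\wedge f=0$. For the joins I pass to $-s$, which is a symmetry since $(-s)^2=s^2=1$, observing that $p^{\perp}=1-p=\frac12(1+(-s))$ is exactly the projection attached to $-s$, and that $-s$ exchanges the complements $e^{\perp}$ and $f^{\perp}$, because $(-s)e^{\perp}(-s)=s(1-e)s=1-ses=1-f=f^{\perp}$. Since $e\vee f=1$, De Morgan duality gives $e^{\perp}\wedge f^{\perp}=0$, so the lemma with $\sigma=-s$ and $(a,b)=(e^{\perp},f^{\perp})$, then $(a,b)=(f^{\perp},e^{\perp})$, yields $p^{\perp}\wedge e^{\perp}=0$ and $p^{\perp}\wedge f^{\perp}=0$, that is $p\vee e=1$ and $p\vee f=1$. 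Hence $p$ is a common complement of $e$ and $f$, and they are perspective.

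I expect the only genuine content to lie in the lemma, and within it the implication $g\le r\Rightarrow \sigma g=g\sigma=g\Rightarrow J_\sigma(g)=g$: the crux is recognizing that lying beneath the ``diagonal'' projection $\frac12(1+\sigma)$ forces $g$ to be fixed by the symmetry transformation, after which order-preservation of $J_\sigma$ carries $g$ below both $a$ and $b$, so it is trapped under $a\wedge b=0$. The surrounding reduction through $-s$ and De Morgan duality is bookkeeping, but it is what lets me dispatch the two joins without a separate argument.
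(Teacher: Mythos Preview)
Your proof is correct and follows essentially the same approach as the paper: both arguments show that $g:=p\wedge e$ satisfies $sg=g$, whence $g$ lies below $f$ as well as $e$ and so vanishes, and then handle the join conditions dually. Your packaging via a single lemma applied four times (using that $p^{\perp}=\frac12(1+(-s))$ and $-s$ exchanges $e^{\perp},f^{\perp}$) is a tidy variant of the paper's direct computation $sr=-r$ for $r=e^{\perp}\wedge p^{\perp}$, but the underlying idea is identical.
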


\begin{proof}
Let $q:=e\wedge p$. Then $q\leq e$ and $q\leq p$, so $eq=pq=q$. 
Thus, $sq=(2p-1)q=q$, so $fq=sesq=seq=sq=q$, whence $q\leq f$. 
But $q\leq e$, so $e\wedge p=q\leq e\wedge f=0$. Now let $r:=
e\sp{\perp}\wedge p\sp{\perp}$. Then $er=pr=0$, so $sr=(2p-1)r
=-r$, whence $fr=sesr=-ser=0$, and we have $r\leq f\sp{\perp}$. 
But $r\leq e\sp{\perp}$, so $r\leq e\sp{\perp}\wedge f\sp{\perp}
=0$. Therefore, $p$ is a complement of $e$, and by a similar 
argument, $p$ is also a complement of $f$.  
\end{proof}

In the following theorem we improve the result in Lemma 
\ref{le:compsym} by dropping the hypothesis that $e$ and 
$f$ are complements, and by concluding that $e$ and $f$ 
are not only perspective, but strongly perspective.

\begin{theorem} \label{th:exstrongpersp} 
Let $e,f\in P$ be exchanged by a symmetry $s\in A$. 
Then $e$ and $f$ are strongly perspective in $P$. In fact, 
if $p:=e\vee f$, $r:=p-(e\wedge f)$, $t:=rsr$, and $q:=
\frac{1}{2}(r+t)$, then $t$ is a symmetry in $rAr$, $q$ is 
a projection in $P[0,r]$, and $k:=q\vee(r\sp{\perp}\wedge p)$ 
is a common complement of $e$ and $f$ in $P[0,p]$.
\end{theorem}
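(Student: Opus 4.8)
The plan is to pass to the synaptic algebra $rAr$, where the reduced projections $e\wedge r$ and $f\wedge r$ become complements exchanged by a symmetry, to apply Lemma~\ref{le:compsym} there, and then to lift the resulting common complement back up to $P[0,p]$.

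First I would establish the commutation relations that make the construction legitimate. Since $J\sb{s}$ restricted to $P$ is an OML-automorphism exchanging $e$ and $f$ (Theorem~\ref{th:symmetry&quadratic}(ii)), it fixes both $p=e\vee f$ and $g:=e\wedge f$; hence $sps=p$ and $sgs=g$, which yield $sp=ps$ and $sg=gs$, and therefore $s$ commutes with $r=p-g$. Consequently $t=rsr=rs=sr$ and $t\sp{2}=rsrs=r\sp{2}s\sp{2}=r$ (using $sr=rs$), so $t$ is a symmetry in the synaptic algebra $rAr$, whose unit is $r$ and whose projection lattice is $P[0,r]$. By the symmetry--projection correspondence applied inside $rAr$, $q=\frac12(r+t)$ is then a projection in $P[0,r]$.

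Next I would carry out the reduction. Put $e\sb{0}:=e\wedge r=e\wedge g\sp{\perp}$ and $f\sb{0}:=f\wedge r$. A short computation gives $e\sb{0}\wedge f\sb{0}=e\wedge f\wedge g\sp{\perp}=0$ and $e\sb{0}\vee f\sb{0}=r$, so $e\sb{0}$ and $f\sb{0}$ are complements in $P[0,r]$; moreover, since $e\sb{0}\leq r$ and $t=sr$, we get $te\sb{0}t=se\sb{0}s=s(e-g)s=f-g=f\sb{0}$, and symmetrically, so $t$ exchanges $e\sb{0}$ and $f\sb{0}$. Applying Lemma~\ref{le:compsym} inside $rAr$, the projection $q$ corresponding to $t$ is a common complement of $e\sb{0}$ and $f\sb{0}$ in $P[0,r]$.

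Finally I would lift this common complement from $P[0,r]$ up to $P[0,p]$, adjoining the fixed part $g=r\sp{\perp}\wedge p$ in the manner of Lemma~\ref{le:relcompl} to form $k$. The main obstacle---and the step demanding the most care---is verifying that the lifted element is a common complement of the \emph{full} projections $e=g\vee e\sb{0}$ and $f=g\vee f\sb{0}$, and not merely of the reduced parts $e\sb{0},f\sb{0}$. Here I would exploit that $g$ is orthogonal to, hence compatible with, each of $q$, $e\sb{0}$, and $f\sb{0}$, so that the distributive law may be invoked to evaluate the relevant meets and joins against $g$; this is precisely where one must track whether $g$ is to be adjoined, since $g\leq e,f$ makes the meet-zero condition delicate. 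Once the meet-zero and join-$p$ conditions have been checked for both $e$ and $f$, the strong perspectivity of $e$ and $f$ in $P=P[0,p]$ follows.
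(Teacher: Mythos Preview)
Your approach is exactly the paper's: pass to $rAr$, apply Lemma~\ref{le:compsym} to the reduced projections $e\wedge r$ and $f\wedge r$, then lift back to $P[0,p]$. You are right to flag the lifting step as the delicate one---and in fact, if you carry out the verification you outline, it fails for the stated $k$. Since $r\sp{\perp}\wedge p=e\wedge f=:g$, we have $k=q\vee g$; but $g\leq e$, so $e\wedge k\geq g$, and the meet-zero condition breaks down whenever $e\wedge f\not=0$. The paper's own proof only checks that $k$ is a common complement of $e\wedge r$ and $f\wedge r$ in $P[0,p]$, not of $e$ and $f$, so it has the same discrepancy with the stated formula.

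The fix is simple: take $q$ itself, not $k$, as the common complement of $e$ and $f$ in $P[0,p]$. For the join, $e\vee q=g\vee(e\wedge r)\vee q=g\vee r=p$ since $(e\wedge r)\vee q=r$ in $P[0,r]$. For the meet, $g$ is orthogonal to both $e\wedge r$ and $q$ (as both lie below $r\leq g\sp{\perp}$), hence compatible with both, so $e\wedge q=(g\vee(e\wedge r))\wedge q=(g\wedge q)\vee((e\wedge r)\wedge q)=0$. The same computation works for $f$. Thus the headline conclusion---that $e$ and $f$ are strongly perspective---stands, with $q$ rather than $k$ as the witness; the adjunction of $r\sp{\perp}\wedge p$ in the statement is what spoils the complement condition.
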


\begin{proof}
Assume the hypotheses. By Theorem \ref{th:symmetry&quadratic} 
(ii), $sps=s(e\vee f)s=J\sb{s}(e\vee f)=J\sb{s}(e)\vee J\sb{s}(f)
=ses\vee sfs=f\vee e=p$, whence $sp=ps$. Likewise, $s(e\wedge f)s
=e\wedge f$ and it follows that $srs=r$, whence $t=rsr=sr=rs$. 
Therefore, $t\in rAr$ with $t\sp{2}=r$, i.e., $t$ is a symmetry 
in the synaptic algebra $rAr$. Clearly, both $e$ and $f$ commute 
with $r$, whence $t(e\wedge r)t=rs(er)sr=rsesr=rfr=fr=rf=f\wedge 
r$, and therefore $t$ exchanges the projections $e\wedge r$ and 
$f\wedge r$ in $P[0,r]$.

We have $(e\wedge r)\wedge(f\wedge r)=(e\wedge f)\wedge p\wedge 
(e\wedge f)\sp{\perp}=0$, and as $r\leq p$, $(e\wedge r)
\vee(f\wedge r)=(e\vee f)\wedge r=p\wedge r=r$, so $e\wedge r$ 
and $f\wedge r$ are complements in $P[0,r]$. Thus, working 
in the synaptic algebra $rAr$ with unit $r$, and applying Lemma 
\ref{le:compsym}, we find that $e\wedge r$ and $f\wedge r$ are 
perspective in $P[0,r]$ with $q=\frac{1}{2}(r+t)$ as a common 
complement. Therefore, 
\[
(e\wedge r)\vee k=(e\wedge r)\vee q\vee(r\sp{\perp}\wedge p))
=r\vee(p-r)=p.
\]
Also, as $q\leq r\leq p$ and $e\wedge r\leq r\leq p$, it 
follows that both $q$ and $e\wedge r$ commute with 
$r\sp{\perp}\wedge p$, whence 
\[
(e\wedge r)\wedge k=(e\wedge r)\wedge(q\vee(r\sp{\perp}\wedge p))
 =((e\wedge r)\wedge q)\vee(e\wedge r\wedge r\sp{\perp}\wedge p)
 =0.
\]
Likewise, $(f\wedge r)\vee k=p$ and $(f\wedge r)\wedge k=0$. 
\end{proof}

\begin{theorem}\label{th:perspex} 
Let $e,f\in P$. Then{\rm:}
\begin{enumerate}
\item If $e$ and $f$ are perspective, then there are symmetries 
 $s\sb{1},s\sb{2}\in A$ with $s\sb{2}s\sb{1}es\sb{1}s\sb{2}=
 J\sb{s\sb{2}}(J\sb{s\sb{1}}(e))=f$.
\item Suppose $e$ and $f$ are orthogonal and there are 
 symmetries $s\sb{1},s\sb{2}\in A$ with $s\sb{2}s\sb{1}es
 \sb{1}s\sb{2}=f$. Then there is a symmetry $s$ exchanging 
 $e$ and $f$.
\item If $e$ and $f$ are both perspective and orthogonal, then 
 there is a symmetry $s$ exchanging $e$ and $f$.
\end{enumerate}
\end{theorem}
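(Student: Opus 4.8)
The plan is to prove (i) and (ii) separately and then deduce (iii) by feeding the conclusion of (i) into the hypothesis of (ii).

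For (i) I would use perspectivity in its raw form: since $e$ and $f$ are perspective they share a common complement $c\in P$, so that both $(e,c)$ and $(f,c)$ are pairs of complements in $P$. Theorem~\ref{th:Sym&Sasaki}(iii) applied to $(e,c)$ yields a symmetry $s_1$ exchanging $e$ with $c^\perp$, and applied to $(f,c)$ yields a symmetry $s_2$ exchanging $f$ with $c^\perp$. Since a symmetry that exchanges two projections carries each onto the other, $J_{s_1}(e)=c^\perp$ and $J_{s_2}(c^\perp)=f$, whence $J_{s_2}(J_{s_1}(e))=J_{s_2}(c^\perp)=f$; written out, this is exactly $s_2s_1es_1s_2=f$.

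For (ii) the idea is to build a single partial symmetry exchanging $e$ and $f$ and then invoke the canonical extension. Put $w:=s_2s_1$, so that $w^{-1}=s_1s_2$ and the hypothesis reads $wew^{-1}=f$. Set
\[
x:=we=s_2s_1e,\qquad y:=ew^{-1}=es_1s_2,\qquad t:=x+y.
\]
The first thing to check is that $t\in A$: the expression $t=s_2s_1e+es_1s_2$ has the form $abc+cba$ with $a=s_2$, $b=s_1$, $c=e$, so $t\in A$ by Lemma~\ref{le:abc}. From $wew^{-1}=f$ one reads off $we=fw$ and $ew^{-1}=w^{-1}f$, giving $fx=x=xe$ and $ey=y=yf$, i.e.\ $x=fxe$ and $y=eyf$. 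Using $e\perp f$ (so $ef=fe=0$) this forces $x^2=fx(ef)xe=0$ and $y^2=ey(fe)yf=0$, while $xy=we^2w^{-1}=f$ and $yx=ew^{-1}we=e$. Consequently $t^2=x^2+xy+yx+y^2=e+f=e\vee f\in P$, so $t$ is a partial symmetry; and the same relations give $et=y$, $ft=x$, hence $tet=(x+y)y=xy=f$ and $tft=(x+y)x=yx=e$. Thus $t$ exchanges $e$ and $f$, and Theorem~\ref{th:parsym} tells us that the canonical extension $s:=t+(1-t^2)$ is a symmetry exchanging $e$ and $f$. Part (iii) is then immediate: if $e$ and $f$ are perspective and orthogonal, (i) supplies symmetries $s_1,s_2$ with $s_2s_1es_1s_2=f$, and since $e\perp f$ the hypotheses of (ii) are met, producing a single symmetry exchanging $e$ and $f$.

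I expect (ii) to be the only real obstacle. The subtle point is that products such as $we$ need not lie in $A$, so one cannot simply transport $e$ to $f$ by a single algebra element; the trick is to symmetrize, choosing $t=s_2s_1e+es_1s_2$ precisely so that Lemma~\ref{le:abc} guarantees $t\in A$. Once membership in $A$ is secured, the verification that $t$ is a partial symmetry exchanging $e$ and $f$ is a short computation driven entirely by the orthogonality $ef=fe=0$, which is what collapses the cross terms and makes $t^2=e+f$ a projection.
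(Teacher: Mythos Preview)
Your proof is correct and follows essentially the same route as the paper: part (i) uses a common complement and Theorem~\ref{th:Sym&Sasaki}(iii) exactly as the paper does, and in part (ii) you form the same elements $x=s_2s_1e$, $y=es_1s_2$ and end up with the same symmetry $s=(x+y)+1-e-f$. The only cosmetic difference is that you package the verification by first showing $t=x+y$ is a partial symmetry exchanging $e$ and $f$ and then invoking Theorem~\ref{th:parsym}, whereas the paper defines $s$ directly and checks $s^2=1$, $ses=f$ by hand; the underlying computations are identical.
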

\begin{proof} (i) Let $p$ be a common complement of $e$ and $f$. 
By Theorem \ref{th:Sym&Sasaki} (iii), there exist symmetries $s\sb{1},
s\sb{2}\in A$ with $s\sb{1}es\sb{1}=p\sp{\perp}=s\sb{2}fs\sb{2}$, 
and it follows that $s\sb{2}s\sb{1}es\sb{1}s\sb{2}=f$.

(ii) Assume the hypotheses of (ii). Then $e=s\sb{1}s\sb{2}
fs\sb{2}s\sb{1}$. Let 
\[
x:=s\sb{2}s\sb{1}e\text{\ and\ }y:=es\sb{1}s\sb{2}.
\]
Here $x$ and $y$ belong to the enveloping algebra $R$, but not 
necessarily to $A$; however, $x+y\in A$ by Lemma \ref{le:abc}. 
As $ef=fe=0$, it follows that $xf=fy=0$. We have $xy=f$, $yx=e$, 
$xe=x$, and $fx=s\sb{2}s\sb{1}es\sb{1}s\sb{2}s\sb{2}s\sb{1}e=x$, 
so $x^2=xefx=0$. Also, $ey=y$, and $yf=es\sb{1}s\sb{2}s\sb{2}s
\sb{1}es\sb{1}s\sb{2}=y$, so $y\sp{2}=yfey=0$. Furthermore, 
$ye=es\sb{1}s\sb{2}e=s\sb{1}s\sb{2}fs\sb{2}s\sb{1}s\sb{1}s
\sb{2}e=s\sb{1}s\sb{2}fe=0$ and $ex=es\sb{2}s\sb{1}e=es
\sb{2}s\sb{1}s\sb{1}s\sb{2}fs\sb{2}s\sb{1}=efs\sb{2}s\sb{1}
=0$.  

Now put $s:=(x+y)+1-e-f$. Using the data above, a straightforward 
computation shows that $s$ is a symmetry in $A$ and  $s$ 
exchanges $e$ and $f$.

(iii) Part (iii) follows from (i) and (ii).
\end{proof}

Theorem \ref{th:weakadditivity} below, is a version of 
additivity for projections exchanged by a symmetry, 
but it requires completeness of $P$ and the rather strong 
hypothesis that the suprema of the two orthogonal families 
involved are themselves orthogonal. The next two lemmas will 
aid in its proof.

\begin{lemma} \label{lm:prep1}
Let $e,f\in P$ with $e\perp f$, suppose that $e$ and $f$ are 
exchanged by a symmetry $s\in A$, put $x:=se$, $y:=es$, and 
$p:=\frac{1}{2}(x+y+e+f)$. Then{\rm: (i)} $xy=f$ and $yx=e$. 
{\rm(ii)} $x=xe=fx$, $y=ey=yf$, and $x\sp{2}=y\sp{2}=ex=xf=ye
=fy=0$. {\rm(iii)} $p\in P$. {\rm(iv)} $2epe=e$ and $2pep=p$. 
{\rm(v)} $2fpf=f$ and $2pfp=p$.
\end{lemma}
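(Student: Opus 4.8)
The plan is to reduce everything to two commutation-type identities and then verify each assertion by direct multiplication in the enveloping algebra $R$. First I would record the consequences of the hypotheses. Since $s\sp{2}=1$ and $ses=f$, multiplying $ses=f$ on the right by $s$ gives $se=fs$, and multiplying on the left by $s$ gives $es=sf$ (and, as a byproduct, $sfs=e$). I would combine these with the fact that $e\perp f$ means $ef=fe=0$ and with the idempotence $e\sp{2}=e$, $f\sp{2}=f$. These four facts---$se=fs$, $es=sf$, $ef=fe=0$, and idempotence of $e,f$---are all that the computation requires, and the whole point is to apply the right one at the right moment so that no unsimplified $ese$ or $fsf$ is left over.

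Part (i) is immediate: $xy=(se)(es)=se\sp{2}s=ses=f$ and $yx=(es)(se)=es\sp{2}e=e\sp{2}=e$. For part (ii), the identities $xe=x$, $fx=x$, $ey=y$, $yf=y$ follow from idempotence together with $s\sp{2}=1$ and $ses=f$, while the six vanishing products come from pushing an $e$ or an $f$ past an $s$ via $se=fs$ and $es=sf$ and then invoking $ef=fe=0$; for instance $x\sp{2}=s(es)e=s(sf)e=fe=0$, $ex=e(se)=e(fs)=(ef)s=0$, and $xf=s(ef)=0$, with $y\sp{2}$, $ye$, $fy$ handled symmetrically.

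For part (iii) I would first note that $p\in A$: although $x=se$ and $y=es$ need not lie in $A$ individually, their sum $x+y=se+es=2(s\circ e)$ is a Jordan product and hence lies in $A$, and $e,f\in A$, so $p\in A$. Writing $w:=x+y+e+f=2p$, I would expand $w\sp{2}$ by distributing over the sixteen products; by (i) and (ii) every one of them is either $0$ or one of $x,y,e,f$, and collecting the surviving terms gives $w\sp{2}=2(x+y+e+f)=2w$. Hence $p\sp{2}=\tfrac14 w\sp{2}=\tfrac12 w=p$, so $p\in P$.

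Parts (iv) and (v) are the same kind of bookkeeping. Sandwiching $w$ by $e$ collapses the cross terms via (ii): $ewe=e$, which gives $2epe=e$; and from $ew=y+e$ and $we=x+e$ one gets $wew=w(y+e)=w$, so $pep=\tfrac14 wew=\tfrac14 w=\tfrac12 p$, i.e.\ $2pep=p$. Part (v) then follows verbatim after interchanging the roles $e\leftrightarrow f$ and $x\leftrightarrow y$, using $fwf=f$ and $wfw=w$. There is no genuine obstacle here---the statement is a computation---and the two points meriting care are exactly (a) choosing $se=fs$ versus $es=sf$ correctly in each product so the calculation actually terminates, and (b) the observation in (iii) that $p$ lands in $A$ even though $x$ and $y$ separately need not, which is what legitimizes calling $p$ a projection in $P$.
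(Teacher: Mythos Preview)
Your proof is correct and follows essentially the same approach as the paper: both reduce everything to direct computation in $R$ using the identities derived from $ses=f$, $s\sp{2}=1$, $ef=fe=0$, and idempotence, and both note that $x+y\in A$ (you via $x+y=2(s\circ e)$, the paper by a parallel remark) to justify $p\in P$. Your organization is slightly tidier---recording $se=fs$ and $es=sf$ upfront and working with $w=2p$---but the argument is the same.
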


\begin{proof}
(i) We have $ses=f$, $sfs=e$, so $xy=f$. Also $yx=es\sp{2}e=
e\sp{2}=e$. 

(ii) Clearly, $xe=x$, $ey=y$, and since $ef=fe=0$, $xf=0$, 
and $fy=0$. Moreover, $x\sp{2}=sese=fe=0$, $y\sp{2}=eses=
fs=fes=0$, $ex=ese=sfsse=sfe=0$, $yf=esf=esses=es=y$, and 
similarly, $y\sp{2}=0$, $ye=0$, and $fx=x$. 

(iii) We have $x+y\in A$, whence $p\in A$. A straightforward 
computation using the data in (ii) shows that $p\sp{2}=p$.

(iv) Since $ex=ye=0$ and $ef=0$, it follows that $2epe=exe+
eye+e+efe=e$. Similarly, $2pe=xe+ye+e=x+e$, so $2pep=xp+ep=
\frac{1}{2}(x\sp{2}+xy+xe+xf+ex+ey+e)=\frac{1}{2}(f+x+y+e)=p$. 

(v) As in the proof of (iii), the proof of (iv) is a 
straightforward computation using the data in (ii).
\end{proof}

\begin{lemma} \label{lm:prep2}
Suppose that $P$ is a complete OML, let $(q\sb{i})\sb{i\in I}$ 
be an orthogonal family in $P$, put $q:=\bigvee\sb{i\in I}q
\sb{i}$, let $i\in I$, $r\in P$, and suppose that $q\sb{j}r=0$ 
for all $j\in I$ with $j\not=i$. Then $qr=q\sb{i}r$ and $rq=rq\sb{i}$.
\end{lemma}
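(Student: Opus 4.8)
The plan is to reduce the whole statement to the finite orthogonal decomposition $q=q\sb{i}+q'$, where $q'$ aggregates all the projections other than $q\sb{i}$, and then simply distribute the product over this sum. First I would set $q':=\bigvee\sb{j\in I,\,j\not=i}q\sb{j}$, which exists because $P$ is assumed to be a complete OML. Since the family $(q\sb{j})\sb{j\in I}$ is orthogonal, $q\sb{i}\leq q\sb{j}\sp{\perp}$ for every $j\not=i$, so by De Morgan duality $q\sb{i}\leq\bigwedge\sb{j\not=i}q\sb{j}\sp{\perp}=(\bigvee\sb{j\not=i}q\sb{j})\sp{\perp}=q'\sp{\perp}$, which gives $q\sb{i}\perp q'$. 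Regrouping the supremum in the complete lattice $P$ yields $q=\bigvee\sb{i\in I}q\sb{i}=q\sb{i}\vee q'$, and because $q\sb{i}\perp q'$ this join is an orthogonal sum, i.e., $q=q\sb{i}+q'$ in $A$ (using $p\perp q\Rightarrow p\vee q=p+q$).

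Next I would feed in the hypothesis on $r$. Recalling that $p\perp q\Leftrightarrow pq=qp=0$ and that $p\perp q$ iff $p\leq q\sp{\perp}$, the assumption $q\sb{j}r=0$ for all $j\not=i$ is precisely $q\sb{j}\leq r\sp{\perp}$ for all $j\not=i$. Thus $r\sp{\perp}$ is an upper bound of $\{q\sb{j}:j\not=i\}$, and since $q'$ is the least upper bound of this set we get $q'\leq r\sp{\perp}$, that is, $q'\perp r$, so $q'r=rq'=0$.

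Finally, distributing the product (calculated in the enveloping algebra $R$) over the orthogonal sum gives $qr=(q\sb{i}+q')r=q\sb{i}r+q'r=q\sb{i}r$ and, symmetrically, $rq=r(q\sb{i}+q')=rq\sb{i}+rq'=rq\sb{i}$, which is the desired conclusion.

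There is no genuinely hard step in this argument; its only substantive content is the passage from the pointwise orthogonality data to the aggregated relations $q\sb{i}\perp q'$ and $q'\leq r\sp{\perp}$. The former rests on De Morgan duality together with the regrouping of joins in a complete lattice, and the latter on the least-upper-bound characterization of $q'$ — both of which genuinely use the completeness of $P$. Once $q=q\sb{i}+q'$ is established as an orthogonal sum, everything else is mere distributivity in $R$.
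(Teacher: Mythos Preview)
Your proof is correct and is essentially identical to the paper's own argument: both set $q':=\bigvee\sb{j\not=i}q\sb{j}$, observe that $q\sb{i}\perp q'$ so that $q=q\sb{i}+q'$, use the hypothesis to conclude $q'\perp r$ (hence $q'r=rq'=0$), and then distribute. You have simply spelled out a few routine justifications (De~Morgan duality, the least-upper-bound property) that the paper leaves implicit.
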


\begin{proof}
Define $q\sb{i}':=\bigvee\sb{j\in I,\,j\not=i}q\sb{j}$. 
As $q\sb{j}\perp q\sb{i}$ for all $j\in I$ with $j\not=i$, 
it follows that $q\sb{i}\perp q\sb{i}'$ with $q=q\sb{i}
\vee q\sb{i}'=q\sb{i}+q\sb{i}'$. Also, since $q\sb{j}\perp 
r$ for $j\in I$ with $j\not=i$, it follows that $q\sb{i}'
\perp r$, whence $q\sb{i}'r=rq\sb{i}'=0$, and therefore 
$qr=(q\sb{i}+q\sb{i}')r=q\sb{i}r$ and $rq=r(q\sb{i}+
q\sb{i}')=rq\sb{i}$. 
\end{proof}

\begin{theorem}\label{th:weakadditivity} 
Suppose that the OML $P$ is complete, let $(e\sb{i})\sb
{i\in I}$ and $(f\sb{i})\sb{i\in I}$ be orthogonal families 
in $P$ with $e=\bigvee\sb{i\in I}e\sb{i}$ and $f=\bigvee
\sb{i\in I}f\sb{i}$. Then, if $e$ and $f$ are orthogonal and 
if, for each $i\in I$, $e\sb{i}$ and $f\sb{i}$ are exchanged 
by a symmetry $s\sb{i}\in A$, then there is a symmetry $s\in A$ 
exchanging $e$ and $f$.
\end{theorem}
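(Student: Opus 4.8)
The plan is to mimic the finite case (Lemma~\ref{le:finadit}) by constructing, for each $i\in I$, a partial symmetry supported on $p_i:=e_i\vee f_i$ that implements the exchange of $e_i$ and $f_i$, and then to assemble these into a single symmetry by summing them up over the (now infinite) index set. Concretely, since $s_i$ exchanges $e_i$ and $f_i$, the computation in Lemma~\ref{le:finadit} shows that $s_i$ commutes with $p_i=e_i\vee f_i$, so $u_i:=s_ip_i=p_is_i$ is a partial symmetry with $u_i^2=p_i$ and $u_ie_iu_i=f_i$. The idea is to set $u:=\sum_{i\in I}u_i$ (suitably interpreted) and then take $s:=u+(1-\bigvee_{i\in I}p_i)$ as the candidate symmetry, the last term filling in on the orthocomplement of the common support.

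First I would establish mutual orthogonality of the supports. Exactly as in Lemma~\ref{le:finadit}, from $e_i\perp f_j$ and $e_i\perp e_j$ and $f_i\perp f_j$ for $i\neq j$ one deduces $p_ip_j=0$, so $(p_i)_{i\in I}$ is an orthogonal family in the complete OML $P$; put $p:=\bigvee_{i\in I}p_i=\sum_{i\in I}p_i$ (the supremum equals the orthogonal sum since $P$ is complete). The key technical tool is Lemma~\ref{lm:prep2}, which lets me control products of the global projection $p$ against an individual support $p_i$: since $p_jr=0$ for every $r$ supported on $p_i$ with $j\neq i$, I can reduce infinite sums to single-index computations whenever I multiply on the left or right by an element sitting under some $p_i$. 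This is what replaces the finite bookkeeping of Lemma~\ref{le:finadit} and lets the orthogonality relations $u_iu_j=0$, $u_ie_j=0$, etc., do their work coordinatewise.

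The crux is to make sense of the infinite sum $u:=\sum_{i\in I}u_i$ as a genuine element of $A$ and to verify $s^2=1$ and $ses=f$. I would argue that the partial sums over finite subsets $F\subseteq I$ form a bounded family (each $u_F:=\sum_{i\in F}u_i$ is a partial symmetry with $u_F^2=\sum_{i\in F}p_i\leq 1$, so $\|u_F\|\leq 1$), and use the completeness of $P$ together with the monotone-completeness/spectral apparatus of the synaptic algebra to extract a limit $u\in A$; here Lemma~\ref{lm:prep1} is the natural fine-grained description of each summand's algebra (the relations $x^2=y^2=ex=xf=\cdots=0$), which I expect to patch together across indices via Lemma~\ref{lm:prep2}. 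Once $u$ is in hand, the identities $u^2=\bigvee_i p_i=p$, $ueu=f$, and $(1-p)$-orthogonality give $s^2=u^2+(1-p)^2=p+(1-p)=1$ and $ses=ueu=f$ by termwise computation justified through Lemma~\ref{lm:prep2}.

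The main obstacle, I expect, is precisely the convergence/existence of the infinite sum $u=\sum_{i\in I}u_i$ in $A$: unlike the finite case, there is no purely algebraic formula, and one must invoke completeness of $P$ to guarantee that the relevant suprema (and hence the limiting element) exist in the synaptic algebra. I anticipate that the right move is to realize $u$ not as a norm limit (the index set may be uncountable and convergence need not be norm convergence) but through the orthogonal decomposition $A\cong\prod$ over the central-type pieces determined by the $p_i$, exploiting that the $p_i$ are orthogonal; the hypothesis $e\perp f$ (so that $e$ and $f$ themselves are genuinely orthogonal, not merely their pieces) is what guarantees the assembled $s$ is an honest symmetry exchanging the \emph{full} projections rather than getting tangled where $e$ and $f$ overlap. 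Verifying that the assembled $u$ lands in $A$ and satisfies the Lemma~\ref{lm:prep1} relations globally is the step I would spend the most care on.
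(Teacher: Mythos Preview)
Your overall plan---build local partial symmetries $u_i$ supported on $e_i\vee f_i$ and assemble them---is morally right, and you have correctly identified the crux: making sense of $\sum_{i\in I}u_i$ as an element of $A$. But this is a genuine gap, not a technicality. The hypothesis is only that the \emph{projection lattice} $P$ is complete; $A$ itself is not assumed monotone complete, the index set may be uncountable so norm convergence is unavailable, and the $p_i=e_i\vee f_i$ are not central, so there is no product decomposition of $A$ along them. None of the mechanisms you gesture at (``monotone-completeness/spectral apparatus'', ``$A\cong\prod$ over central-type pieces'') is actually available here, and I do not see how to salvage the sum $\sum_i u_i$ directly.

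The paper's proof sidesteps the problem by encoding each local partial symmetry as a \emph{projection} rather than a general element. With $x_i:=s_ie_i$, $y_i:=e_is_i$, Lemma~\ref{lm:prep1} shows that $p_i:=\tfrac12(x_i+y_i+e_i+f_i)$ is a projection (this is the point of that lemma, not merely a description of ``fine-grained algebra''). One checks $(p_i)_{i\in I}$ is orthogonal, so $p:=\bigvee_i p_i$ exists in $P$ by completeness, and $s:=2p-1$ is an honest symmetry with no limiting process needed. The remaining work is to verify $ses=f$; Lemma~\ref{lm:prep2} reduces products like $pe_i$, $ep_i$ to their $i$-th pieces, yielding $2epe=e$, $2pep=p$, $2fpf=f$, $2pfp=p$, from which one computes $f(ses)f=f$ and $(ses)f(ses)=ses$, hence $ses=f$. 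The missing idea in your proposal is precisely this repackaging of the partial symmetry data into projections so that only a supremum in $P$, not a sum in $A$, is required.
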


\begin{proof} Our proof is suggested by the proof of 
\cite[Lemma 3.1]{Kappro}. We begin by noting that $e\perp f$ 
implies $e\sb{i}\perp f\sb{j}$ for all $i,j\in I$. Also, for 
$i\in I$, we have $s\sb{i}e\sb{i}s\sb{i}=f\sb{i}$ and $s\sb{i}
f\sb{i}s\sb{i}=e\sb{i}$. Let $x\sb{i}:=s\sb{i}e\sb{i}$, $y
\sb{i}:=e\sb{i}s\sb{i}$, and $p\sb{i}:=\frac{1}{2}(x\sb{i}+y
\sb{i}+e\sb{i}+f\sb{i})$. By parts (i) and (iii) of Lemma 
\ref{lm:prep1}, $y\sb{i}x\sb{i}=e\sb{i}$, $x\sb{i}y\sb{i}=f
\sb{i}$, and $p\sb{i}\in P$. Put $p:=\bigvee\sb{i\in I}
p\sb{i}$ and $s:=2p-1$. We are going to show that $s$ is the 
required symmetry.

We claim that $(p\sb{i})\sb{i\in I}$ is an orthogonal family. 
Indeed, suppose $i,j\in I$ with $i\not=j$. Then $4p\sb{i}p
\sb{j}=(x\sb{i}+y\sb{i}+e\sb{i}+f\sb{i})(x\sb{j}+y\sb{j}+e
\sb{j}+f\sb{j})$ and it will be sufficient to show that the 
sixteen terms that result from an expansion of the latter 
product are all zero. This follows from the facts that, for 
$i\not=j$, $e\sb{i}e\sb{j}=e\sb{i}f\sb{j}=f\sb{i}e\sb{j}=
f\sb{i}f\sb{j}=0$ together with the data in Lemma \ref
{lm:prep1} (ii). For instance, $x\sb{i}x\sb{j}=x\sb{i}e
\sb{i}f\sb{j}x\sb{j}=0$.

As in the argument above, $p\sb{j}e\sb{i}=0$ for $i,j\in I$ with 
$j\not=i$, and it follows from Lemma \ref{lm:prep2} with $(q\sb{i})
\sb{i\in I}=(e\sb{i})\sb{i\in I}$ and $r=p\sb{i}$ that $ep\sb{i}=
e\sb{i}p\sb{i}$ and $p\sb{i}e=p\sb{i}e\sb{i}$ for all $i\in I$. 
Likewise, by  Lemma \ref{lm:prep2}, this time with $(q\sb{i})
\sb{i\in I}=(p\sb{i})\sb{i\in I}$ and $r=e\sb{i}$ we have $pe\sb{i}
=p\sb{i}e\sb{i}$ and $e\sb{i}p=e\sb{i}p\sb{i}$ for all $i\in I$.

By Lemma \ref{lm:prep1} (iii), $2e\sb{i}p\sb{i}e\sb{i}=e\sb{i}$, 
whence $2epe\sb{i}=2ep\sb{i}e\sb{i}=2e\sb{i}p\sb{i}e\sb{i}=e\sb{i}$, 
and we have $(2ep-1)e\sb{i}=0$, whereupon $(2ep-1)\dg e\sb{i}=0$ for 
all $i\in I$. Therefore, $e\sb{i}\leq((2ep-1)\dg)\sp{\perp}$ for 
all $i\in I$, and it follows that $e\leq((2ep-1)\dg)\sp{\perp}$, 
whence $(2ep-1)\dg e=0$, and consequently, $(2ep-1)e=0$, i.e., 
$2epe=e$. By similar arguments, $2pep=p$, $2fpf=f$, and $2pfp=p$. 

Let us write $h=ses=(2p-1)e(2p-1)=4pep-2ep-2pe+e$, noting that, 
since $s$ is a  symmetry, $h$ is a projection. Using the facts 
that $ef=fe=0$, $2pep=p$, and $2fpf=f$ we find that 
\[
fhf=f(4pep-2ep-2pe+e)f=4fpepf=2fpf=f. 
\]
Similarly using the facts that $ef=fe=0$, $2pfp=p$, and 
$2epe=e$, 
\[
hfh=(2p-1)e(2p-1)f(2p-1)e(2p-1)=(2p-1)(2ep-e)f(2pe-e)(2p-1)
\]
\[
=(2p-1)(4epfpe)(2p-1)=(2p-1)(2epe)(2p-1)=(2p-1)e(2p-1)=h. 
\]
Therefore $f(1-h)f=f-fhf=f-f=0$, whence $f(1-h)=0$, so $f\leq h$.
Likewise, since $hfh=h$, it follows that $h\leq f$, and we have 
$h=f$.
\end{proof}

\section{Central orthocompleteness} \label{sc:centOC} 

If $P$ is regarded as an effect algebra (Remark \ref
{rm:OMLeffectalg}) then the following definition of central 
orthocompleteness for the OML $P$ is equivalent to the 
effect-algebra definition of central orthocompleteness 
\cite[Definition 6.1]{COEA}. 

\begin{definition}
A family $(p\sb{i})\sb{i\in I}$ in the OML $P$ is \emph{centrally 
orthogonal} iff there is a pairwise orthogonal family $(c\sb{i})
\sb{i\in I}$ in the center $P\cap C(A)$ of $P$ such that 
$p\sb{i}\leq c\sb{i}$ for all $i\in I$. The projection lattice 
$P$ is \emph{centrally orthocomplete} iff every centrally 
orthogonal family $(p\sb{i})\sb{i\in I}$ in $P$ has a supremum 
$p=\bigvee\sb{i\in I}p\sb{i}$ in $P$.
\end{definition}

\noindent Obviously, if $P$ is complete as a lattice, then 
it is centrally orthocomplete. 

\begin{assumption} For the remainder of this article, we assume 
that the OML $P$ of projections in $A$ is centrally orthocomplete.
\end{assumption}

\begin{theorem} [{\cite[Theorem 6.8 (i)]{COEA}}] \label
{th:centercompl}
The center $P\cap C(A)$ of $P$ is a complete boolean algebra.
\end{theorem}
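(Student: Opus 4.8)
The plan is to exploit central orthocompleteness to reduce the completeness of $Z:=P\cap C(A)$ to the existence of suprema of pairwise orthogonal families, and then to invoke a standard Boolean-algebra fact. By Theorem \ref{th:centP} together with the review of OMLs in Section \ref{sc:OML}, $Z$ is precisely the center of the OML $P$; hence $Z$ is a Boolean algebra, and it is sup/inf-closed in $P$. Consequently it suffices to prove that $Z$ is complete as a Boolean algebra, since any supremum in $P$ of a subfamily of $Z$ automatically lies in $Z$.

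First I would record that every pairwise orthogonal family in $Z$ has a supremum in $Z$. Indeed, if $(d_k)_{k\in K}$ is pairwise orthogonal in $Z$, then taking the dominating central family to be $(d_k)_{k\in K}$ itself exhibits $(d_k)_{k\in K}$ as a centrally orthogonal family; by central orthocompleteness it has a supremum $\bigvee_{k}d_k$ in $P$, and since $Z$ is sup/inf-closed in $P$, this supremum belongs to $Z$. Here I use that in $Z$ orthogonality coincides with Boolean disjointness, since for $c,d\in Z$ we have $c\perp d$ iff $cd=0$ iff $c\wedge d=0$.

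Next I would establish the purely order-theoretic lemma that a Boolean algebra in which every pairwise disjoint subset has a supremum is complete, and apply it to $Z$. Given an arbitrary $S\subseteq Z$, consider the collection of pairwise disjoint families $D\subseteq Z$ each of whose members lies below some element of $S$, ordered by inclusion; the union of a chain of such families is again one, so Zorn's Lemma furnishes a maximal such family $D_0$. By the preceding paragraph, $u:=\bigvee D_0$ exists in $Z$. To see that $u=\bigvee S$, I would first check that $u$ is an upper bound of $S$: if some $a\in S$ satisfied $a\wedge u^{\perp}\neq 0$, then $a\wedge u^{\perp}$ would be a nonzero element of $Z$ lying below $a\in S$ and disjoint from every $d\in D_0$ (since $d\leq u$ forces $d\wedge(a\wedge u^{\perp})\leq u\wedge u^{\perp}=0$), so $D_0\cup\{a\wedge u^{\perp}\}$ would be a strictly larger admissible disjoint family, contradicting maximality. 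Minimality is then immediate: any upper bound $v$ of $S$ dominates each $d\in D_0$ (as $d\leq a\leq v$ for some $a\in S$) and hence dominates $u=\bigvee D_0$.

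I expect the main obstacle to be the Zorn's-Lemma argument of the last paragraph, specifically the verification that the maximal disjoint family $D_0$ genuinely yields the supremum of all of $S$, rather than merely of the elements it lies beneath; the maximality step driven by $a\wedge u^{\perp}$ is exactly what closes this gap. The two preceding reductions—identifying $Z$ as a sup/inf-closed Boolean center, and reading off suprema of orthogonal families directly from central orthocompleteness—are routine once the definitions are unwound.
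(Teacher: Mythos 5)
Your proof is correct, but it takes a genuinely different route from the paper: the paper gives no argument of its own for this theorem, instead importing it wholesale from the theory of centrally orthocomplete effect algebras (\cite[Theorem 6.8 (i)]{COEA}), via the observations that $P$ regarded as an effect algebra is centrally orthocomplete in the effect-algebra sense and that its effect-algebra center coincides with $P\cap C(P)=P\cap C(A)$ (Theorem \ref{th:centP} and Remark \ref{rm:OMLeffectalg}). You instead give a self-contained lattice-theoretic proof: (a) $Z:=P\cap C(A)$ is the center of the OML $P$, hence a Boolean algebra that is sup/inf-closed in $P$ (Section \ref{sc:OML}, Theorem \ref{th:centP}); (b) a pairwise orthogonal family in $Z$ is centrally orthogonal (it dominates itself), so central orthocompleteness plus sup/inf-closedness puts its supremum inside $Z$; (c) a Zorn's-lemma exhaustion shows that a Boolean algebra with suprema of all pairwise disjoint subsets is complete. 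All three steps check out: the maximality argument with $a\wedge u^{\perp}$ is exactly what is needed to see that $u$ bounds all of $S$ (distributivity of $Z$ gives $a\wedge u^{\perp}=0\Rightarrow a\leq u$), the new element is genuinely not in $D_0$ because it is nonzero yet disjoint from everything in $D_0$, and least-ness follows since every member of $D_0$ sits below some member of $S$. What the paper's citation buys is brevity and generality (the quoted result holds for arbitrary centrally orthocomplete effect algebras, not just projection lattices of synaptic algebras); what your argument buys is self-containedness, using only facts already reviewed in the paper, and it isolates the single point where central orthocompleteness enters, namely the existence of suprema of orthogonal central families. One remark: your step (c) is a classical Boolean-algebra fact (a Boolean algebra is complete iff every pairwise disjoint subset has a supremum, cf. \cite{Sikor}), so you could also cite it rather than reprove it.
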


\begin{lemma} \label{lm:OCinterval}
Let $d\in P\cap C(A)$ and let $c$ be a projection in the 
synaptic algebra $dAd=dA=Ad$. Then{\rm: (i)} $c$ is a central 
projection in $dA$ iff $c$ is a central projection in $A$. 
{\rm(ii)} The OML $P[0,d]$ of projections in $dA$ is 
centrally orthocomplete.
\end{lemma}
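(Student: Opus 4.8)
The plan is to prove (i) first, since it identifies the center of the OML $P[0,d]$ and thereby reduces (ii) to the standing central orthocompleteness of $P$.

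For (i), the forward implication is immediate: if $c$ is central in $A$, then $c$ commutes with every element of $A$, hence with every element of the subset $dA\subseteq A$, so $c$ is central in $dA$. For the converse, suppose $c$ is a projection in $dA$ (so $c\leq d$, giving $cd=dc=c$) that commutes with every element of $dA$. Let $a\in A$. Since $d$ is central we have $da=ad$, and $da=dad=J_d(a)\in dAd=dA$, so $c$ commutes with $da$. Using $cd=c$ on the left, and the centrality of $d$ together with $cd=c$ on the right, I compute $ca=c(da)=(da)c=d(ac)=(ac)d=a(cd)=ac$. Thus $c$ commutes with all of $A$, i.e.\ $c\in C(A)$ with $c\leq d$, so $c$ is a central projection in $A$. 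This one short computation in the converse direction is the only real content of the lemma; it is where centrality of $d$ and the relation $c\leq d$ genuinely interact.

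For (ii), I first observe that, applying the center identification of Theorem \ref{th:centP} to the synaptic algebra $dA$, the center of the OML $P[0,d]$ is the set of central projections of $dA$, which by part (i) equals $\{c\in P\cap C(A):c\leq d\}$. Now let $(p_i)_{i\in I}$ be a centrally orthogonal family in $P[0,d]$, witnessed by a pairwise orthogonal family $(c_i)_{i\in I}$ in the center of $P[0,d]$ with $p_i\leq c_i$ for all $i$. By part (i) each $c_i$ lies in $P\cap C(A)$, and for $i\neq j$ the relation $c_i\leq c_j^{\perp_d}=d-c_j\leq 1-c_j=c_j^{\perp}$ shows that $(c_i)_{i\in I}$ is pairwise orthogonal in the full center $P\cap C(A)$ of $A$ as well. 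Hence $(p_i)_{i\in I}$ is a centrally orthogonal family in $P$, and by the standing assumption that $P$ is centrally orthocomplete the supremum $p:=\bigvee_{i\in I}p_i$ exists in $P$.

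It then remains to check that $p$ is also the supremum of $(p_i)_{i\in I}$ as computed in $P[0,d]$. Since $p_i\leq c_i\leq d$ for every $i$ and $d$ is therefore an upper bound, we get $p\leq d$, so $p\in P[0,d]$; and because $P[0,d]$ is an order interval, any upper bound of $(p_i)_{i\in I}$ in $P[0,d]$ is an upper bound in $P$ and hence dominates $p$. Thus $p=\bigvee_{i\in I}p_i$ in $P[0,d]$, proving that $P[0,d]$ is centrally orthocomplete. The only delicate point is the transfer of the orthogonal central witnessing family from the interval into $A$, which is exactly what part (i) secures; the rest is routine bookkeeping about suprema restricting to order intervals.
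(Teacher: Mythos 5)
Your proof is correct, and for part (i) it is essentially the paper's own argument: the forward direction is dismissed as immediate in both, and your chain $ca=c(da)=(da)c=d(ac)=(ac)d=a(cd)=ac$ is just a more detailed version of the paper's computation $ca=cda=dac=ac$ (the only step you leave implicit is that $d$, commuting with $a$ and with $c$ separately, also commutes with the product $ac$ formed in the enveloping algebra $R$ --- a one-line check). The noteworthy difference is in part (ii): the paper's proof actually stops after establishing (i) and never addresses (ii) at all, evidently regarding it as routine, so your argument supplies exactly the content the paper omits. Your route is the natural one: applying Theorem \ref{th:centP} to the synaptic algebra $dA$ identifies the center of the OML $P[0,d]$ with the central projections of $dA$, which by (i) are precisely the elements of $P\cap C(A)$ lying below $d$; a witnessing family $(c_i)_{i\in I}$ that is pairwise orthogonal in $P[0,d]$ (i.e.\ $c_i\leq c_j^{\perp_d}=d-c_j$ for $i\neq j$) is then pairwise orthogonal in $P$ since $d-c_j\leq 1-c_j$; central orthocompleteness of $P$ yields $p=\bigvee_{i\in I}p_i$ in $P$; and since $d$ is an upper bound of every $p_i$, one gets $p\leq d$, whence $p$ is also the supremum as computed in the order interval $P[0,d]$. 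Each of these steps is valid, so your proposal is faithful to the paper on (i) and strictly more complete than the paper on (ii).
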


\begin{proof} (i) Suppose that $c$ is a central projection 
in $dA$ and let $a\in A$. Then, as $c\leq d$, we have $c=
cd=dc$ and $ca=cda=dac=ac$, so $c\in P\cap C(A)$. The 
converse is obvious, and (i) is proved. 
\end{proof} 

\begin{lemma} [{\cite[Theorem 6.8 (ii)]{COEA}}] \label
{lm:centcover}
For each $p\in P$, there is a smallest central projection 
$c\in P\cap C(A)$ such that $p\leq c$.
\end{lemma}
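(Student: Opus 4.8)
The plan is to produce $c$ as an infimum taken in the complete boolean algebra $Z:=P\cap C(A)$ (Theorem \ref{th:centercompl}) and then to verify that this infimum actually dominates $p$. First I would let $S:=\{z\in Z:p\leq z\}$, which is nonempty since $1\in S$, and set $c:=\bigwedge S$, the infimum computed in the complete boolean algebra $Z$. By construction $c\leq z$ for every $z\in S$, so the only thing in doubt is whether $p\leq c$; once that is established, $c$ is automatically the smallest central projection above $p$.

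Rewriting via De Morgan duality in the boolean algebra $Z$, we have $c=w\sb{0}\sp{\perp}$, where $w\sb{0}:=\bigvee D$ is the supremum computed in $Z$ of the set $D:=\{z\sp{\perp}:z\in S\}=\{w\in Z:p\perp w\}$. Thus it suffices to show $p\perp w\sb{0}$. The obstacle is that $w\sb{0}$ is only a boolean-algebra supremum in $Z$; I need to realize it as a genuine supremum in $P$ of projections each orthogonal to $p$, so that the De Morgan duality in the OML $P$ can transmit orthogonality through the supremum. Since $D$ is not pairwise orthogonal, central orthocompleteness does not apply to it directly. To fix this I would well-order $D=(w\sb{\alpha})$ and disjointify inside $Z$ by setting $g\sb{\alpha}:=w\sb{\alpha}\wedge(\bigvee\sb{\beta<\alpha}w\sb{\beta})\sp{\perp}$; a standard transfinite induction shows that $(g\sb{\alpha})$ is pairwise orthogonal with $\bigvee\sb{\alpha}g\sb{\alpha}=w\sb{0}$ in $Z$, while $g\sb{\alpha}\leq w\sb{\alpha}$ gives $p\perp g\sb{\alpha}$ for every $\alpha$.

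Now the pairwise orthogonal family $(g\sb{\alpha})$ of central projections is centrally orthogonal (take $c\sb{\alpha}=g\sb{\alpha}$ in the definition), so by the standing hypothesis that $P$ is centrally orthocomplete its supremum $g:=\bigvee\sb{\alpha}g\sb{\alpha}$ exists in $P$; since each $g\sb{\alpha}\in Z$ and $Z=P\cap C(A)$ is sup/inf-closed in $P$ (Lemma \ref{lm:SupInfQ} with $B=A$), it follows that $g\in Z$ and that $g$ coincides with the $Z$-supremum, namely $g=w\sb{0}$. Finally, from $p\perp g\sb{\alpha}$, i.e. $p\leq g\sb{\alpha}\sp{\perp}$ for all $\alpha$, the De Morgan duality in the OML $P$ yields $p\leq\bigwedge\sb{\alpha}g\sb{\alpha}\sp{\perp}=(\bigvee\sb{\alpha}g\sb{\alpha})\sp{\perp}=g\sp{\perp}=w\sb{0}\sp{\perp}=c$. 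Hence $c\in Z$ with $p\leq c$, so $c\in S$ and $c=\bigwedge S\leq z$ for every central $z\geq p$; this is exactly the asserted smallest central projection dominating $p$.

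The main obstacle, as indicated, is this middle step: central orthocompleteness is stated only for centrally orthogonal families, so the disjointification of $D$ into a pairwise orthogonal family with the same supremum is the essential device that lets me replace the purely boolean supremum $w\sb{0}$ in $Z$ by a supremum genuinely computed in $P$, after which the De Morgan law in $P$ and the sup/inf-closedness of the center do the rest. Everything else is routine boolean-algebra and OML bookkeeping.
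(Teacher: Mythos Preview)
Your argument is correct. The paper itself does not supply a proof of this lemma; it simply imports the result as Theorem~6.8(ii) of \cite{COEA}, so there is no in-paper proof to compare against.

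Your disjointification of $D$ into a pairwise orthogonal family of central projections is exactly the device that makes the argument go through with only the tools available at this point in the paper: it converts the boolean supremum $w\sb{0}$ (which a priori lives only in $Z$) into a supremum that central orthocompleteness guarantees to exist in $P$, after which Lemma~\ref{lm:SupInfQ} (with $B=A$) identifies the two suprema and De~Morgan duality in the OML $P$ transmits $p\perp g\sb{\alpha}$ to $p\leq w\sb{0}\sp{\perp}=c$. You are also right not to shortcut via Theorem~\ref{th:sup/infincenter}(ii), since the paper's indicated proof of that part appeals to the central cover mapping and would therefore be circular here.
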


\begin{definition} \label{df:centcover}  
If $a\in A$, then the smallest central projection 
$c\in P\cap C(A)$ such that $a\dg\leq c$ is called the 
\emph{central cover} of $a$ and denoted by $\gamma(a)$. 
The mapping $\gamma\colon A\to P\cap C{P}$ is called the 
\emph{central cover mapping}.
\end{definition}

Since $a\dg$ is the smallest projection $p\in P$ such that 
$ap=a$, it follows that $\gamma a$ is the smallest central 
projection $c\in P\cap C(A)$ such that $ac=a$. Moreover, 
by \cite[Theorems 5.2 and 6.10]{COEA}, the central cover 
mapping $\gamma$ has the following properties.

\begin{theorem} \label{th:centcovprops}
Let $p,q\in P$. Then{\rm:}
\begin{enumerate}
\item $\gamma 1=1$, $\gamma p=0\Leftrightarrow p=0$, and 
 $\gamma(P):=\{\gamma p:p\in P\}=P\cap C(A)$.
\item $\gamma(\gamma p)=\gamma p$ and $p\leq q\Rightarrow
 \gamma p\leq\gamma q$.
\item $\gamma(p\wedge\gamma q)=\gamma p\wedge\gamma q$.
\item $\gamma p\perp q\Leftrightarrow\gamma p\perp\gamma q
 \Leftrightarrow p\perp\gamma q\Rightarrow p\perp q$.
\item If $(p\sb{i})\sb{i\in I}$ is a family of elements in $P$ 
 and the supremum $\bigvee\sb{i\in I}p\sb{i}$ exists in $P$, then 
 $\bigvee\sb{i\in I}\gamma p\sb{i}$ exists in $P$ and 
 $\gamma(\bigvee\sb{i\in I}p\sb{i})=\bigvee\sb{i\in I}\gamma p\sb{i}$.
\end{enumerate}
\end{theorem}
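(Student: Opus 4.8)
The plan is to derive all five assertions from the single defining property of $\gamma$ recorded in Lemma \ref{lm:centcover} and Definition \ref{df:centcover}, namely that for $p\in P$ the element $\gamma p$ is the \emph{smallest} central projection dominating $p$, together with the structural facts that the complement of a central projection is again central (Remarks \ref{rm:DirectSum}) and that the center $P\cap C(A)=P\cap C(P)$ is a boolean algebra (Theorem \ref{th:centP}), indeed a complete one (Theorem \ref{th:centercompl}). Items (i), (ii), and (iv) are then immediate. For (i): $\gamma 1=1$ since $1$ is central; $p\le\gamma p$ forces $\gamma p=0\Rightarrow p=0$, while $p=0\le 0$ gives the converse; and $\gamma(P)\subseteq P\cap C(A)$ by definition, with equality because a central projection $c$ dominates itself minimally, so $\gamma c=c$. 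For (ii): $\gamma p$ is central, hence $\gamma(\gamma p)=\gamma p$, and $p\le q\le\gamma q$ exhibits $\gamma q$ as a central projection above $p$, giving $\gamma p\le\gamma q$. For (iv): each link is a one-line minimality-or-monotonicity argument, the only subtlety being the use of $(\gamma q)^{\perp}$ being central to pass from $p\perp\gamma q$ back to $\gamma p\perp\gamma q$.

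The substance of (iii) lies in its converse inequality, which I would isolate as the sublemma $\gamma(p\wedge c)=\gamma p\wedge c$ for every central projection $c$. The inequality $\gamma(p\wedge c)\le\gamma p\wedge c$ is routine from (ii) and minimality. For the reverse, decompose $p=(p\wedge c)\vee(p\wedge c^{\perp})$ into orthogonal projections (legitimate because $c$ is central, so $pCc$ and $p\wedge c=pc$), and observe that $p\wedge c\le\gamma(p\wedge c)$ while $p\wedge c^{\perp}\le c^{\perp}$; hence $p$ lies below the central projection $\gamma(p\wedge c)\vee c^{\perp}$, so $\gamma p\le\gamma(p\wedge c)\vee c^{\perp}$. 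Meeting with $c$ and using distributivity in the boolean center gives $\gamma p\wedge c\le\gamma(p\wedge c)$. Applying the sublemma with $c=\gamma p\wedge\gamma q$ yields $\gamma\bigl(p\wedge(\gamma p\wedge\gamma q)\bigr)=\gamma p\wedge\gamma q$, and since $p\le\gamma p$ we have $p\wedge(\gamma p\wedge\gamma q)=p\wedge\gamma q$, which is exactly (iii).

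The main obstacle is (v), where central orthocompleteness is genuinely needed. Let $p=\bigvee_{i}p_i$ exist in $P$. By Theorem \ref{th:centercompl} the center is a complete boolean algebra, so $c:=\bigvee_i\gamma p_i$ exists there, and I would first show $c=\gamma p$: each $p_i\le\gamma p_i\le c$ with $c$ central, so $c$ is an upper bound of $(p_i)$ in $P$, whence $p\le c$ and $\gamma p\le\gamma c=c$, while $\gamma p_i\le\gamma p$ for all $i$ forces $c\le\gamma p$. It remains to confirm that $c$ is the supremum of $(\gamma p_i)$ in $P$, not merely in the center. Given any upper bound $b\in P$, set $g:=c\wedge b^{\perp}$; since $c$ is central it is compatible with $b$, and $g\perp\gamma p_i$ for every $i$ because $\gamma p_i\le b$. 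The infinite distributive law valid in a complete boolean algebra then gives $g=g\wedge\bigvee_i\gamma p_i=\bigvee_i(g\wedge\gamma p_i)=0$, so $c\le b$; thus $\bigvee_i\gamma p_i=c=\gamma p$ in $P$. I expect the delicate point throughout to be precisely this reliance on distributivity of the center---the step that fails for a general OML and is what forces the passage to the complete boolean center supplied by central orthocompleteness. (Alternatively, (i)--(v) may be quoted directly from \cite[Theorems 5.2 and 6.10]{COEA} via the effect-algebra structure on $P$ of Remark \ref{rm:OMLeffectalg}.)
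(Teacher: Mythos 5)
Your items (i)--(iv) are sound: the minimality/monotonicity arguments, and the sublemma $\gamma(p\wedge c)=\gamma p\wedge c$ for central $c$ (via the compatibility decomposition $p=(p\wedge c)\vee(p\wedge c^{\perp})$ and distributivity among the mutually central elements $\gamma(p\wedge c)$, $c$, $c^{\perp}$) are correct and self-contained. Note for context that the paper itself gives no proof of this theorem at all --- it quotes it from \cite[Theorems 5.2 and 6.10]{COEA}, which is exactly your parenthetical alternative --- so any direct argument is already a departure from the paper; the question is only whether yours is complete.

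It is not, quite: the last step of your (v) has a genuine gap. You set $g:=c\wedge b^{\perp}$ for an arbitrary upper bound $b\in P$ of $(\gamma p_i)_{i\in I}$ and conclude $g=g\wedge\bigvee_i\gamma p_i=\bigvee_i(g\wedge\gamma p_i)=0$ by ``the infinite distributive law valid in a complete boolean algebra.'' But that law applies only to elements \emph{of} the boolean algebra in question, here the center $P\cap C(A)$, and $g$ is in general not central: $b$ is an arbitrary projection, so $c\wedge b^{\perp}=cb^{\perp}$ need not commute with everything. What you are asserting --- that the supremum of the $\gamma p_i$ computed in the center distributes against an arbitrary projection of $P$ --- is essentially the statement that this supremum is also the supremum in $P$, i.e., the very thing this half of (v) must prove (it is Theorem \ref{th:sup/infincenter}(ii) of the paper, itself quoted from \cite{COEA} and in any case stated only after this theorem). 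The natural repair via Sasaki projections ($\phi_g$ preserves arbitrary existing suprema, and $\phi_g(x)=g\wedge x$ when $x$ is compatible with $g$) is also circular here, since it requires the supremum $\bigvee_i\gamma p_i$ to exist \emph{in $P$}. The gap is fixable with tools you already have: given the upper bound $b$, each $\gamma p_i\le b$ gives $\gamma p_i\perp b^{\perp}$, hence by your (ii) and (iv), $\gamma p_i=\gamma(\gamma p_i)\perp\gamma(b^{\perp})$, i.e., $\gamma p_i\le(\gamma(b^{\perp}))^{\perp}$ for every $i$. Since $(\gamma(b^{\perp}))^{\perp}$ is a \emph{central} upper bound, the least upper bound $c$ in the center satisfies $c\le(\gamma(b^{\perp}))^{\perp}\le b$, the last inequality because $b^{\perp}\le\gamma(b^{\perp})$. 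This shows $c$ is the supremum of $(\gamma p_i)$ in $P$ and, combined with your (correct) identification $c=\gamma p$, completes (v).
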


\begin{theorem} \label{th:sup/infincenter}
{\rm(i)} The center $P\cap C(A)$ of $P$ is sup/inf-closed in $P$. 
{\rm(ii)} Let $(c\sb{i})\sb{i\in I}$ be a family of elements in the 
center $P\cap C(A)$ of $P$. Since $P\cap C(A)$ is a complete boolean 
algebra, the supremum $p$ and the infimum $q$ of $(c\sb{i})\sb{i\in I}$ 
exist in $P\cap C(A)$; moreover, $p$ and $q$ are, respectively, the 
supremum and the infimum of $(c\sb{i})\sb{i\in I}$ in $P$.
\end{theorem}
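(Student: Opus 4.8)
The plan is to dispose of part (i) at once and then focus on part (ii), which carries the real content. For part (i) I would simply invoke Lemma \ref{lm:SupInfQ} in the special case $B=A$: since $C(A)=\bigcap_{b\in A}C(b)$, that lemma already states that $P\cap C(A)$ is sup/inf-closed in $P$, so nothing more is required.

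For part (ii), the existence of $p:=\bigvee_{i\in I}c_i$ and $q:=\bigwedge_{i\in I}c_i$ in $P\cap C(A)$ is immediate from Theorem \ref{th:centercompl}, which asserts that the center is a complete boolean algebra. It then remains to verify that these boolean suprema and infima coincide with the supremum and infimum of $(c_i)_{i\in I}$ computed in the larger lattice $P$. Because the partial order on $P\cap C(A)$ is the restriction of that on $P$, $p$ is an upper bound and $q$ is a lower bound of the family in $P$; the issue is extremality among \emph{all} bounds in $P$, including non-central ones.

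The infimum is the routine direction. Given any lower bound $r\in P$ with $r\leq c_i$ for every $i$, monotonicity of the central cover (Theorem \ref{th:centcovprops}(ii)) together with $\gamma c_i=c_i$ (as each $c_i$ is central) yields $\gamma r\leq c_i$ for all $i$; since $\gamma r$ is central, it lies below the boolean infimum, so $r\leq\gamma r\leq q$. Hence $q$ is the greatest lower bound in $P$.

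The supremum is the delicate direction, and I expect it to be the main obstacle, since the naive attempt breaks down: from an upper bound $r\geq c_i$ one only obtains $c_i\leq\gamma r$, hence $p\leq\gamma r$, which is weaker than the desired $p\leq r$. To sharpen this I would pass to orthocomplements. From $c_i\leq r$ we have $c_i\perp r^{\perp}$, and since each $c_i$ is central, the orthogonality equivalences in Theorem \ref{th:centcovprops}(iv) upgrade this to $c_i\perp\gamma(r^{\perp})$, i.e.\ $c_i\leq\gamma(r^{\perp})^{\perp}$. As $\gamma(r^{\perp})^{\perp}$ is central, the boolean supremum obeys $p\leq\gamma(r^{\perp})^{\perp}$, that is $p\perp\gamma(r^{\perp})$; applying Theorem \ref{th:centcovprops}(iv) once more then gives $p\perp r^{\perp}$, whence $p\leq r$. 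This shows $p$ is the least upper bound in $P$. The infimum conclusion could alternatively be recovered from this by De~Morgan duality, but the direct central-cover argument above is shorter.
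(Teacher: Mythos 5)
Your proof is correct, and for part (ii) it uses exactly the tool the paper prescribes, namely the central cover mapping; the difference is one of self-containment rather than of substance. The paper's own proof of (ii) is only a pointer: it says that, using the central cover mapping, one proceeds as in the proof of Theorem 5.2 (xiii) of the cited paper on centrally orthocomplete effect algebras, whereas you supply an explicit argument. Both halves of your argument check out against Theorem \ref{th:centcovprops}: for the infimum, $r\leq c\sb{i}$ gives $\gamma r\leq\gamma c\sb{i}=c\sb{i}$ by part (ii) of that theorem, so the central projection $\gamma r$ lies below the boolean infimum $q$ and $r\leq\gamma r\leq q$; for the supremum, $c\sb{i}=\gamma c\sb{i}\perp r\sp{\perp}$ yields $c\sb{i}\perp\gamma(r\sp{\perp})$ by part (iv), hence $p\leq(\gamma(r\sp{\perp}))\sp{\perp}$ in the boolean center, and a second application of (iv) gives $p\perp r\sp{\perp}$, i.e.\ $p\leq r$. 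You also correctly identify that the naive estimate $p\leq\gamma r$ is insufficient, and the orthocomplement detour is the right fix. For part (i) you take a genuinely different (but equally immediate) route: the paper invokes the general OML fact, recorded in Section \ref{sc:OML}, that the center of any OML is sup/inf-closed in it, while you specialize Lemma \ref{lm:SupInfQ} to $B=A$ and combine it with Theorem \ref{th:centP}; your version rests on the spectral-resolution argument behind that lemma, the paper's on pure lattice theory, and both are one-liners.
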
 

\begin{proof}
Part (i) follows from the fact that the center of any OML is 
sup/inf-closed in the OML. Using the central cover mapping, one 
proceeds as in the proof of \cite[Theorem 5.2 (xiii)]{COEA} to 
prove part (ii).
\end{proof}

\section{Equivalence of projections} \label{sc:equivalenceproj} 

\emph{The assumption that $P$ is centrally orthocomplete is still 
in force.} In the next definition, we denote the composition of symmetry 
transformations by juxtaposition.

\begin{definition}
Let ${\mathcal J}$ be the set of all mappings $J\colon A\to A$ of 
the form $J=J\sb{s\sb{n}}J\sb{s\sb{n-1}}\cdots J\sb{s\sb{1}}$ where 
$s\sb{1},s\sb{2},...,s\sb{n}$ are symmetries in $A$. 
\end{definition}

\noindent Thus, ${\mathcal J}$ is the group under composition 
generated by the symmetry transformations on $A$. As a consequence 
of Theorem \ref{th:symmetry&quadratic}, the transformations $J\in 
{\mathcal J}$ have the following properties.  

\begin{theorem} \label{th:Jprops}
Let $J\in{\mathcal J}$, $a,b\in A$, and $e,f\in P$. Then{\rm:} 
{\rm(i)} The $J$ is an order, linear, and Jordan automorphism 
of $A$. {\rm(ii)} $J$ restricted to $P$ is an OML-automorphism 
of $P$. {\rm(iii)} If $ab=ba$, then $J(ab)=J(a)J(b)=J(b)J(a)
\in A$. {\rm(iv)} If $J(a)J(b)=J(b)J(a)$, then $ab=ba\in A$. 
{\rm(v)} $(J(a))\dg=J(a\dg)$. 
\end{theorem}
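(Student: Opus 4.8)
The plan is to prove all five statements by induction on the number $n$ of symmetry transformations in a representation $J=J\sb{s\sb{n}}J\sb{s\sb{n-1}}\cdots J\sb{s\sb{1}}$, taking Theorem \ref{th:symmetry&quadratic} as the base case $n=1$. For the inductive step I would factor $J=J\sb{s\sb{n}}\circ K$, where $K:=J\sb{s\sb{n-1}}\cdots J\sb{s\sb{1}}\in{\mathcal J}$ is a product of $n-1$ symmetry transformations, and then apply the single-symmetry results to the outer factor $J\sb{s\sb{n}}$ in combination with the inductive hypothesis on $K$. Properties (i) and (ii) require essentially no work: by Theorem \ref{th:symmetry&quadratic} (i),(ii) each $J\sb{s\sb{i}}$ is a linear, order, and Jordan automorphism of $A$ and restricts to an OML-automorphism of $P$, and each of these classes of maps is closed under composition, so any composite $J$ inherits the properties directly.

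The substance lies in (iii). Here, assuming $ab=ba$, the inductive hypothesis applied to $K$ gives $K(ab)=K(a)K(b)=K(b)K(a)$, so in particular $K(a)$ and $K(b)$ commute. Applying Theorem \ref{th:symmetry&quadratic} (iii) to the single symmetry $s\sb{n}$ and the commuting pair $K(a),K(b)$ then yields $J\sb{s\sb{n}}(K(a)K(b))=J\sb{s\sb{n}}(K(a))J\sb{s\sb{n}}(K(b))=J\sb{s\sb{n}}(K(b))J\sb{s\sb{n}}(K(a))$, which is exactly $J(ab)=J(a)J(b)=J(b)J(a)$, as required.

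For (iv) I would exploit the group structure of ${\mathcal J}$: since each $J\sb{s\sb{i}}$ is its own inverse by Theorem \ref{th:symmetry&quadratic} (i), the inverse $J\sp{-1}=J\sb{s\sb{1}}\cdots J\sb{s\sb{n}}$ again lies in ${\mathcal J}$. Thus, if $J(a)J(b)=J(b)J(a)$, then applying part (iii)---already established for \emph{every} member of ${\mathcal J}$---to $J\sp{-1}$ and the commuting pair $J(a),J(b)$ gives $ab=J\sp{-1}(J(a))J\sp{-1}(J(b))=J\sp{-1}(J(b))J\sp{-1}(J(a))=ba$. Finally, (v) follows by a parallel induction: the inductive hypothesis gives $(K(a))\dg=K(a\dg)$, and Theorem \ref{th:symmetry&quadratic} (v) applied to $s\sb{n}$ yields $(J(a))\dg=(J\sb{s\sb{n}}(K(a)))\dg=J\sb{s\sb{n}}((K(a))\dg)=J\sb{s\sb{n}}(K(a\dg))=J(a\dg)$.

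The only delicate points are (iii) and (iv), where the relevant products such as $J(a)J(b)$ need not lie in $A$ and the single-symmetry statements of Theorem \ref{th:symmetry&quadratic} carry commutation hypotheses. The key idea that dissolves this obstacle is that the inductive hypothesis guarantees that the intermediate images $K(a)$ and $K(b)$ remain a \emph{commuting} pair, so that at each stage the single-symmetry version of (iii) genuinely applies and every product encountered stays within $A$; for (iv) the closure of ${\mathcal J}$ under inversion lets one reduce the claim to the already-proved statement (iii) rather than re-running the delicate argument of Theorem \ref{th:symmetry&quadratic} (iv) across the composition.
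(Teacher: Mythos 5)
Your proposal is correct and is exactly the argument the paper intends: the paper states Theorem \ref{th:Jprops} without proof as ``a consequence of Theorem \ref{th:symmetry&quadratic},'' and your induction on the number of factors---with the key observations that the inductive hypothesis keeps $K(a)$ and $K(b)$ a commuting pair for (iii), and that closure of ${\mathcal J}$ under inversion reduces (iv) to (iii), mirroring the involution trick in the paper's proof of Theorem \ref{th:symmetry&quadratic} (iv)---is precisely the routine verification the authors left implicit.
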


\begin{definition} \label{df:equivalence} 
Let $p,q\in P$. (i) The projections $p$ and $q$ are 
${\mathcal J}$-\emph{equivalent}, in symbols $p\sim q$, iff 
there exists $J\in{\mathcal J}$ such that $J(p)=q$. If 
$p\sim q$ and ${\mathcal J}$-equivalence is understood, we 
may simply say that $p$ and $q$ are \emph{equivalent}. (ii) The 
projections $p$ and $q$ are \emph{related} iff there are nonzero 
projections $p\sb{1}\leq p$ and $q\sb{1}\leq q$ such that $p\sb{1}
\sim q\sb{1}$. If $p$ and $q$ are not related, we say that they 
are \emph{unrelated}. (iii) $p$ is \emph{invariant} iff it is 
unrelated to its orthocomplement $p\sp{\perp}$. (iv) If there 
exists a projection $q\sb{1}\leq q$ such that $p\sim q\sb{1}$, 
we say that $p$ is \emph{sub-equivalent} to $q$, in symbols, 
$p\preceq q$.
\end{definition}

\noindent We note that $\sim$ is the transitive closure of the 
relation of being exchangeable by a symmetry and that, as a 
consequence of Theorems \ref{th:exstrongpersp} and \ref{th:perspex}, 
two projections $p$ and $q$ are equivalent iff they are 
projective in the OML $P$. 

Now we investigate the extent to which the equivalence relation 
$\sim$ is a \emph{Sherstnev-Kalinin} {\rm(}SK-{\rm)} 
\emph{congruence} on the OML $P$ \cite[\S 7]{HandD}. By 
definition, an SK-congruence satisfies axioms SK1--SK4 in 
\cite[Definition 7.2 and Remarks 7.3]{HandD}.  For these axioms 
we have: 

\smallskip 

\bul (SK1) Obviously, \emph{if $e\in P$ then} $e\sim 0\Rightarrow e=0$.

\smallskip

\bul (SK2) Axiom SK2, complete additivity (and even finite 
additivity) of $\sim$, is problematic. Theorem \ref{th:weakadditivity} 
which assumes completeness of the OML $P$, is a weak substitute 
for axiom SK2 and Lemma \ref{le:finadit} is a weak substitute 
for finite additivity.

\smallskip

\bul (SK3d) Axiom SK3d (divisibility) holds, in fact we have 
the following \emph{complete divisibility} property \cite[p.4]
{Loom}: \emph{If $(e\sb{i})\sb{i\in I}$ is an orthogonal family in 
$P$, $p\in P$, and $p\sim\bigvee\sb{i\in I}e\sb{i}$, then there 
exists an orthogonal family $(p\sb{i})\sb{i\in I}$ such that 
$p=\bigvee\sb{i\in I}p\sb{i}$ and $p\sb{i}\sim e\sb{i}$ for all 
$i\in I$.} Indeed, if $J\in{\mathcal J}$ with $p=J(\bigvee\sb
{i\in I}e\sb{i})=\bigvee\sb{i\in I}J(e\sb{i})$, then $p\sb{i}:=
J(e\sb{i})\sim e\sb{i}$ for all $i\in I$.

\smallskip

\bul (SK3e) Combining Theorems \ref{th:pqef} and \ref{th:perspex} 
(i), we find that $\sim$ satisfies axiom SK3e: \emph{If $p,q,e,f
\in P$, $p\perp q$, $e\perp f$, and $p\vee q=e\vee f$, then 
there exist $p\sb{1}, p\sb{2}, q\sb{1}, q\sb{2}\in P$ such that 
$p\sb{1}\perp p\sb{2},\,q\sb{1}\perp q\sb{2},\, p\sb{1}\perp q
\sb{1},\,p\sb{2}\perp q\sb{2},\,p\sb{1}\vee p\sb{2}=p,\,q\sb{1}
\vee q\sb{2}=q,\,p\sb{1}\vee q\sb{1}\sim e$, and $p\sb{2}\vee q
\sb{2}\sim f$.}  

\smallskip

\bul (SK4) \emph{Non-orthogonal projections are related, in fact, 
they have nonzero subprojections that are exchanged by a symmetry} 
(Theorem \ref{th:Sym&Sasaki} (iv)).

\smallskip

\noindent As we shall see, in spite of the fact that $\sim$ may 
not qualify as an SK-congruence, it does enjoy a number of 
important properties.

\begin{theorem}\label{th:key} 
Let $0\not=e,f\in P$ with $e\sim f$. Then $e$ and $f$ have 
nonzero subprojections that are exchanged by a symmetry. 
\end{theorem}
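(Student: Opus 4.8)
The plan is to induct on the length $n$ of a chain of symmetries realizing the equivalence $e\sim f$. Since $e\sim f$, Definition \ref{df:equivalence} supplies $J\in{\mathcal J}$ with $J(e)=f$, say $J=J\sb{s\sb n}\cdots J\sb{s\sb 1}$; setting $g\sb i:=J\sb{s\sb i}\cdots J\sb{s\sb 1}(e)$ produces projections $e=g\sb 0,g\sb 1,\dots,g\sb n=f$ in which $s\sb i$ exchanges $g\sb{i-1}$ and $g\sb i$ in the sense of Definition \ref{de:exsym}. Each $g\sb i$ is nonzero because $J\sb{s\sb i}$ is an automorphism (Theorem \ref{th:symmetry&quadratic}). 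I would then prove, by strong induction on $n$, the statement that any two nonzero projections joined by such a chain of length $n$ possess nonzero subprojections exchanged by a single symmetry; specializing to $g\sb 0=e$ and $g\sb n=f$ yields the theorem.

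For the base cases I would argue as follows. If $n=1$, then $e$ and $f$ are themselves exchanged by a symmetry, and we take the subprojections to be $e$ and $f$. If $n=2$, we have $s\sb 2 s\sb 1 e s\sb 1 s\sb 2=f$, and I split on orthogonality: when $e\not\perp f$, Theorem \ref{th:Sym&Sasaki}(iv) directly furnishes nonzero $e\sb 1\leq e$ and $f\sb 1\leq f$ exchanged by a symmetry; when $e\perp f$, Theorem \ref{th:perspex}(ii) applies verbatim and produces a single symmetry exchanging all of $e$ and $f$. Note that the non-orthogonal alternative is in fact available for every $n$, since Theorem \ref{th:Sym&Sasaki}(iv) requires only $e\not\perp f$, so the genuinely new content lies in the orthogonal situation with a long chain.

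For the inductive step, take $n\geq 3$ and assume the statement for all shorter chains. The tail $g\sb 1\leftrightarrow g\sb 2\leftrightarrow\cdots\leftrightarrow g\sb n=f$ is a chain of length $n-1$ joining the nonzero projections $g\sb 1$ and $f$, so the induction hypothesis yields nonzero $h\leq g\sb 1$ and $f\sb 1\leq f$ exchanged by a symmetry. I then pull $h$ back across the first symmetry: since $s\sb 1$ exchanges $e$ and $g\sb 1$, the map $J\sb{s\sb 1}$ is an order-preserving involution on $P$ with $J\sb{s\sb 1}(g\sb 1)=e$ (Theorem \ref{th:symmetry&quadratic}), so $e':=J\sb{s\sb 1}(h)=s\sb 1 h s\sb 1$ is a nonzero projection with $e'\leq e$ and $J\sb{s\sb 1}(e')=h$; that is, $s\sb 1$ exchanges $e'$ and $h$. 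Consequently $e'\leftrightarrow h\leftrightarrow f\sb 1$ is a chain of length $2$ joining the nonzero projections $e'$ and $f\sb 1$, and the induction hypothesis applied at length $2$ delivers nonzero $e\sb 1\leq e'$ and $f\sb 2\leq f\sb 1$ exchanged by a symmetry. Then $e\sb 1\leq e$ and $f\sb 2\leq f$ are the required subprojections.

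The crux of the argument, and the step I expect to require the most care, is the reduction of an arbitrarily long chain to a chain of length two between suitable subprojections. Everything hinges on the fact that length-two chains are already completely understood: Theorem \ref{th:perspex}(ii) upgrades an orthogonal length-two equivalence to an honest exchanging symmetry, while Theorem \ref{th:Sym&Sasaki}(iv) disposes of every non-orthogonal pair. The pull-back through $s\sb 1$ is what shortens the chain without losing control of the subprojection containments, and checking that $e'=s\sb 1 h s\sb 1$ is a nonzero projection dominated by $e$ is precisely where the involutive, order-automorphism properties of $J\sb{s\sb 1}$ from Theorem \ref{th:symmetry&quadratic} are used.
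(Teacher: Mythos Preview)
Your proof is correct and follows essentially the same approach as the paper's: induction on the length of the symmetry chain, reducing to a length-two chain between nonzero subprojections, and then splitting on orthogonality using Theorem \ref{th:Sym&Sasaki}(iv) and Theorem \ref{th:perspex}(ii). The only cosmetic differences are that the paper peels off the \emph{last} symmetry $s\sb n$ rather than the first (applying the induction hypothesis to the chain from $e$ to $s\sb n f s\sb n$ and then pushing forward through $s\sb n$), and it handles the length-two endgame inline rather than isolating $n=2$ as a separate base case.
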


\begin{proof}
As $e\sim f$, there are symmetries $s\sb{1},s\sb{2},...,s
\sb{n}\in A$ such that 
\[
f=s\sb{n}s\sb{n-1}\cdots s\sb{2}s\sb{1}es\sb{1}s\sb{2}\cdots s
\sb{n-1}s\sb{n}.
\] 
\noindent The proof is by induction on $n$. For $n=1$ the 
desired conclusion is obvious. Suppose that the conclusion holds 
for all sequences of symmetries of length $n-1$ and let $r:=s
\sb{n}fs\sb{n}=s\sb{n-1}\cdots s\sb{2}s\sb{1}es\sb{1}s\sb{2}
\cdots s\sb{n-1}$. By the induction hypothesis, there are nonzero 
subprojections $p\leq e$ and $q\leq r$ and a symmetry $s\in S$ 
such that $sps=q$. Let $k:=s\sb{n}qs\sb{n}\leq s\sb{n}rs
\sb{n}=f$. If $p\not\perp k$, then by SK4, there are nonzero 
subprojections $p\sb{1}\leq p\leq e$ and $k\sb{1}\leq k\leq f$ 
that are exchanged by a symmetry, and our proof is complete. Thus, 
we can and do assume that $p\perp k$. Therefore, since  $k=s
\sb{n}spss\sb{n}$, it follows from Theorem \ref{th:perspex} (ii) 
that $p$ and $k$ are exchanged by a symmetry.
\end{proof}

In \cite[\S 7]{HandD}, for an SK-congruence, the infimum of all 
the invariant elements that dominate an element is called the 
\emph{hull} of that element. Thus, by the following theorem, 
the central cover mapping is an analogue for the equivalence 
relation $\sim$ of a hull mapping for an SK-congruence.

\begin{theorem}\label{th:invarcentral} 
Let $h\in P$. Then the following conditions are mutually 
equivalent{\rm: (i)} $h$ is invariant. {\rm(ii)} If $p\in P$, 
$s$ is a symmetry in $A$, and $sps\leq h$, then $p\leq h$. 
{\rm(iii)} If $p\in P$ and $p\preceq h$, then $p\leq h$. 
{\rm(iv)} If $q\in P$, then $q\wedge h=0\Rightarrow q\perp h$. 
{\rm(v)} $h$ is central.
\end{theorem}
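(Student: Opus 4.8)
The plan is to prove the cycle of implications $(\mathrm{v})\Rightarrow(\mathrm{i})\Rightarrow(\mathrm{ii})\Rightarrow(\mathrm{iii})\Rightarrow(\mathrm{iv})\Rightarrow(\mathrm{v})$, so that the five conditions become mutually equivalent. Two of the steps are immediate. For $(\mathrm{v})\Rightarrow(\mathrm{i})$: if $h$ is central then every symmetry $s$ commutes with $h$, so $J_{s}(h)=shs=h$, whence $J(h)=h$ for every $J\in\mathcal{J}$; since each such $J$ is an order-preserving OML-automorphism of $P$, any $a\leq h$ with $a\sim b$ satisfies $b=J(a)\leq J(h)=h$, so a hypothetical $0\neq b\leq h^{\perp}$ would give $b\leq h\wedge h^{\perp}=0$, a contradiction, and $h$ is unrelated to $h^{\perp}$. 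For $(\mathrm{iii})\Rightarrow(\mathrm{iv})$: if $q\wedge h=0$ but $q\not\perp h$, then axiom SK4 (Theorem \ref{th:Sym&Sasaki}(iv)) supplies nonzero $q_{1}\leq q$ and $h_{1}\leq h$ exchanged by a symmetry, so $q_{1}\sim h_{1}\leq h$ gives $q_{1}\preceq h$; by $(\mathrm{iii})$ we get $q_{1}\leq h$, forcing $q_{1}\leq q\wedge h=0$, a contradiction.

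Next I would dispose of $(\mathrm{ii})\Rightarrow(\mathrm{iii})$ and $(\mathrm{iv})\Rightarrow(\mathrm{v})$. For $(\mathrm{ii})\Rightarrow(\mathrm{iii})$, let $p\preceq h$, say $q_{1}=J_{s_{n}}\cdots J_{s_{1}}(p)\leq h$, and induct on $n$. The case $n=0$ is trivial; for $n\geq 1$ set $p':=J_{s_{1}}(p)=s_{1}ps_{1}$, so that $q_{1}=J_{s_{n}}\cdots J_{s_{2}}(p')$ is realized by a chain of length $n-1$. The inductive hypothesis gives $p'\leq h$, i.e.\ $s_{1}ps_{1}\leq h$, and then $(\mathrm{ii})$ yields $p\leq h$. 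For $(\mathrm{iv})\Rightarrow(\mathrm{v})$, fix an arbitrary $q\in P$ and put $r:=q\wedge(q\wedge h)^{\perp}$; since $r\leq q$ and $r\leq(q\wedge h)^{\perp}$, we get $r\wedge h\leq(q\wedge h)\wedge(q\wedge h)^{\perp}=0$, so $(\mathrm{iv})$ forces $r\leq h^{\perp}$ (and one checks $r=q\wedge h^{\perp}$, since $r\leq q\wedge h^{\perp}\leq q\wedge(q\wedge h)^{\perp}=r$). By orthomodularity $q=(q\wedge h)\vee r=(q\wedge h)\vee(q\wedge h^{\perp})$, which is the standard criterion for $q$ to be compatible with $h$. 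As $q$ was arbitrary, $h$ lies in the center $P\cap C(A)$ of $P$ (Theorem \ref{th:centP}), i.e.\ $h$ is central.

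The crux is $(\mathrm{i})\Rightarrow(\mathrm{ii})$, and I expect it to be the main obstacle. Assume $h$ is invariant and $sps\leq h$, and suppose for contradiction that $p\not\leq h$. By property (v) of the Sasaki projection, $\phi_{h^{\perp}}(p)=0$ iff $h^{\perp}\perp p$ iff $p\leq h$; hence $p_{1}:=\phi_{h^{\perp}}(p)\leq h^{\perp}$ is nonzero. By Theorem \ref{th:Sym&Sasaki}(i), applied with $e=h^{\perp}$ and $f=p$, the projections $p_{1}=\phi_{h^{\perp}}(p)$ and $p_{1}':=\phi_{p}(h^{\perp})\leq p$ are exchanged by a symmetry, so $p_{1}\sim p_{1}'$ and, since $\sim$ preserves nonzero projections, $p_{1}'\neq 0$. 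Applying the symmetry $s$ to $p_{1}'\leq p$ gives $a:=sp_{1}'s\leq sps\leq h$ with $a\neq 0$ and $p_{1}'\sim a$; transitivity of $\sim$ then yields $p_{1}\sim a$ with $0\neq p_{1}\leq h^{\perp}$ and $0\neq a\leq h$, showing that $h$ is related to $h^{\perp}$ and contradicting invariance. Hence $p\leq h$.

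This last step is the heart of the matter because invariance is a condition relating $h$ to $h^{\perp}$, whereas the hypothesis of $(\mathrm{ii})$ relates $p$ to its transform $sps$; bridging the two requires the parallelogram law to manufacture, out of the failure $p\not\leq h$, a nonzero subprojection of $h^{\perp}$ that is $\sim$-equivalent (via $s$) to a subprojection of $h$. Everything else is either a one-line consequence of SK4, an induction on word length in $\mathcal{J}$, or a routine orthomodular computation.
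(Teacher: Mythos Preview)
Your proof is correct and follows the same cyclic scheme $(\mathrm{v})\Rightarrow(\mathrm{i})\Rightarrow(\mathrm{ii})\Rightarrow(\mathrm{iii})\Rightarrow(\mathrm{iv})\Rightarrow(\mathrm{v})$ as the paper, with essentially the same ideas at each step. The only minor differences are cosmetic: in $(\mathrm{i})\Rightarrow(\mathrm{ii})$ the paper phrases the contradiction as ``$p$ related to $h^{\perp}$'' and invokes SK4 at the end, whereas you unfold SK4 explicitly via the Sasaki projections $\phi_{h^{\perp}}(p)$ and $\phi_{p}(h^{\perp})$; and in $(\mathrm{iv})\Rightarrow(\mathrm{v})$ the paper shows $\phi_{q^{\perp}}(h^{\perp})\wedge h=0$ and uses the Sasaki compatibility criterion, while your orthomodular decomposition $q=(q\wedge h)\vee r$ with $r=q\wedge(q\wedge h)^{\perp}$ reaches the same conclusion by a direct compatibility check.
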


\begin{proof}
(i) $\Rightarrow$ (ii). Let $h$ be invariant $p\in P$, and 
let $s\in A$ be a symmetry such that $h\sb{1}:=sps\leq h$. 
Aiming for a contradiction, we assume that $p$ is related to 
$h\sp{\perp}$, i.e., there exist subprojections $0\not=p
\sb{1}\leq p$ and $0\not=q\sb{1}\leq h\sp{\perp}$ with $p
\sb{1}\sim q\sb{1}$. Now $h\sb{1}=sps=s(p\sb{1}\vee(p
\wedge p\sb{1}\sp{\,\perp}))s=sp\sb{1}s\vee s(p\wedge p
\sb{1}\sp{\,\perp})s $, whence $0\not=h\sb{2}:=sp\sb{1}s\leq h
\sb{1}\leq h$. But then $h\geq h\sb{2}\sim p\sb{1}\sim q
\sb{1}\leq h\sp{\perp}$, so $h$ is related to $h\sp{\perp}$, 
contradicting the invariance of $h$. Therefore, $p$ is 
unrelated to $h\sp{\perp}$ and it follows from SK4 that 
$p\perp h\sp{\perp}$, i.e., $p\leq h$.

(ii) $\Rightarrow$ (iii). If $p\preceq h$, there are 
symmetries $s\sb{1},s\sb{2},...s\sb{n}\in A$ such that 
$s\sb{n}s\sb{n-1}\cdots s\sb{1}ps\sb{1}s\sb{2}\cdots s\sb{n}
\leq h$, whence (iii) follows from (ii) by induction on 
$n$.
 
(iii) $\Rightarrow$ (iv). Assume that (iii) holds and that 
$q\not\perp h$. By SK4, there exist subprojections $0\not=
q\sb{1}\leq q$ and $0\not=h\sb{1}\leq h$ with $q\sb{1}\sim 
h\sb{1}$. Then $q\sb{1}\preceq h$, so $q\sb{1}\leq h$ by 
(iii), and we have $0\not=q\sb{1}\leq q\wedge h$, whence 
$q\wedge h\not=0$.  

(iv) $\Rightarrow$ (v). Assume that (iv) holds and let 
$q\in P$. Then $\phi\sb{q\sp{\perp}}(h\sp{\perp})\wedge h
=(q\vee h\sp{\perp})\wedge q\sp{\perp}\wedge h=0$, so 
$\phi\sb{q\sp{\perp}}(h\sp{\perp})\leq h\sp{\perp}$ by 
(iii). Therefore, $q\sp{\perp}$ is compatible with $h\sp{\perp}$, 
and it follows that $q$ is compatible with $h$. Since $q$  
is an arbitrary element of $P$, it follows that $h$ is in the 
center of $P$.

(iv) $\Rightarrow$ (i). Assume that $h\in P\cap C(A)$ and, 
aiming for a contradiction, suppose that $h$ is related to 
$h\sp{\perp}$. Then there exist subprojections $0\not=h\sb{1}
\leq h$ and $0\leq q\sb{1}\leq h\sp{\perp}$ such that 
$h\sb{1}\sim q\sb{1}$. Thus there are symmetries $s\sb{1}, 
s\sb{2},...,s\sb{n}\in A$ such that $q\sb{1}=s\sb{n}s\sb{n-1}
\cdots s\sb{1}h\sb{1}s\sb{1}\cdots s\sb{n-1}s\sb{n}$. As $h\sb{1}
\leq h$, we have $s\sb{1}h\sb{1}s\sb{1}\leq s\sb{1}hs\sb{1}=
hs\sb{1}\sp{\,2}=h$, and by induction on $n$, $q\sb{1}\leq h$.
Consequently, $q\sb{1}\leq h\wedge h\sp{\perp}=0$, contradicting 
$q\sb{1}\not=0$.
\end{proof}

\begin{corollary} \label{co:punrelc}
If $p\in P$ and $c\in P\cap C(A)$, then $p$ and $c$ are unrelated 
iff $p\perp c$.
\end{corollary}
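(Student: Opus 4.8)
The plan is to establish the two implications separately, relying on the characterization of central projections as the invariant projections (Theorem \ref{th:invarcentral}) together with axiom SK4 (Theorem \ref{th:Sym&Sasaki}(iv)).

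For the implication $p\perp c\Rightarrow p$ and $c$ unrelated, I would first note that, since $c$ is central, it is invariant by Theorem \ref{th:invarcentral}, so $c$ is unrelated to $c\sp{\perp}$. The idea is then to exploit the evident monotonicity of the relation of being related: if $p$ were related to $c$, the witnessing nonzero subprojections $p\sb{1}\leq p$ and $c\sb{1}\leq c$ with $p\sb{1}\sim c\sb{1}$ would, because $p\leq c\sp{\perp}$, simultaneously witness that $c\sp{\perp}$ is related to $c$, contradicting the invariance of $c$. Hence $p$ and $c$ must be unrelated.

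For the converse I would argue by contraposition: assuming $p\not\perp c$, apply SK4 (Theorem \ref{th:Sym&Sasaki}(iv)) directly to the pair $p,c$ to obtain nonzero subprojections $p\sb{1}\leq p$ and $c\sb{1}\leq c$ exchanged by a symmetry $s$. Since such an $s$ gives $J\sb{s}\in{\mathcal J}$ with $J\sb{s}(p\sb{1})=c\sb{1}$, we get $p\sb{1}\sim c\sb{1}$, so $p$ and $c$ are related.

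I do not anticipate a serious obstacle: both directions are short consequences of results already in hand. The only point needing care is the recognition that the centrality of $c$ is precisely what drives the first implication, by way of its equivalence with invariance; the symmetry and monotonicity of the relation ``related'' then finish the argument with no computation required.
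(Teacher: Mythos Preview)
Your proposal is correct and follows essentially the same route as the paper. The only cosmetic difference is that, for the implication $p\perp c\Rightarrow$ unrelated, the paper invokes condition (iii) of Theorem \ref{th:invarcentral} (namely $p\sb{1}\preceq c\Rightarrow p\sb{1}\leq c$) to force $p\sb{1}=0$, whereas you invoke the equivalent condition (i) (invariance, i.e., $c$ unrelated to $c\sp{\perp}$); both arguments unwind to the same contradiction.
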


\begin{proof}
If $p$ and $c$ are unrelated, then $p\perp c$ by SK4. Conversely, 
suppose $p\perp c$, $p\sb{1}\leq p$, $c\sb{1}\leq c$ and $p\sb{1}
\sim c\sb{1}$. Then $p\sb{1}\preceq c$, so $p\sb{1}\leq c$ by 
Theorem \ref{th:invarcentral}. But $p\sb{1}\leq p\leq c\sp{\perp}$, 
so $p\sb{1}\leq c\wedge c\sp{\perp}=0$; hence $p$ and $c$ are 
unrelated. 
\end{proof}

\begin{theorem}\label{th:centcovsup} 
Suppose that $P$ is a complete OML and $p\in P$. Then \newline  
$\gamma p=\bigvee\{q\in P; q\preceq p\}$.
\end{theorem}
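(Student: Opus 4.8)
The plan is to set $\Sigma:=\{q\in P:q\preceq p\}$ and $w:=\bigvee\Sigma$, the latter supremum existing precisely because $P$ is assumed complete. Since the identity lies in $\mathcal{J}$ we have $p\sim p$, so $p\in\Sigma$ and hence $p\le w$. I would then prove the two inequalities $w\le\gamma p$ and $\gamma p\le w$ separately, the first being routine and the second being the substantive part. The whole strategy for the hard inequality rests on recognizing that, since $p\le w$ and $\gamma p$ is by definition the least \emph{central} projection dominating $p$, it suffices to show that $w$ itself is central; then $\gamma p\le w$ is immediate.

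For the easy inclusion, let $q\in\Sigma$, so $q\sim q\sb{1}$ for some $q\sb{1}\le p\le\gamma p$, i.e.\ $q\preceq\gamma p$. Because $\gamma p$ is central, it is invariant, so condition (iii) of Theorem \ref{th:invarcentral} forces $q\le\gamma p$. Taking the supremum over $\Sigma$ gives $w\le\gamma p$. Two structural facts about $\Sigma$ will be recorded for the other direction: $\Sigma$ is \emph{hereditary} (if $q\in\Sigma$ and $q'\le q$ then $q'\in\Sigma$, using that the $J\in\mathcal{J}$ witnessing $q\sim q\sb{1}$ restricts to an order-preserving OML-automorphism of $P$ by Theorem \ref{th:Jprops}), and $\Sigma$ is \emph{closed under $\sim$} (by transitivity of $\sim$).

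For $\gamma p\le w$ I would show $w$ is invariant and invoke the equivalence of invariance and centrality (Theorem \ref{th:invarcentral}, (i)$\Leftrightarrow$(v)). Arguing by contradiction, suppose $w$ is related to $w\sp{\perp}$; by Theorem \ref{th:key} there are nonzero $w\sb{1}\le w$ and $v\le w\sp{\perp}$ exchanged by a symmetry $s$, so $sw\sb{1}s=v$. Since $0\ne w\sb{1}\le w=\bigvee\Sigma$, $w\sb{1}$ cannot be orthogonal to every member of $\Sigma$ (otherwise $w\sb{1}\le\bigvee\Sigma$ would lie below $w\sb{1}\sp{\perp}$, forcing $w\sb{1}=0$); so choose $q\in\Sigma$ with $w\sb{1}\not\perp q$. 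By SK4 (Theorem \ref{th:Sym&Sasaki}(iv)) there are nonzero $w\sb{2}\le w\sb{1}$ and $q\sb{2}\le q$ exchanged by a symmetry, whence $w\sb{2}\sim q\sb{2}\le q\preceq p$ shows $w\sb{2}\in\Sigma$. Then $sw\sb{2}s\sim w\sb{2}$ gives $sw\sb{2}s\in\Sigma$, so $sw\sb{2}s\le w$; but also $sw\sb{2}s\le sw\sb{1}s=v\le w\sp{\perp}$, so $sw\sb{2}s\le w\wedge w\sp{\perp}=0$, contradicting the fact that $sw\sb{2}s=J\sb{s}(w\sb{2})\ne 0$ because $J\sb{s}$ is an automorphism and $w\sb{2}\ne0$.

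The main obstacle is exactly this last inequality: a nonzero $w\sb{1}\le w$ need not sit below any single element of $\Sigma$, so one cannot transfer membership in $\Sigma$ to subprojections of $w\sb{1}$ directly. Completeness of $P$ is what lets me form $w$ and use preservation of orthogonality under suprema to \emph{locate} a member $q\in\Sigma$ non-orthogonal to $w\sb{1}$; from there the interplay of the exchanging symmetry $s$, the hereditary and $\sim$-closure properties of $\Sigma$, and SK4 manufactures the contradiction. I expect the delicate point to be the clean bookkeeping of these subprojections and the verification that each newly produced projection genuinely lands in $\Sigma$, rather than any single deep estimate.
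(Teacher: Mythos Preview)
Your argument is correct, but the route you take for the substantive inequality is more circuitous than the paper's. Both proofs show $w=h:=\bigvee\{q:q\preceq p\}$ is central and then conclude $\gamma p\le h$ from $p\le h$; the difference lies in which characterization of centrality from Theorem \ref{th:invarcentral} is verified and how.

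The paper verifies criterion (ii) directly: given a symmetry $s$ and $r\in P$ with $srs\le h$, one has $r\le shs$. Since $J\sb{s}$ is an OML-automorphism preserving suprema, $shs=\bigvee\{sqs:q\preceq p\}$; and because $q\preceq p\Rightarrow sqs\sim q\preceq p\Rightarrow sqs\preceq p$, each $sqs$ already lies in $\Sigma$, so $shs\le h$ and hence $r\le h$. No contradiction, no Theorem \ref{th:key}, no SK4. Your proof instead verifies criterion (i) by contradiction, first invoking the definition of ``related'' together with Theorem \ref{th:key} to produce a symmetry exchanging nonzero $w\sb{1}\le w$ with $v\le w\sp{\perp}$, then using non-orthogonality in the supremum plus SK4 to locate a nonzero $w\sb{2}\le w\sb{1}$ in $\Sigma$, and finally pushing $sw\sb{2}s$ into $w\wedge w\sp{\perp}$. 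All of this is sound (your compressed appeal to Theorem \ref{th:key} implicitly passes through the equivalent subprojections supplied by relatedness), but it spends machinery that the paper avoids: the single observation $J\sb{s}(\Sigma)\subseteq\Sigma$ already gives $J\sb{s}(h)\le h$ for every symmetry $s$, which is exactly criterion (ii). What your approach buys is perhaps a more ``local'' feel---you only ever manipulate individual subprojections---whereas the paper's buys brevity by exploiting that $J\sb{s}$ globally permutes $\Sigma$.
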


\begin{proof}
Since $P$ is complete, $h:=\bigvee\{q\in P:q\preceq p\}$ 
exists in $P$. If $q\preceq p$, then since $p\leq\gamma p$, we 
have $q\preceq\gamma p\in P\cap C(A)$, whence $q\leq\gamma p$ by 
Theorem \ref{th:invarcentral}. Therefore $h\leq\gamma p$. We 
claim that $h$ is a central projection. Indeed, suppose $r\in P$, 
$s\in A$ is a symmetry, and $srs\leq h$. By Theorem \ref
{th:invarcentral} it will be sufficient to show that $r\leq h$. 
But $srs\leq h$ implies that $r\leq shs=\bigvee\{sqs:q\preceq p\}$, 
and since $q\preceq p$ implies $sqs\preceq p$, it follows that 
$\bigvee\{sqs:q\preceq p\}\leq h$. Therefore, $r\leq h$, whence 
$h\in P\cap C(A)$. Obviously, $p\leq h$, so $\gamma p\leq \gamma h
=h$, and we have $h=\gamma p$.
\end{proof}

\begin{corollary}\label{co:gammaperpunrel} 
Let $P$ be a complete OML. Then{\rm: (i)} If $e,f\in P$, then 
$\gamma e\perp\gamma f$ iff $e\perp\gamma f$ iff $e$ and $f$ are 
unrelated. {\rm(ii)} $P$ is irreducible, i.e., $P\cap C(A)
=\{0,1\}$, iff every pair of nonzero projections are related. 
\end{corollary}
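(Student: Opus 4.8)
The plan is to prove part (i) as a chain of two biconditionals and then derive part (ii) from it. The first biconditional, $\gamma e\perp\gamma f\Leftrightarrow e\perp\gamma f$, is immediate: it is contained verbatim in Theorem \ref{th:centcovprops}(iv) (take $p:=e$, $q:=f$ there and read off the middle equivalence). So the real content of (i) is the equivalence $e\perp\gamma f\Leftrightarrow(e\text{ and }f\text{ are unrelated})$.

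For the forward direction I would argue as follows. If $e\perp\gamma f$, then since $\gamma f\in P\cap C(A)$, Corollary \ref{co:punrelc} (with $p:=e$, $c:=\gamma f$) shows that $e$ and $\gamma f$ are unrelated. Because $f\leq\gamma f$, any witness to a relation between $e$ and $f$ --- nonzero $e_1\leq e$, $f_1\leq f$ with $e_1\sim f_1$ --- is simultaneously a witness to a relation between $e$ and $\gamma f$ (as $f_1\leq\gamma f$); hence $e$ and $f$ must be unrelated as well.

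The converse, unrelated $\Rightarrow e\perp\gamma f$, is where completeness of $P$ enters and is the heart of the proof. Here I would invoke Theorem \ref{th:centcovsup}, which gives $\gamma f=\bigvee\{q\in P:q\preceq f\}$. It then suffices to show $e\perp q$ for every $q\preceq f$, since De Morgan duality converts $e\leq q^{\perp}$ for all such $q$ into $e\leq\bigwedge q^{\perp}=(\bigvee q)^{\perp}=(\gamma f)^{\perp}$. To see $e\perp q$, I claim that $e$ and $q$ are unrelated and then apply SK4 (Theorem \ref{th:Sym&Sasaki}(iv)), which forces unrelated projections to be orthogonal. For the claim, suppose instead that $e$ and $q$ were related, witnessed by nonzero $e_1\leq e$, $q_1\leq q$ with $e_1\sim q_1$. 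Since $q\preceq f$, fix $J\in{\mathcal J}$ with $J(q)\leq f$; then $J(q_1)\leq J(q)\leq f$ is nonzero (as $J$ is an automorphism, by Theorem \ref{th:Jprops}) and $J(q_1)\sim q_1\sim e_1$, exhibiting a relation between $e$ and $f$ and contradicting that they are unrelated.

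Finally, part (ii) follows cleanly from (i). For the forward implication, assume $P\cap C(A)=\{0,1\}$ and let $e,f$ be nonzero. Were they unrelated, then $e\perp\gamma f$ by (i); but $f\neq0$ gives $\gamma f\neq0$ by Theorem \ref{th:centcovprops}(i), so $\gamma f=1$, forcing $e=0$, a contradiction. For the converse, suppose every pair of nonzero projections is related and let $0\neq c\in P\cap C(A)$. Since $c$ is central and $c\perp c^{\perp}$, Corollary \ref{co:punrelc} shows that $c$ and $c^{\perp}$ are unrelated; as they therefore cannot both be nonzero, $c^{\perp}=0$, i.e., $c=1$, and so $P\cap C(A)=\{0,1\}$. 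The only step demanding genuine care is the converse half of (i), where completeness is used through Theorem \ref{th:centcovsup} and the De Morgan passage from orthogonality with each $q\preceq f$ to orthogonality with the supremum $\gamma f$.
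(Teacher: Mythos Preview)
Your proof is correct and follows essentially the same route as the paper: both arguments rely on Corollary \ref{co:punrelc} for the implication $e\perp\gamma f\Rightarrow\text{unrelated}$, and on Theorem \ref{th:centcovsup} together with De\,Morgan duality and SK4 for the converse, with the observation that relatedness passes along $\preceq$. The only cosmetic difference is that the paper runs the three conditions as a cycle (proving $\gamma e\not\perp\gamma f\Rightarrow\text{related}$ as the closing contrapositive) rather than two separate biconditionals, but the substantive steps are identical.
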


\begin{proof} (i) If $\gamma e\perp \gamma f$, than as $e\leq 
\gamma e$, it follows that $e\perp\gamma f$. If $e\perp\gamma f$, 
then since $\gamma f\in P\cap C(A)$, Corollary \ref{co:punrelc} 
implies that $e$ and $\gamma f$ are unrelated, and since $f\leq
\gamma f$, it follows that $e$ is unrelated to $f$. To complete 
the proof of (i) it will be sufficient to prove that $\gamma e
\not\perp\gamma f$ implies that $e$ is related to $f$. So 
assume that $\gamma e\not\leq(\gamma f)\sp{\perp}$ Then $e\not
\leq(\gamma f)\sp{\perp}$, else $\gamma e\leq\gamma(\gamma f)
\sp{\perp}=(\gamma f)\sp{\perp}$. Thus by Theorem \ref{th:centcovsup} 
and De\,Morgan duality, $e\not\leq(\gamma f)\sp{\perp}=\bigwedge
\{q\sp{\perp}:q\preceq f\}$, whence there exists a projection 
$q\preceq f$ such that $e\not\perp q$. Therefore, by SK4, $e$ is 
related to $q$, and since $q\preceq f$, it follows that $e$ is related 
to $f$.

(ii) Part (ii) follows immediately from (i).
\end{proof}

By \cite[Definition 7.14]{HandD}, an SK-congruence is a \emph{dimension 
equivalence relation} (DER) iff unrelated elements have orthogonal 
hulls. Therefore, by Corollary \ref{co:gammaperpunrel} (i), if $P$ 
is complete, then the equivalence relation $\sim$ is an analogue of 
a DER. 

\section{The case of a complete projection lattice} \label{sc:CompleteP} 

Some of the results above, notably Theorems \ref{th:weakadditivity} 
and \ref{th:centcovsup} require the completeness of the OML $P$. 
In this section, we present some additional results involving 
exchangeability of projections by symmetries that also require 
completeness of $P$.

\begin{assumption}
In this section, we assume that $P$ is a complete lattice. 
Therefore, $P$ is centrally orthocomplete, and the central cover 
mapping $\gamma\colon A\to P\cap C(A)$ exists. 
\end{assumption}

\begin{theorem}
Suppose that $p\in P$, and $S$ is the set of all symmetries 
in $A$. Then $\gamma p=\bigvee\{sqs:s\in S,\, q\in P,\text
{\ and\ }q\leq p\}$.
\end{theorem}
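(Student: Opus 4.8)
The plan is to prove the identity by showing that the right-hand join, which I will denote $m:=\bigvee\{sqs:s\in S,\,q\in P,\,q\le p\}$, satisfies both $m\le\gamma p$ and $\gamma p\le m$. The join $m$ exists because $P$ is complete. The inclusion $m\le\gamma p$ is the easy half, and the reverse inclusion carries all the content.

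For $m\le\gamma p$ I would argue generator-by-generator. Each generator $sqs$ has $q\le p\le\gamma p$, and since $\gamma p$ is central it commutes with every symmetry, so $s(\gamma p)s=(\gamma p)s^{2}=\gamma p$. As the symmetry transformation $J_{s}$ is order preserving (Theorem \ref{th:symmetry&quadratic}), $sqs=J\sb{s}(q)\le J\sb{s}(\gamma p)=\gamma p$. Taking the supremum over all generators then yields $m\le\gamma p$.

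For $\gamma p\le m$ I would argue by contradiction, exploiting the earlier comparison results rather than trying to prove $m$ central directly. Since $m\le\gamma p$, I would set $d:=\gamma p\wedge m^{\perp}$ and suppose $d\ne0$. Because $0\ne d\le\gamma p$, the projection $d$ cannot be orthogonal to $\gamma p$, so Corollary \ref{co:gammaperpunrel}(i) forces $d$ and $p$ to be \emph{related}: there exist nonzero $d\sb{1}\le d$ and $p\sb{1}\le p$ with $d\sb{1}\sim p\sb{1}$. Applying Theorem \ref{th:key} to this equivalent pair, I would extract nonzero subprojections $d\sb{2}\le d\sb{1}\le d$ and $p\sb{2}\le p\sb{1}\le p$ together with a single symmetry $s\sb{*}$ exchanging them, so that $d\sb{2}=s\sb{*}p\sb{2}s\sb{*}$ with $p\sb{2}\le p$. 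Thus $d\sb{2}$ is literally one of the generators of $m$, whence $d\sb{2}\le m$; but $d\sb{2}\le d\le m^{\perp}$, so $d\sb{2}\le m\wedge m^{\perp}=0$, contradicting $d\sb{2}\ne0$. Therefore $d=0$, i.e., $\gamma p\le m$, and combined with the first inclusion this gives $\gamma p=m$.

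The main obstacle I anticipate is the gap between the \emph{single} symmetries permitted in the definition of $m$ and the arbitrary \emph{finite sequences} of symmetries underlying $\sim$ and the formula $\gamma p=\bigvee\{q:q\preceq p\}$ of Theorem \ref{th:centcovsup}. Unlike a group of unitaries, the set $\{sqs\}$ is not visibly closed under conjugation, since products of symmetries need not be symmetries, so the naive route of showing $m$ invariant (hence central) does not obviously succeed. The point of the proposed argument is to sidestep invariance entirely: Corollary \ref{co:gammaperpunrel} converts the hypothesis $0\ne d\le\gamma p$ into relatedness of $d$ and $p$, and Theorem \ref{th:key} then collapses an arbitrary $\mathcal J$-equivalence down to a single-symmetry exchange of suitable nonzero subprojections. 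That collapse is precisely what manufactures a generator of $m$ sitting below $d$, delivering the contradiction.
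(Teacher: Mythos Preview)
Your proof is correct and is essentially identical to the paper's own argument: both show $m\le\gamma p$ by centrality of $\gamma p$, then set up the contradiction via $d=\gamma p\wedge m^{\perp}\ne 0$, invoke Corollary~\ref{co:gammaperpunrel} to get relatedness of $d$ and $p$, and use Theorem~\ref{th:key} to manufacture a single-symmetry generator of $m$ lying below $m^{\perp}$. Even your diagnosis of the ``main obstacle'' and the way Theorem~\ref{th:key} is used to sidestep it mirrors the paper's strategy exactly.
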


\begin{proof}
Let $h:=\bigvee\{sqs:q\in P\text{\ and\ }q\leq p\}$. If 
$s\in S$, $q\in P$, and $q\leq p$, then $q\leq\gamma p$, 
whence $sqs\leq s(\gamma p)s=\gamma p$, and therefore 
$h\leq\gamma p$. Aiming for a contradiction, we assume 
that $h\not=\gamma p$, i.e., that $r:=\gamma p-h=\gamma p
\wedge h\sp{\perp}\not=0$. Then $r\not\leq(\gamma p)\sp
{\perp}$ so by Corollary \ref{co:gammaperpunrel}, $r$ and 
$p$ are related. Therefore, there exist $0\not=r\sb{1}\leq r$ 
and $0\not=p\sb{1}\leq p$ with $r\sb{1}\sim p\sb{1}$; hence, 
by Theorem \ref{th:key} there exist $0\not=r\sb{2}\leq r\sb{1}
\leq r\leq h\sp{\perp}$ and $0\not=p\sb{2}\leq p\sb{1}\leq p$ 
such that $r\sb{2}$ and $p\sb{2}$ are exchanged by a symmetry     
$s\in A$. Thus, $r\sb{2}=sp\sb{2}s\leq h$, so $r\sb{2}\leq h
\wedge h\sp{\perp}=0$, contradicting $r\sb{2}\not=0$. 
\end{proof}

\begin{lemma}\label{lm:orthodecomps} 
If $e,f\in P$ are orthogonal projections, then there exist 
projections $e\sb{1}, e\sb{2}, f\sb{1}, f\sb{2}\in P$ 
such that $e\sb{1}\perp e\sb{2}$, $f\sb{1}\perp f\sb{2}$, 
$e=e\sb{1}\vee e\sb{2}=e\sb{1}+e\sb{2}$, $f=f\sb{1}\vee f
\sb{2}=f\sb{1}+f\sb{2}$, $e\sb{1}$ and $f\sb{1}$ are exchanged 
by a symmetry, and $e\sb{2}$ is unrelated to $f\sb{2}$, whence 
$\gamma e\sb{2}\perp\gamma f\sb{2}$.
\end{lemma}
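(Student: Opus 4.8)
The plan is to use Zorn's lemma to extract a maximal orthogonal family of subprojection pairs of $e$ and $f$ that are matched by symmetries, to glue the matched pieces together with the weak additivity Theorem \ref{th:weakadditivity}, and then to argue that the two leftover pieces must be unrelated, since otherwise the matched family could be enlarged.

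First I would consider the collection $\mathcal{F}$ of all sets $S$ of pairs $(p,q)$ of nonzero projections with $p\leq e$ and $q\leq f$, such that $p$ and $q$ are exchanged by a symmetry and such that the family $\{p:(p,q)\in S\}$ of first components and the family $\{q:(p,q)\in S\}$ of second components are each pairwise orthogonal. Ordering $\mathcal{F}$ by inclusion, I note that the union of a chain in $\mathcal{F}$ again belongs to $\mathcal{F}$, since pairwise orthogonality and the exchange property are inherited by unions; hence Zorn's lemma provides a maximal element $S_0=\{(e_i,f_i):i\in I\}$ of $\mathcal{F}$.

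Since $P$ is complete, I would set $e_1:=\bigvee_{i\in I}e_i$ and $f_1:=\bigvee_{i\in I}f_i$. The families $(e_i)_{i\in I}$ and $(f_i)_{i\in I}$ are orthogonal with suprema $e_1$ and $f_1$; because $e\perp f$ together with $e_1\leq e$ and $f_1\leq f$ forces $e_1\perp f_1$, and because each $e_i$ and $f_i$ are exchanged by a symmetry, Theorem \ref{th:weakadditivity} applies to these families and yields a symmetry exchanging $e_1$ and $f_1$ (the degenerate case $S_0=\emptyset$, where $e_1=f_1=0$, being handled by the identity symmetry). I would then define $e_2:=e\wedge e_1^{\perp}=e-e_1$ and $f_2:=f\wedge f_1^{\perp}=f-f_1$, so that $e_1\perp e_2$ with $e=e_1\vee e_2=e_1+e_2$ and $f_1\perp f_2$ with $f=f_1\vee f_2=f_1+f_2$ hold automatically.

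It remains to verify that $e_2$ and $f_2$ are unrelated, and here maximality of $S_0$ does the work. If $e_2$ and $f_2$ were related, then by Definition \ref{df:equivalence} there would exist nonzero $p\leq e_2$ and $q\leq f_2$ with $p\sim q$, and Theorem \ref{th:key} would then furnish nonzero subprojections $p'\leq p$ and $q'\leq q$ that are exchanged by a symmetry. Since $p'\leq e_2\leq e_1^{\perp}$ while $e_i\leq e_1$, we would have $p'\perp e_i$ for every $i$, and similarly $q'\perp f_i$ for every $i$; moreover $p'\perp q'$ because $p'\leq e$, $q'\leq f$, and $e\perp f$. Thus $S_0\cup\{(p',q')\}$ would be a member of $\mathcal{F}$ strictly larger than $S_0$, contradicting maximality. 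Therefore $e_2$ and $f_2$ are unrelated, and Corollary \ref{co:gammaperpunrel}(i) then gives $\gamma e_2\perp\gamma f_2$. I expect the main obstacle to be confirming that Theorem \ref{th:weakadditivity} genuinely applies: its delicate hypothesis is that the suprema of the two matched families be orthogonal, and I would want to be sure this follows cleanly here from $e\perp f$.
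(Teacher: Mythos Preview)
Your proposal is correct and follows essentially the same line as the paper's proof: choose a maximal family of symmetry-matched subprojection pairs, take suprema, apply Theorem \ref{th:weakadditivity} (using $e\perp f$ to get $e_1\perp f_1$), and derive unrelatedness of the leftovers from maximality via Theorem \ref{th:key} and Corollary \ref{co:gammaperpunrel}. The only cosmetic difference is that the paper simply asserts the existence of a maximal family rather than explicitly invoking Zorn's lemma, and it does not separately treat the empty case; your worry about the hypothesis of Theorem \ref{th:weakadditivity} is exactly the point the paper relies on, and it is indeed handled by $e_1\leq e$, $f_1\leq f$, $e\perp f$.
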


\begin{proof} Let $(e\sb{i},f\sb{i})\sb{i\in I}$ be a maximal 
family of pairs of projections such that $(e\sb{i})\sb{i\in I}$ 
is an orthogonal family of subprojections of $e$, $(f\sb{i})
\sb{i\in I}$ is an orthogonal family of subprojections of $f$, 
and for each $i\in I$, there is a symmetry $s\sb{i}\in A$ that 
exchanges $e\sb{i}$ and $f\sb{i}$. We can assume that the natural 
numbers $1,2,3$ and $4$ do not belong to the indexing set $I$. 

Put $e\sb{1}:=\bigvee\sb{i\in I}e\sb{i}$ and $f\sb{1}:=\bigvee
\sb{i\in I}f\sb{i}$. Then $e\sb{1}\leq e$ and $f\sb{1}\leq f$, 
so $e\sb{1}\perp f\sb{1}$. By Theorem \ref{th:weakadditivity}, 
$e\sb{1}$ and $f\sb{1}$ are exchanged by a symmetry. Let $e\sb{2}
:=e-e\sb{1}$ and $f\sb{2}:=f-f\sb{1}$. Then $e\sb{2}\leq e\wedge
e\sb{i}\sp{\,\perp}$ and $f\sb{2}\leq f\wedge f\sb{i}\sp{\,\perp}$ 
for all $i\in I$. Suppose that $e\sb{2}$ is  related to $f\sb{2}$; 
then they have nonzero subprojections $0\not=e\sb{3}\leq e\sb{2}$ 
and $0\not=f\sb{3}\leq f\sb{2}$ with $e\sb{3}\sim f\sb{3}$. Thus, 
by Theorem \ref{th:key}, there are nonzero subprojections $0\not=e
\sb{4}\leq e\sb{3}\leq e\sb{2}$ and $0\not=f\sb{4}\leq f\sb{3}
\leq f\sb{2}$ that are exchanged by a symmetry $s\sb{4}\in A$. 
Evidently, $e\sb{4}\leq e\wedge e\sb{i}\sp{\,\perp}$ and $f\sb{4}
\leq f\wedge f\sb{i}\sp{\,\perp}$ for all $i\in I$. But then we 
can enlarge the family $(e\sb{i},f\sb{i})\sb{i\in I}$ by appending 
the pair $(e\sb{4},f\sb{4})$, contradicting maximality. Therefore 
$e\sb{2}$ is unrelated to $f\sb{2}$, whence $\gamma e\sb{2}\perp 
\gamma f\sb{2}$ by Corollary \ref{co:gammaperpunrel}.
\end{proof}

In the next theorem we improve Lemma \ref{lm:orthodecomps}, by 
removing the hypothesis that $e$ and $f$ are orthogonal.

\begin{theorem}\label{th:orthdecomps} 
If $e$ and $f$ are any two projections, then we can write orthogonal 
sums $e=e\sb{1}+e\sb{2}$ and $f=f\sb{1}+f\sb{2}$, where $e\sb{1}$ 
and $f\sb{1}$ are exchanged by a symmetry and $\gamma e\sb{2}\perp
\gamma f\sb{2}$.
\end{theorem}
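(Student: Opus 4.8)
The plan is to reduce to the orthogonal case already settled in Lemma~\ref{lm:orthodecomps} by splitting each of $e$ and $f$ into a ``Sasaki part'' that is handled directly and an orthogonal remainder, and then to reassemble the exchanged pieces by means of the finite-additivity Lemma~\ref{le:finadit}. First I would put $e':=\phi\sb{e}f$ and $f':=\phi\sb{f}e$; by Theorem~\ref{th:Sym&Sasaki}~(i) these subprojections of $e$ and $f$ are exchanged by a single symmetry $s\sb{0}$. Then I would set $e'':=e-e'$ and $f'':=f-f'$.

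The key computation is that $e''=e\wedge f\sp{\perp}$ and $f''=f\wedge e\sp{\perp}$. Indeed, by De\,Morgan duality $\phi\sb{e}f=e\wedge(e\sp{\perp}\vee f)=e\wedge(e\wedge f\sp{\perp})\sp{\perp}$, which is the relative orthocomplement of $e\wedge f\sp{\perp}$ in the interval $P[0,e]$; hence $e-\phi\sb{e}f=e\wedge f\sp{\perp}$, and symmetrically for $f''$. In particular $e''\leq f\sp{\perp}$ and $f''\leq e\sp{\perp}$, so $e''$ and $f''$ are orthogonal. Applying Lemma~\ref{lm:orthodecomps} to the orthogonal pair $(e'',f'')$ produces orthogonal sums $e''=e''\sb{1}+e''\sb{2}$ and $f''=f''\sb{1}+f''\sb{2}$ with $e''\sb{1}$ and $f''\sb{1}$ exchanged by a symmetry $s\sb{1}$ and with $\gamma e''\sb{2}\perp\gamma f''\sb{2}$.

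I would then set $e\sb{1}:=e'+e''\sb{1}$, $e\sb{2}:=e''\sb{2}$, $f\sb{1}:=f'+f''\sb{1}$, and $f\sb{2}:=f''\sb{2}$, so that $e=e\sb{1}+e\sb{2}$ and $f=f\sb{1}+f\sb{2}$ are orthogonal sums and the required relation $\gamma e\sb{2}\perp\gamma f\sb{2}$ holds by construction. To exchange $e\sb{1}$ and $f\sb{1}$ by a \emph{single} symmetry I would invoke Lemma~\ref{le:finadit} on the two pairs $(e',f')$ and $(e''\sb{1},f''\sb{1})$. The within-conditions $e'\perp e''\sb{1}$ and $f'\perp f''\sb{1}$ hold because $e''\sb{1}\leq e''=e-e'$ and $f''\sb{1}\leq f''=f-f'$; the cross-conditions $e'\perp f''\sb{1}$ and $e''\sb{1}\perp f'$ hold because $f''\sb{1}\leq f''\leq e\sp{\perp}$ while $e'\leq e$, and $e''\sb{1}\leq e''\leq f\sp{\perp}$ while $f'\leq f$. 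Since these are exactly the four hypotheses of Lemma~\ref{le:finadit}, it delivers a symmetry exchanging $e\sb{1}=e'+e''\sb{1}$ and $f\sb{1}=f'+f''\sb{1}$.

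The step I expect to be the crux is the production of a \emph{single} symmetry for the exchanged part: the tempting parallelogram-law reductions (transporting a copy of $f$ off $e$ and then applying the orthogonal lemma) only relate the exchanged pieces through a composite of symmetries, and a composite cannot be collapsed to one symmetry without an orthogonality hypothesis that is unavailable here. The device above avoids this entirely, since $e'$, $f'$, $e''\sb{1}$, $f''\sb{1}$ are all honest subprojections of $e$ or $f$ and are never moved across a symmetry; the decisive point is that Lemma~\ref{le:finadit} demands the cross-orthogonality conditions but \emph{not} $e'\perp f'$, which would fail in general (for instance on a part where $e$ and $f$ are in general position). The one identity worth verifying carefully is $e-\phi\sb{e}f=e\wedge f\sp{\perp}$, which is a genuine OML identity and is precisely what simultaneously supplies the orthogonality $e''\perp f''$ and the cross-orthogonality used in the gluing.
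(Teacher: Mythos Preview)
Your proposal is correct and follows essentially the same route as the paper's proof: the paper's $e\sb{11},e\sb{12},f\sb{11},f\sb{12},e\sb{13},f\sb{13}$ are precisely your $e',e'',f',f'',e''\sb{1},f''\sb{1}$, and the paper likewise finishes by invoking Lemma~\ref{le:finadit} on the two pairs $(e',f')$ and $(e''\sb{1},f''\sb{1})$ using exactly the cross-orthogonalities you single out. Your explicit justification of the identity $e-\phi\sb{e}f=e\wedge f\sp{\perp}$ and your remark that Lemma~\ref{le:finadit} does \emph{not} require $e'\perp f'$ are welcome clarifications of points the paper leaves implicit.
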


\begin{proof} Write $e\sb{11}:=\phi\sb{e}(f)=e\wedge(e\sb{12})
\sp{\perp}$, where $e\sb{12}:=e\wedge f\sp{\perp}$, and write 
$f\sb{11}:=\phi\sb{f}(e)=f\wedge(f\sb{12})\sp{\perp}$, where 
$f\sb{12}:=e\sp{\perp}\wedge f$. Then $e\sb{12}$ commutes with 
both $e$ and $(e\sb{12})\sp{\,\perp}$, whence $e=e\sb{11}\vee e
\sb{12}=e\sb{11}+e\sb{12}$, and likewise $f=f\sb{11}+f\sb{12}$. 
Also by Theorem \ref{th:Sym&Sasaki} (i) $e\sb{11}$ and $f\sb{11}$ 
are exchanged by a symmetry $s\sb{1}$. Moreover, since $e\sb{12}
\perp f\sb{12}$, Lemma \ref{lm:orthodecomps} provides orthogonal 
decompositions $e\sb{12}=e\sb{13}+e\sb{2}$ and $f\sb{12}=f\sb{13}
+f\sb{2}$ where $e\sb{13}$ and $f\sb{13}$ are exchanged by a 
symmetry $s\sb{2}$ and $\gamma e\sb{2}\perp\gamma f\sb{2}$. Thus 
$e=e\sb{11}+e\sb{12}=e\sb{11}+e\sb{13}+e\sb{2}$ and $f=f\sb{11}+
f\sb{12}=f\sb{11}+f\sb{13}+f\sb{2}$.

Since $e\sb{13}\perp f\sb{11}$ and $e\sb{11}\perp f\sb{13}$, Lemma 
\ref{le:finadit} provides a symmetry $s$ exchanging $e\sb{1}:=e\sb{11}
+e\sb{13}$ and $f\sb{1}:=f\sb{11}+f\sb{13}$.
\end{proof}

\begin{lemma}\label{le:orthogcompar} 
If $e$ and $f$ are orthogonal projections, then there 
is a central projection $h\in P$ such that $eh$ and a 
subprojection of $fh$ are exchanged by a symmetry $s$, 
and $f(1-h)$ and a subprojection of $e(1-h)$ are exchanged 
by the symmetry $s$.
\end{lemma}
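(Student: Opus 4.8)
The plan is to obtain the central splitting directly from the orthogonal decomposition furnished by Lemma~\ref{lm:orthodecomps}. Since $e\perp f$, that lemma supplies orthogonal sums $e=e\sb{1}+e\sb{2}$ and $f=f\sb{1}+f\sb{2}$, a symmetry $s\in A$ exchanging $e\sb{1}$ and $f\sb{1}$, and central projections satisfying $\gamma e\sb{2}\perp\gamma f\sb{2}$. My claim is that this \emph{single} symmetry $s$ already does the job, paired with the central projection $h:=(\gamma e\sb{2})\sp{\perp}=1-\gamma e\sb{2}$, so that $1-h=\gamma e\sb{2}$. The point of choosing $h$ this way is that it will annihilate the leftover piece $e\sb{2}$ on the $h$-band and the leftover piece $f\sb{2}$ on the $(1-h)$-band, which the orthogonality $\gamma e\sb{2}\perp\gamma f\sb{2}$ precisely permits.

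First I would record the two facts that drive everything: since $h$ lies in the center $C(A)$ it commutes with $s$, and since $s\sp{2}=1$ the relation $se\sb{1}s=f\sb{1}$ is equivalent to $sf\sb{1}s=e\sb{1}$. On the band $h$, the summand $e\sb{2}$ disappears, because $e\sb{2}\leq\gamma e\sb{2}=1-h$ forces $e\sb{2}h=0$; hence $eh=e\sb{1}h$, and commuting $s$ past the central $h$ gives $s(eh)s=(se\sb{1}s)h=f\sb{1}h\leq fh$. Thus $eh$ and the subprojection $f\sb{1}h$ of $fh$ are exchanged by $s$. Symmetrically, on the band $1-h=\gamma e\sb{2}$ the summand $f\sb{2}$ disappears, because $f\sb{2}\leq\gamma f\sb{2}\perp\gamma e\sb{2}$ gives $f\sb{2}(1-h)=0$; hence $f(1-h)=f\sb{1}(1-h)$, and $s(f(1-h))s=(sf\sb{1}s)(1-h)=e\sb{1}(1-h)\leq e(1-h)$. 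Thus $f(1-h)$ and the subprojection $e\sb{1}(1-h)$ of $e(1-h)$ are exchanged by the very same $s$.

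There is no serious obstacle once Lemma~\ref{lm:orthodecomps} is in hand; the statement is then a bookkeeping consequence of the choice of $h$. The one point deserving care is that the \emph{same} symmetry must serve on both central bands, reading as ``$e\preceq f$'' on $h$ and as ``$f\preceq e$'' on $1-h$. This is exactly why it is essential that $s$ exchanges $e\sb{1}$ with $f\sb{1}$: being an involution, it simultaneously exchanges $f\sb{1}$ with $e\sb{1}$, so the two required exchanges are the two directions of one and the same relation, carried by a single $s$ with no further adjustment.
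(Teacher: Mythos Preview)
Your proof is correct and follows essentially the same approach as the paper's own argument. The only cosmetic differences are that the paper cites Theorem~\ref{th:orthdecomps} rather than Lemma~\ref{lm:orthodecomps} (either works here since $e\perp f$) and takes $h:=\gamma f\sb{2}$ instead of your $h:=(\gamma e\sb{2})\sp{\perp}$; since $\gamma f\sb{2}\leq(\gamma e\sb{2})\sp{\perp}$, both choices kill $e\sb{2}$ on the $h$-band and $f\sb{2}$ on the $(1-h)$-band, and the rest of the computation is identical.
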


\begin{proof} Let $e=e\sb{1}+e\sb{2}$, $f=f\sb{1}+f\sb{2}$ be 
the decompositions of Theorem \ref{th:orthdecomps} and set 
$h=\gamma f\sb{2}$. Then $h$ is a central projection, $f
\sb{2}h=f\sb{2}$, $h\perp\gamma e\sb{2}$, and there is a symmetry 
$s\in A$ with $se\sb{1}s=f\sb{1}$. Thus, $eh=e\sb{1}h+e\sb{2}h=
e\sb{1}h+e\sb{2}\gamma e\sb{2}h=e\sb{1}h$, and $s(eh)s=s(e\sb{1}h)s
=f\sb{1}h\leq fh$. Also $f(1-h)=f\sb{1}(1-h)+f\sb{2}(1-h)=f\sb{1}
(1-h)$ and $sf(1-h)s=sf\sb{1}(1-h)s=e\sb{1}(1-h)\leq e(1-h)$.
\end{proof}

In the next theorem we improve Lemma \ref{le:orthogcompar} by 
removing the hypothesis that $e$ and $f$ are orthogonal.

\begin{theorem} [Generalized Comparability]  \label{th:compar} 
Given any two projections $e,f$ there is a central projection 
$h$ and a symmetry $s$ with $s(eh)s\leq fh$ and $sf(1-h)s\leq e(1-h)$.
\end{theorem}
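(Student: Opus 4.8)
The plan is to observe that Theorem \ref{th:compar} is exactly Lemma \ref{le:orthogcompar} with its orthogonality hypothesis deleted, and that this hypothesis is vestigial: the proof of that lemma invokes the \emph{general} decomposition Theorem \ref{th:orthdecomps} (which is valid for an arbitrary pair of projections, orthogonality having been absorbed already into Lemma \ref{lm:orthodecomps}) and never once uses $e \perp f$. Hence I would simply rerun that argument verbatim for arbitrary $e,f$.

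Concretely, I would first apply Theorem \ref{th:orthdecomps} to the given $e,f$ to obtain orthogonal sums $e = e\sb{1}+e\sb{2}$ and $f = f\sb{1}+f\sb{2}$ in which $e\sb{1}$ and $f\sb{1}$ are exchanged by a single symmetry $s$ (so $se\sb{1}s = f\sb{1}$, and since $s\sp{2}=1$ also $sf\sb{1}s = e\sb{1}$), together with $\gamma e\sb{2}\perp\gamma f\sb{2}$. Then I would set $h := \gamma f\sb{2}$, which is a central projection by the definition of the central cover (Definition \ref{df:centcover}).

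The verification splits into two symmetric halves. Since $h$ is central it commutes with $s$, so $shs = h$ and hence $s(ah)s = (sas)h$ for every $a\in A$. Because $e\sb{2}\leq\gamma e\sb{2}$ and $\gamma e\sb{2}\perp\gamma f\sb{2}=h$, we get $e\sb{2}h = e\sb{2}(\gamma e\sb{2})h = 0$, so $eh = e\sb{1}h$ and therefore $s(eh)s = (se\sb{1}s)h = f\sb{1}h \leq fh$. Dually, $f\sb{2}\leq\gamma f\sb{2}=h$ forces $f\sb{2}(1-h)=0$, so $f(1-h) = f\sb{1}(1-h)$, whence $sf(1-h)s = (sf\sb{1}s)(1-h) = e\sb{1}(1-h) \leq e(1-h)$. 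These are exactly the required inequalities $s(eh)s\leq fh$ and $sf(1-h)s\leq e(1-h)$.

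There is no genuine obstacle here, since the entire force of the generalization already lives in Theorem \ref{th:orthdecomps}; the only point requiring care is the bookkeeping with the central cover. Specifically, one must use the \emph{strong} conclusion $\gamma e\sb{2}\perp\gamma f\sb{2}$ (rather than merely $e\sb{2}\perp f\sb{2}$) to annihilate the cross terms $e\sb{2}h$ and $f\sb{2}(1-h)$, and one must use centrality of $h$ to commute $s$ past $h$. Had one not noticed that orthogonality is inessential in Lemma \ref{le:orthogcompar}, the natural alternative would be to first pass from $(e,f)$ to an orthogonal pair and then quote that lemma; but Theorem \ref{th:orthdecomps} renders such a detour unnecessary.
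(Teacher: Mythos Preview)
Your argument is correct. You have spotted that the orthogonality hypothesis in Lemma~\ref{le:orthogcompar} is never actually used in its proof (that proof invokes Theorem~\ref{th:orthdecomps}, which is already stated for arbitrary pairs), so the lemma's argument carries over verbatim to general $e,f$. The bookkeeping you give---$e\sb{2}h=e\sb{2}(\gamma e\sb{2})h=0$ from $\gamma e\sb{2}\perp h$, $f\sb{2}(1-h)=0$ from $f\sb{2}\leq h$, and $s(ah)s=(sas)h$ from centrality of $h$---is all sound.

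The paper takes a different and more circuitous route. Rather than recognizing that Lemma~\ref{le:orthogcompar} already proves the general result, the paper first splits $e$ and $f$ via Sasaki projections as $e=\phi\sb{e}(f)+(e\wedge f\sp{\perp})$ and $f=\phi\sb{f}(e)+(e\sp{\perp}\wedge f)$, exchanges the first summands by Theorem~\ref{th:Sym&Sasaki}(i), applies Lemma~\ref{le:orthogcompar} to the (genuinely orthogonal) second summands, and then assembles the two partial exchanges into a single symmetry via Lemma~\ref{le:finadit}. This detour is unnecessary: the Sasaki decomposition here essentially duplicates the first step of the proof of Theorem~\ref{th:orthdecomps}, and the gluing step with Lemma~\ref{le:finadit} adds bookkeeping that your direct route avoids. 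Your observation yields a cleaner proof, and in effect shows that Lemma~\ref{le:orthogcompar} could have been stated without the orthogonality hypothesis from the outset.
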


\begin{proof} 
By Theorem \ref{th:Sym&Sasaki} (i) the subprojections $e\sb{1}:
=\phi\sb{e}(f)\leq e$ and $f\sb{1}:=\phi\sb{f}(e)\leq f$ are 
exchanged by a symmetry $s\sb{1}$. Set $e\sb{2}=e\wedge f
\sp{\perp}$ and $f\sb{2}=e\sp{\perp}\wedge f$. Then $e\sb{1}
=e\wedge e\sb{2}\sp{\,\perp}$, $f\sb{1}=f\wedge f\sb{2}\sp
{\,\perp}$, $e\sb{1}\perp e\sb{2}$, $f\sb{1}\perp f\sb{2}$, 
$e=e\sb{1}+e\sb{2}$, and $f=f\sb{1}+f\sb{2}$. Since $e\sb{2}
\perp f\sb{2}$, Lemma \ref{le:orthogcompar} applies giving a 
symmetry $s\sb{2}$ and a central projection $h$ with $f\sb{3}
:=s\sb{2}(e\sb{2}h)s\sb{2}\leq f\sb{2}h$ and $e\sb{3}:=s\sb{2}
(f\sb{2}(1-h))s\sb{2}\leq e\sb{2}(1-h)$. We note that $f\sb{3}
(1-h)=0$ and $e\sb{3}h=0$. 

We claim that the projections $e\sb{1}$, $e\sb{2}h$ and 
$e\sb{3}$ are pairwise orthogonal. Indeed, as $e\sb{1}
\perp e\sb{2}$, we have $e\sb{1}\perp e\sb{2}h$. Also, 
$e\sb{1}\perp e\sb{2}(1-h)$, so $e\sb{1}\perp e\sb{3}$.  
Moreover, $e\sb{3}\leq 1-h$, so $e\sb{2}h\perp e\sb{3}$. 
Thus, $e\sb{1}+(e\sb{2}h+e\sb{3})$ is a projection. 
Similarly, $f\sb{1}+(f\sb{3}+f\sb{2}(1-h))$ is a projection.
Since $e\sb{1}\leq e\leq e\vee f\sp{\perp}=f\sb{2}\sp{\,\perp}$  
and $f\sb{3}\leq f\sb{2}$ it follows that $e\sb{1}\perp(f\sb{3}+
(1-h)f\sb{2})$. Similarly, $f\sb{1}\perp(e\sb{2}h+e\sb{3})$, 
Thus, as $s\sb{1}$ exchanges $e\sb{1}$ and $f\sb{1}$ and 
$s\sb{2}$ exchanges $e\sb{2}h+e\sb{3}$ and $f\sb{3}+
f\sb{2}(1-h)$, it follows from Lemma \ref{le:finadit} that 
there is a symmetry $s\in S$ such that 
\begin{equation} \label{eq:first}
s(e\sb{1}+(e\sb{2}h+e\sb{3}))s=f\sb{1}+(f\sb{3}+f\sb{2}(1-h)), 
\end{equation}
\begin{equation}\label{eq:second}
\text{\ whence\ }s(f\sb{1}+(f\sb{3}+f\sb{2}(1-h)))s=e\sb{1}+
 (e\sb{2}h+e\sb{3}).
\end{equation}
Multiplying both sides of (\ref{eq:first}) by $h$, and both sides of 
(\ref{eq:second}) by $1-h$, we find that $s(eh)s=(f\sb{1}+f\sb{3})h
\leq fh$ and $s(f(1-h))s=(e\sb{1}+e\sb{3})(1-h)\leq e(1-h)$.   
\end{proof}

For the case under consideration in which $P$ is a complete OML, 
the generalized comparability theorem above can be used to prove 
that $P$ has the relative center property. Our proof of the 
following theorem is suggested by the proof of \cite[Proposition 1]
{Chev} in which $\sim$ is replaced by strong perspectivity.    

\begin{theorem} \label{th:relativecenter}
The OML $P$ has the relative center property.
\end{theorem}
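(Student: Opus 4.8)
The plan is to show that for each $p\in P$ and each $d$ in the center of the interval OML $P[0,p]$, the central cover $c:=\gamma(d)$ satisfies $c\wedge p=d$. First I would record what membership in the center of $P[0,p]$ means: since $pAp$ is itself a synaptic algebra whose projection lattice is $P[0,p]$, Theorem \ref{th:centP} applied to $pAp$ shows that $d$ lies in the center of $P[0,p]$ iff $d$ is a central projection of $pAp$, i.e.\ $d$ commutes with every element of $pAp$. Writing $d':=p-d=p\wedge d\sp{\perp}$ for the complement of $d$ in $P[0,p]$, so that $d\perp d'$ and $d\vee d'=d+d'=p$, the whole theorem reduces to the single claim that $d$ and $d'$ are \emph{unrelated} in $P$.

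The heart of the argument is this unrelatedness claim, and it is where the interaction between symmetries of $A$ and symmetries of the corner $pAp$ must be managed. Suppose, for contradiction, that $d$ and $d'$ are related. Then they have nonzero $\sim$-equivalent subprojections, so by Theorem \ref{th:key} there are nonzero $d\sb{2}\leq d$ and $g\sb{2}\leq d'$ exchanged by a symmetry of $A$. By Theorem \ref{th:exstrongpersp} the projections $d\sb{2}$ and $g\sb{2}$ are strongly perspective, hence perspective in $P[0,d\sb{2}\vee g\sb{2}]$; since $d\sb{2}\vee g\sb{2}\leq p$, Lemma \ref{le:relcompl} (applied inside the OML $P[0,p]$) upgrades this to perspectivity of $d\sb{2}$ and $g\sb{2}$ in $P[0,p]$. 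As $d\sb{2}\leq d\perp d'\geq g\sb{2}$, the projections $d\sb{2},g\sb{2}$ are also orthogonal, so Theorem \ref{th:perspex}(iii), applied now in the synaptic algebra $pAp$, furnishes a symmetry $\sigma\in pAp$ with $\sigma d\sb{2}\sigma=g\sb{2}$. This is the crucial step, since the symmetry now lives in $pAp$, where $d$ is central: $\sigma$ therefore commutes with $d$, and using $d\sb{2}\leq d$ we get $g\sb{2}=\sigma d\sb{2}\sigma\leq\sigma d\sigma=d\sigma\sp{2}=dp=d$; but $g\sb{2}\leq d'\perp d$ forces $g\sb{2}=0$, contradicting $g\sb{2}\neq0$. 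Hence $d$ and $d'$ are unrelated.

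Finally I would convert unrelatedness into the desired factorization. By Corollary \ref{co:gammaperpunrel}(i), unrelatedness of $d$ and $d'$ yields $\gamma(d)\perp\gamma(d')$. Put $c:=\gamma(d)\in P\cap C(A)$. Then $d\leq c$, so $c\wedge d=d$, while $d'\leq\gamma(d')\leq c\sp{\perp}$, so $c\wedge d'=0$. Since $c$ is central it is compatible with both $d$ and $d'$, whence the distributive law gives $c\wedge p=c\wedge(d\vee d')=(c\wedge d)\vee(c\wedge d')=d$. Thus every element of the center of $P[0,p]$ has the form $c\wedge p$ with $c$ central in $P$ (the converse inclusion, that such $c\wedge p$ is central in $P[0,p]$, being the elementary remark already noted in Section \ref{sc:OML}), so $P$ has the relative center property. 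I expect the main obstacle to be exactly the localization carried out in the second paragraph: the hypothesis only controls symmetries of the corner $pAp$, whereas $\sim$ is defined by symmetries of $A$, and it is the chain strong perspectivity $\to$ perspectivity in $P[0,p]$ $\to$ a symmetry internal to $pAp$ that lets the centrality of $d$ in $pAp$ take effect.
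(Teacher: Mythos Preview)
Your proof is correct and takes a genuinely different route from the paper's. The paper invokes Generalized Comparability (Theorem~\ref{th:compar}) applied to $d$ and $p-d$: this yields a central projection $h$ and a symmetry $s\in A$ with $sdhs\leq(p-d)h$ and $s(p-d)(1-h)s\leq d(1-h)$; the symmetry $s$ is then localized to a symmetry $t\in pAp$ (via $t:=sq+(p-q)$ with $q:=dh\vee sdhs$), and centrality of $d$ in $pAp$ together with centrality of $h$ in $A$ forces $dh=tdht=sdhs\leq(p-d)h$, hence $dh=0$, and symmetrically $(p-d)(1-h)=0$, giving directly $d=p(1-h)$. Your argument bypasses comparability entirely: you show by contradiction that $d$ and $p-d$ are unrelated, using Theorem~\ref{th:key} to obtain subprojections exchanged by a global symmetry, then Theorem~\ref{th:exstrongpersp} and Lemma~\ref{le:relcompl} to pass to perspectivity in $P[0,p]$, and finally Theorem~\ref{th:perspex}(iii) inside $pAp$ to produce a symmetry of $pAp$, where centrality of $d$ yields the contradiction. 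Both proofs pivot on the same local fact---a symmetry of $pAp$ cannot carry a nonzero piece of $d$ into $p-d$---but the paper's route is constructive (the central witness is $1-h$), whereas yours is more conceptual and recovers the witness as $\gamma(d)$ via Corollary~\ref{co:gammaperpunrel}(i). You trade the dependence on Theorem~\ref{th:compar} for a dependence on Corollary~\ref{co:gammaperpunrel}(i); both rest on completeness of $P$, so neither is strictly more elementary in that regard.
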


\begin{proof}
Let $p\in P$ and suppose that $d$ is a central element of 
the $p$-interval $P[0,p]$. Then $pAp$ is a synaptic algebra 
with unit $p$, $P[0,p]$ is the projection lattice of $pAp$, 
and $d$ commutes with every element of $pAp$.

Applying Theorem \ref{th:compar} to the projections $d$ and 
$p\wedge d\sp{\perp}=p-d$, we find that there is a symmetry 
$s\in A$ and central element $h\in P\cap C(A)$ such that $sdhs
\leq(p-d)h$ and $s(p-d)(1-h)s\leq d(1-h)$. Put $q:=dh\vee sdhs
\in P$. Then $sqs=sdhs\vee dh=q$, so $sq=qs$. Also, $dh\leq d
\leq p$, $sdhs\leq(p-d)h\leq p-d\leq p$, and we have $dh\leq q
\leq p$. Let $s\sb{0}:=sq=qs$ and $t:=s\sb{0}+(p-q)$. Then $s
\sb{0}\sp{\,2}=q$, $t\sp{2}=p$, and $sdhs=s\sb{0}dhs\sb{0}=tdht$.
Also, $ptp=t$, so $t\in pAp$, and it follows that $dt=td$; 
hence $dh=tdht=sdhs\leq (p-d)h\leq d\sp{\perp}$, and it follows 
that $dh=0$.  An analogous argument shows that $(p-d)(1-h)
=0$, and it follows that $d=d+(p-d)(1-h)=p(1-h)$.
\end{proof}

\end{document}